\newif\iffull
\newif\ifproofs
\iffull\documentclass[nonacm]{acmart}
\def\mdseries@tt{m}             
\else\documentclass[sigplan,10pt]{acmart}
\newcommand{\rulename}[1]{\textsc{#1}}
\newcommand{\Rule}[4][]{\ensuremath{\inferrule*[right={(#2)},#1]{#3}{#4}}}
\newcommand{\create}{\texttt{fcreate}\xspace}
\newcommand{\touch}{\texttt{ftouch}\xspace}
\newcommand{\spawn}{\texttt{spawn}\xspace}
\newcommand{\sync}{\texttt{sync}\xspace}
\newcommand{\SYS}{I-Cilk\xspace}
\newcommand{\ioF}{\texttt{io\_future}\xspace}
\newcommand{\cilkWrite}{\texttt{cilk\_write}\xspace}
\newcommand{\cilkRead}{\texttt{cilk\_read}\xspace}
\newcommand{\proxy}{\texttt{proxy}\xspace}
\newcommand{\emailapp}{\texttt{email}\xspace}
\newcommand{\jserver}{\texttt{jserver}\xspace}
\newcommand{\gp}{\gamma}
\newcounter{ccount}
\newenvironment{closeenum}
    {\begin{list}{\arabic{ccount}.}
    {\usecounter{ccount}
     \setlength{\topsep}{0.15\baselineskip}
     \setlength{\leftmargin}{10pt}
     \setlength{\parskip}{0pt}}}
    {\end{list}}
\newenvironment{closeitemize}
    {\begin{list}{$\bullet$}
    {
      \setlength{\topsep}{0.15\baselineskip}
      \setlength{\parskip}{0pt}
      \setlength{\itemindent}{0pt}
      \setlength{\leftmargin}{10pt}}}
    {\end{list}}
\renewcommand{\paragraph}[1]{%
{\normalfont\normalsize\bfseries #1}%
}%
\newcommand{\altdiv}{\mathbin{\mathrlap{\mathrlap{\sim}:}\phantom{\sim}}}
\newcommand{\calcname}{\ensuremath{\lambda^{4}_i}}
\newcommand{\term}[1]{\textbf{\textit{#1}}}
\newcommand{\ctx}{\Gamma}
\theoremstyle{definition}
\newtheorem{defn}{Definition}
\newcommand{\ectx}{\cdot}
\newcommand{\dom}[1]{\mathit{dom}(#1)}
\newcommand{\defeq}{\triangleq}
\newcommand{\priowork}[1]{W_{#1}}
\newcommand{\prioworkof}[2]{\priowork{#2}(#1)}
\newcommand{\psnlt}[1]{\not\prec #1}
\newcommand{\longsp}[2]{S_{#2}(#1)}
\newcommand{\anc}[2]{#1 \sqsupseteq #2}
\newcommand{\sanc}[2]{#1 \sqsupseteq^s #2}
\newcommand{\wanc}[2]{#1 \sqsupseteq^w #2}
\newcommand{\nanc}[2]{#1 \not\sqsupseteq #2}
\newcommand{\neqanc}[2]{#1 \sqsupset #2}
\newcommand{\compwork}[2]{\ensuremath{\hspace{0.25em}\mathop{\mathclap{\nuparrow}\mathclap{\downarrow}}\hspace{0.25em}#2}}
\newcommand{\exec}[1]{\mathit{Exec}(#1)}
\newcommand{\uprio}[2]{\mathit{Prio}_{#1}(#2)}
\newcommand{\strengthen}[2]{\hat{#2}_{#1}}
\newcommand{\uthread}{\vec{u}}
\newcommand{\tscomp}{\cdot}
\newcommand{\sthread}[1]{#1}
\newcommand{\gthread}[3]{\ensuremath{#1 \xhookrightarrow[#2]{} #3}}
\newcommand{\tgraph}[3]{\dagq{\gthread{#1}{#2}{#3}}{\emptyset}{\emptyset}{\emptyset}}
\newcommand{\gthreads}{\mathcal{T}}
\newcommand{\spawns}{E^c}
\newcommand{\syncs}{E^t}
\newcommand{\reads}{E^w}
\newcommand{\dagq}[4]{\ensuremath{(#1, #2, #3, #4)}}
\newcommand{\resptimeof}[1]{\ensuremath{T(#1)}}
\newcommand{\kw}[1]{\mbox{\texttt{#1}}}
\newcommand{\cdparens}[1]{({#1})}
\newcommand{\cdsqbracks}[1]{\kw{[}{#1}\kw{]}}
\newcommand{\cd}[1]{{\lstinline!#1!}}
\newcommand{\kwunit}{\kw{unit}}
\newcommand{\kwnat}{\kw{nat}}
\newcommand{\prodsym}{\ensuremath{\times}}
\newcommand{\kwprod}[2]{\ensuremath{{#1} \prodsym {#2}}}
\newcommand{\sumsym}{\ensuremath{+}}
\newcommand{\kwsum}[2]{\ensuremath{{#1} \sumsym {#2}}}
\newcommand{\arrsym}{\ensuremath{\to}}
\newcommand{\kwarr}[2]{\ensuremath{{#1} \arrsym {#2}}}
\newcommand{\cmdsym}{\ensuremath{\kw{cmd}}}
\newcommand{\kwcmdt}[2]{\ensuremath{#1~\cmdsym \cdsqbracks{#2}}}
\newcommand{\atsym}{\ensuremath{\mathop{\kw{thread}}}}
\newcommand{\kwat}[2]{\ensuremath{#1 \atsym \cdsqbracks{#2}}}
\newcommand{\fasym}{\ensuremath{\forall}}
\newcommand{\kwforall}[3]{\ensuremath{\fasym #1\sim #2. #3}}
\newcommand{\kwreft}[1]{\ensuremath{#1~\kw{ref}}}
\newcommand{\cons}{C}
\newcommand{\cconj}[2]{\ensuremath{#1 \land #2}}
\newcommand{\prio}{\rho}
\newcommand{\prioc}{\overline{\prio}}
\newcommand{\vprio}{\pi}
\newcommand{\ple}[2]{#1 \preceq #2}
\newcommand{\plt}[2]{#1 \prec #2}
\newcommand{\nple}[2]{#1 \not\preceq #2}
\newcommand{\pnlt}[2]{#1 \not\prec #2}
\newcommand{\isprio}{~\kw{prio}}
\newcommand{\worlds}{R}
\newcommand{\prios}{R}
\newcommand{\kwnumeral}[1]{\overline{#1}}
\newcommand{\kwn}{\kwnumeral{n}}
\newcommand{\kwassn}{s}
\newcommand{\kwtriv}{\langle \rangle}
\newcommand{\kwfun}[2]{\ensuremath{\lambda #1{.}#2}}
\newcommand{\kwfix}[3]{\ensuremath{\mathop{\kw{fix}}#1{:}#2\mathbin{\kw{is}}#3}}
\newcommand{\kwifz}[4]{\ensuremath{\kw{ifz}~#1~\{#2; #3. #4\}}}
\newcommand{\kwepair}[2]{\ensuremath{\cdparens{{#1},{#2}}}}
\newcommand{\kweinl}[1]{\ensuremath{\kw{inl}~#1}}
\newcommand{\kweinr}[1]{\ensuremath{\kw{inr}~#1}}
\newcommand{\kwapply}[2]{\ensuremath{{#1}~{#2}}}
\newcommand{\kwfst}[1]{\ensuremath{\kw{fst}~#1}}
\newcommand{\kwsnd}[1]{\ensuremath{\kw{snd}~#1}}
\newcommand{\kwcase}[5]{\ensuremath{\kw{case}~#1~\{#2.#3; #4.#5\}}}
\newcommand{\kwspawn}[3]{\ensuremath{\kw{fcreate}\cdsqbracks{#1; #2}\{#3\}}}
\newcommand{\kwsync}[1]{\ensuremath{\kw{ftouch}~#1}}
\newcommand{\kwtid}[1]{\ensuremath{\kw{tid}\cdsqbracks{#1}}}
\newcommand{\kwwapp}[2]{\ensuremath{#1\cdsqbracks{#2}}}
\newcommand{\kwcmd}[2]{\ensuremath{\kw{cmd}\cdsqbracks{#1}~\{#2\}}}
\newcommand{\kwwlam}[3]{\ensuremath{\Lambda #1 \sim #2. #3}}
\newcommand{\kwlet}[3]{\ensuremath{\kw{let}~#1 = #2~\kw{in}~#3}}
\newcommand{\kwdcl}[4]{\ensuremath{\kw{dcl}~\cdsqbracks{#1}~#2 := #3~\kw{in}~#4}}
\newcommand{\kwref}[1]{\ensuremath{\kw{ref}\cdsqbracks{#1}}}
\newcommand{\kwderef}[1]{\ensuremath{!#1}}
\newcommand{\kwassign}[2]{\ensuremath{#1 := #2}}
\newcommand{\cmd}{m}
\newcommand{\kwbind}[3]{\ensuremath{#2 \leftarrow #1; #3}}
\newcommand{\kwret}[1]{\ensuremath{\kw{ret}~#1}}
\newcommand{\fresh}{~\kw{fresh}}
\newcommand{\graph}{g}
\newcommand{\ethread}{{[]}}
\newcommand{\scompsym}{\oplus}
\newcommand{\scomp}[1]{\scompsym_{#1}}
\newcommand{\egraph}{\emptyset}
\newcommand{\sig}{\Sigma}
\newcommand{\esig}{\cdot}
\newcommand{\emem}{\emptyset}
\newcommand{\hastype}[2]{#1 \mathop{:} #2}
\newcommand{\sigtype}[3]{#1 \mathord{\sim} #2 \mathord{@} #3}
\newcommand{\sigrtype}[2]{#1 \mathord{\sim} #2}
\newcommand{\etyped}[5][\worlds]{#3 \vdash^{#1}_{#2} #4 : #5}
\newcommand{\cmdtyped}[6][\worlds]{#3 \vdash^{#1}_{#2} #4 \altdiv #5 \mathord{@} #6}
\newcommand{\mtyped}[3]{\vdash^{#1} #2 : #3}
\newcommand{\meetc}[3][\worlds]{#2 \vdash^{#1} #3}
\newcommand{\sstyped}[5][\worlds]{\vdash^{#1}_{#2} #3 : #4 \mathbin{@} #5}
\newcommand{\stackaccepts}[6][\worlds]{\vdash^{#1}_{#2} #3 \mathbin{\vartriangleleft :} #4 \leadsto #5 \mathbin{@} #6}
\newcommand{\stackacceptsc}[6][\worlds]{\vdash^{#1}_{#2} #3 \mathbin{\blacktriangleleft :} #4 \leadsto #5 \mathbin{@} #6}
\newcommand{\tpcp}{\otimes}
\newcommand{\tp}{\mu}
\newcommand{\etp}{\emptyset}
\newcommand{\mem}{\sigma}
\newcommand{\memrent}[3]{(#1, #2, #3)}
\newcommand{\mement}[4]{#1 \mapsto \memrent{#2}{#3}{#4}}
\newcommand{\stack}{k}
\newcommand{\estack}{\epsilon}
\newcommand{\ssend}[2]{#1 \mathbin{\vartriangleright} #2}
\newcommand{\sreturn}[2]{#1 \mathbin{\vartriangleleft} #2}
\newcommand{\scsend}[2]{#1 \mathbin{\blacktriangleright} #2}
\newcommand{\screturn}[2]{#1 \mathbin{\blacktriangleleft} #2}
\newcommand{\shole}{\text{\---}}
\newcommand{\scp}[2]{#1; #2}
\newcommand{\stackstate}{K}
\newcommand{\lconfig}[3]{#1 \mid #2 \tpcp #3}
\newcommand{\rconfig}[5]{#1 \tpcp #2 \mid #3 \mid #4 \mid #5}
\newcommand{\gconfig}[4]{#1 \mid #2 \mid #3 \mid #4}
\newcommand{\cthread}[4]{\ensuremath{#1 \xhookrightarrow[#2; #3]{} #4}}
\newcommand{\estep}{\mapsto}
\newcommand{\mstep}{\mathbin{\mathbf{\Rightarrow}}}
\newcommand{\gstep}{\mathbin{\mathbf{\Rightarrow}}}
\newcommand{\secput}[2]{\section{#2}\label{sec:#1}}
\newcommand{\secref}[1]{Section~\ref{sec:#1}}
\newcommand{\figref}[1]{Figure~\ref{fig:#1}}
\newcommand{\thmref}[1]{Theorem~\ref{thm:#1}}
\renewcommand{\eqref}[1]{Equation~(\ref{eq:#1})}
\newif\ifnotes
\newcounter{remark}[section]
\newcommand{\code}[1]{\lstinline!#1!}
\newcommand{\subheading}[1]{\subsubsection*{\bf #1}}
\newcommand{\punt}[1]{}
\newcommand{\ContinueLineNumber}{\lstset{firstnumber=last}}
\newcommand{\mylineskip}{0.30cm}
\newcommand{\langfigsize}{\small}
\newcommand{\langfigsize}{\small}
\begin{document}
\iffull
\sloppy

\title{Responsive Parallelism with Futures and State}         
\iffull
\fi

\author{Stefan K. Muller}
\email{smuller@cs.cmu.edu}
\affiliation{Carnegie Mellon University}
\authornote{The first two authors contributed equally to this work.}

\author{Kyle Singer}
\email{kdsinger@wustl.edu}
\affiliation{Washington University in St. Louis}
\authornotemark[1]

\author{Noah Goldstein}
\email{goldstein.n@wustl.edu}
\affiliation{Washington University in St. Louis}

\author{Umut A. Acar}
\email{umut@cs.cmu.edu}
\affiliation{Carnegie Mellon University}

\author{Kunal Agrawal}
\email{kunal@wustl.edu}
\affiliation{Washington University in St. Louis}

\author{I-Ting Angelina Lee}
\email{angelee@wustl.edu}
\affiliation{Washington University in St. Louis}

\begin{abstract}
Motivated by the increasing shift to multicore computers, 
recent work has developed language support for responsive
parallel applications that mix compute-intensive tasks with
latency-sensitive, usually interactive, tasks.
These developments include calculi that allow assigning priorities to
threads, type systems that can rule out priority inversions, and
accompanying cost models for predicting responsiveness.
These advances share one important limitation: all of
this work assumes purely functional programming.
This is a significant restriction, because many realistic interactive
applications, from games to robots to web servers, use mutable state,
e.g., for communication between threads.

In this paper, we lift the restriction concerning the use of state.
We present {\calcname}, a calculus with implicit parallelism in the
form of prioritized futures and mutable state in the form of
references.
Because both futures and references are first-class values,
{\calcname} programs can exhibit complex dependencies,
including interaction between threads and with the external world
(users, network, etc).
To reason about the responsiveness of \calcname{} programs, we extend
traditional graph-based cost models for parallelism to account for
dependencies created via mutable state, and we present a type system
to outlaw priority inversions that can lead to unbounded blocking.
We show that these techniques are practical by implementing them in
C++ and present an empirical evaluation.

\end{abstract}

\maketitle
\renewcommand{\shortauthors}{S. K. Muller, K. Singer, N. Goldstein, U. A. Acar, K. Agrawal and I. Lee}

\section{Introduction}

Advances of the past decade have brought multicore computers to the
mainstream.  Many computers today, from a credit-card-size Raspberry
Pi with four cores to a rack server, are built with multiple processors
(cores).
These developments have led to increased interest in languages and
techniques for writing parallel programs 
with \term{cooperative threading}.
In cooperative threading, the user expresses parallelism at a high
level and a language-supplied scheduler manages
parallelism at run time.
Many languages, libraries, and systems for cooperative threading have
been developed, including MultiLisp~\cite{halstead84},
NESL~\citep{nesl-94}, dialects of Cilk~\citep{frigolera98,
  DanaherLeLe06, Leiserson10, SingerXuLe19}, OpenMP~\cite{OpenMP18},
Fork/Join Java~\citep{lea00}, dialects of Habanero~\citep{BarikBuCa09,
  CaveZhSh11, is-habanero-14}, TPL~\citep{tpl09},
TBB~\citep{threadingbuildingblocksmanual}, X10~\citep{x10-2005},
parallel ML~\citep{manticore-implicit08, manticore-implicit-11,
  szj-multimlton14, rmab-mm-2016,
  gwraf-hieararchical-2018}, and parallel
Haskell~\citep{keller+2010, kttn-zoo-2014}.

Cooperative threading is well suited for compute-intensive jobs and may
be used to maximize \term{throughput} by finishing a job as quickly as
possible.
But modern applications also include interactive jobs where a
thread may needed to be completed as quickly as possible.
For such interactive applications, the main optimization criterion is
\term{responsiveness} --- how long each thread takes to respond to
a user.
To meet the demands of such applications, the systems community has
developed \term{competitive threading} techniques, which focus on
hiding the latency of blocking operations by multiplexing independent
sequential threads of control~\citep{hjtww-1993, furm-2000, bdmf-2010, ggdmfb-2014}.

Historically, collaborative and competitive threading have been researched
largely separately.  With the mainstream availability of parallel
computers, this separation is now obsolete: many jobs today include
both compute-intensive tasks and interactive tasks.
In fact, applications such as games, browsers, design tools, and all
sorts of interesting interactive systems involve both compute-heavy
tasks (e.g., graphics, AI, statistics calculations) and interaction.
Researchers have therefore started bridging the two worlds.
\citet{mah-responsive-2017, mah-priorities-2018, mwa-fairness-2019}
have developed programming-language techniques
that allow programmers to write cooperatively
threaded programs and also assign priorities to threads, as in
competitive threading.
By using a type system~\cite{mah-priorities-2018} and a cost model,
the authors present techniques for reasoning about the responsiveness
of parallel interactive program.

All of this prior work has made some progress on bridging
collaborative and competitive threading, but it makes an important
assumption: pure functional programming.
%
Specifically, the work does not allow for memory
effects, which are crucial for allowing threads to communicate.
This restriction can be significant, because nearly all realistic
interactive applications rely on mutable state and effects.
As an example, consider a basic server consisting of two entities: a
high-priority event loop handling queries from a user and a
low-priority background thread for optimizing the server's database.
Under Muller et al.'s work, the event loop and background thread can
only communicate by synchronizing, but such a synchronization would
lead to a priority inversion.
If effects were allowed, then the threads could communicate by using a
piece of shared state.

In this paper, we overcome this restriction by developing programming language support for collaborative and competitive threading in the presence of state.
To this end, we consider {\calcname}, a core calculus for an
implicitly parallel language with mutable state in the
form of references.  The parallel portion of the calculus is based on
\term{futures}, which represent asynchronous computations as first-class
values. Futures can be created and synchronized in a very general fashion.
The calculus also allows programmers to assign priorities to futures,
which represent their computational urgency.
Because it combines futures and state, {\calcname} is very expressive
and enables writing conventional nested-parallel programs as well as
those with more complex and dynamic dependencies.
For example, we can parallelize a dynamic-programming algorithm by creating an
initially empty array of future references and then populating the array by
creating futures, which may all be executed in parallel.
Similarly, we can express rich interactive computations, e.g., a
network event can be delegated to a future that sends asynchronous
status updates via a piece of shared state implemented as a reference.
%
 
The high degree of expressiveness in \calcname{} makes it tricky to
reason about the cost due to priority inversions
and non-determinism due to scheduling: because of the presence of state, the computation  may depend on scheduling decisions.
For these reasons, traditional graph-based cost models of
parallel computations~\cite{BlellochGr95,BlellochGr96,SpoonhowerBlHaGi08}
do not apply to programs that mix futures and state.
Such models typically do not take priorities into account and assume
that scheduling does not change the computation graph.
%

%

%

We tackle these challenges by using a combination of algorithmic and formal techniques.
On the algorithmic side, we extend traditional graph-based cost
models to include information about priorities as well as
``happens-before'' edges that capture certain dependencies by reifying
execution-dependent information flow through mutable state.
We then prove that if a computation graph has no priority inversions,
then it can be scheduled by using an extension of greedy scheduling
with priorities to obtain provable bounds on the response time of any
thread (\thmref{gen-brent}).
Priority inversions are not simple to reason about,
so we present a type system for {\calcname} that guarantees that
any well-typed program has no priority inversions.
To establish the soundness of the type system (\thmref{soundness}),
we model the structure of the computation by giving a dynamic
semantics that, in addition to evaluating the program, creates a
computation graph that captures both traditional dependencies between
threads and also non-traditionally captures certain happens-before
dependencies to model the impact of mutable state on the computation.

Because \calcname{} is a formal system, it can in principle be
implemented in many different languages.
For this paper, we chose to implement such a system in the context of
C/C++ because many real-world interactive applications with stringent
performance requirements are written in C/C++.
%
Specifically, we have developed \SYS,
a task parallel platform that supports interactive parallel applications.
\SYS is based on Cilk, a parallel dialect of C/C++.
As with traditional cooperative threading systems, \SYS consists of a
runtime scheduler that dynamically creates threads and maps them onto
available processing cores.  Unlike traditional task-parallel
platforms, however, \SYS supports competitive threading by allowing
the programmer to specify priorities of tasks.
Perhaps somewhat unexpectedly, \SYS also includes an implementation of the
\calcname{} type system to rule out priority inversions.  The type
system is implemented by using inheritance, template programming, and
other features of C++ to encode the restrictions necessary to prevent
priority inversions.
Because C++ is not a safe language, this implementation of the type
system expects the programmer to obey certain conventions.

The thread scheduler of \SYS aims to implement the scheduling
principle that \thmref{gen-brent} relies on.
This is challenging to do efficiently because it requires maintaining
global information within the scheduler that can only be achieved via
frequent synchronizations.
Instead, \SYS approximates optimal scheduling by utilizing a two-level
adaptive scheduling strategy that re-evaluates the scheduling decision
at a fixed scheduling quantum.

We empirically evaluate \SYS using three moderately-sized application
benchmarks (about 1K lines each).  These applications fully utilize
the features of \SYS (including I/O and prioritization of tasks).  We
will dive into one application in detail to illustrate the use of
future references and mutable states.  To demonstrate the efficiency
of \SYS, we compare the response times and execution times of tasks at
different priority levels running on \SYS and on a baseline system
that behaves like \SYS but does not account for priority.  Empirically we
demonstrate that indeed \SYS provides much better response time,
illustrating the efficacy of its scheduler.

In summary, the contributions of this paper include: 
\begin{closeitemize}
\item a cost model for imperative parallel programs that incorporates
  scheduler-dependence through mutable state (\secref{dag});
\item a calculus {\calcname} for imperative parallel programs, equipped with
  a type system that guarantees absence of priority inversions (\secref{lang});
\item \SYS, a C/C++-based task parallel platform that supports interactive
  parallel applications with a type system and scheduler that embody the ideas 
  of the threading model, type system, and cost model of \calcname{}
  (\secref{impl}); and 
\item an empirical evaluation of \SYS using three large case studies written
  with \SYS (\secref{eval}).
\end{closeitemize}

\secput{dag}{A DAG Model for Responsiveness}

\subsection{Preliminaries}

For the purpose of this paper, we will consider programs with first-class
threads that implement futures.
Because our models and scheduling algorithms are largely independent of the
language mechanisms by which threads are created, we will simply refer to
``threads'' here.
%
We assign threads a priority, written~$\prio$, drawn from a partially
ordered set~$\prios$, where ~$\ple{\prio_1}{\prio_2}$ means that
priority~$\prio_1$ is lower than priority~$\prio_2$ or $\prio_1 =
\prio_2$.
We write~$\plt{\prio_1}{\prio_2}$ for the strict partial-order
relation that does not allow for reflexivity.
Note that a total order is a partial order by definition and threads
can be given priorities from a totally ordered set, e.g., integers.

Threads interact with each other in two ways.
First, a thread~$a$ may \term{create}
a thread~$b$, after which the two threads run in parallel.
We call this operation, which returns a handle to~$b$,
``future-create'' or simply {\create}.
Second, a thread~$a$
may wait
for a thread~$b$ to complete before proceeding.
We call this operation ``future-touch'' or {\touch}.
This model subsumes the classic fork-join (spawn-sync) parallelism.




As is traditionally done, we can represent the execution of a parallel
program with a \term{Directed Acyclic Graph} or a \term{DAG}.
A vertex of the DAG represents an operation (without
loss of generality, we will assume that a single vertex represents a uniform
unit of computation time, such as a processing core cycle).
A directed edge from~$u$ to~$u'$, written~$(u, u')$, indicates that the
operation represented by~$u'$ depends on the operation represented by~$u$.
We write~$\anc{u}{u'}$ to mean that~$u$ is an \term{ancestor} of~$u'$,
i.e., there is a (directed) path from~$u$ to~$u'$
(it may be that~$u = u'$).
If it is the case that~$\nanc{u}{u'}$ and~$\nanc{u'}{u}$, then~$u$ and~$u'$
may run in parallel.

A \term{schedule} of a DAG is an assignment of vertices to processing cores 
at each time step during the execution of a parallel program.
Schedules must obey the dependences in the DAG: a vertex may only
be assigned to a core if it is \term{ready}, that is, if all of its
(proper) ancestors have been assigned on prior time steps.
The goal of an efficient scheduler for parallel programs is to construct as
short a schedule as possible.
Constructing an optimal schedule is impossible when, as in many real programs,
the DAG unfolds dynamically during execution and is not known ahead of time
(even a relaxed \term{offline} version of the problem in which the DAG is
known ahead of time is NP-hard~\citep{ullman75}).
However, prior results have shown that schedules obeying
certain scheduling principles are within a constant factor of
optimal length while making decisions based only on information available
online (i.e., they need only know the set of ready vertices at any point in
time).
One such scheduling principle for DAGs with priorities is \term{prompt scheduling}.
At each time step, a prompt schedule assigns to a core
a ready vertex~$u$ such that no currently unassigned vertex is higher-priority
than~$u$
repeatedly until no cores remain or no ready vertices remain.

\subsection{Weak Edges}

Traditionally, cost models for parallel programs assume that
scheduling does not change the DAG of a parallel computation.
This assumption is reasonable for deterministic programs
and provides a nice layer of abstraction
over scheduling --- we can assume that any schedule of a DAG
corresponds to a valid execution.
This fundamental assumption breaks in our setting where threads are
first class values and state can be used to communicate in an
unstructured fashion, leading to determinacy races.

Consider as an example the following program.

\begin{minipage}{0.49\columnwidth}
\begin{lstlisting}[escapeinside=@@]
thread t = NULL;

void g() {} @\label{code1:g}@
void f() {
  t = @\create@(g);@\label{code1:f}@
}
\end{lstlisting}
\end{minipage}
\begin{minipage}{0.5\columnwidth}
\ContinueLineNumber
\begin{lstlisting}[escapeinside=@@]
void main() {
  @\create@(f); @\label{code1:spawn}@
  if (t != NULL) { @\label{code1:cond}@
    @\touch@(t); @\label{code1:join}@
  }
}
\end{lstlisting}
\end{minipage}

%

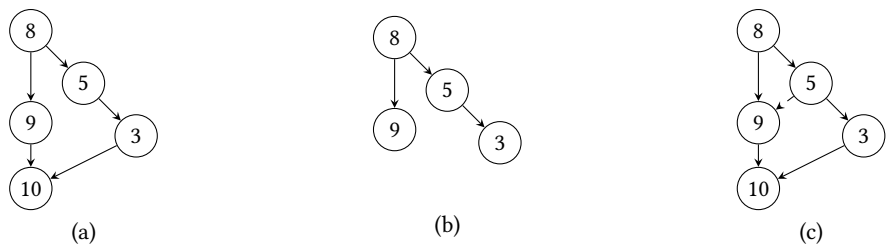
\begin{figure}
  \centering
  \begin{minipage}{0.32\columnwidth}
    \centering
    \begin{tikzpicture}[scale=0.70]
    \tikzstyle{n} = [draw,circle,fill=white,inner sep = 2pt,minimum width=16pt];
    \tikzstyle{edge} = [draw,-stealth];

    \node[n] (spawn) at (0, 0) {\ref{code1:spawn}};
    \node[n] (f) at (1, -1) {\ref{code1:f}};
    \node[n] (g) at (2, -2) {\ref{code1:g}};
    \node[n] (cond) at (0, -1.75) {\ref{code1:cond}};
    \node[n] (join) at (0, -3) {\ref{code1:join}};
    \path[edge] (spawn) -- (f);
    \path[edge] (f) -- (g);
    \path[edge] (spawn) -- (cond);
    \path[edge] (cond) -- (join);
    \path[edge] (g) -- (join);
    \end{tikzpicture}

    (a)
  \end{minipage}%
  \begin{minipage}{0.32\columnwidth}
    \centering
  \begin{tikzpicture}[scale=0.70]
    \tikzstyle{n} = [draw,circle,fill=white,inner sep = 2pt,minimum width=16pt];
    \tikzstyle{edge} = [draw,-stealth];

    \node[n] (spawn) at (0, 0) {\ref{code1:spawn}};
    \node[n] (f) at (1, -1) {\ref{code1:f}};
    \node[n] (g) at (2, -2) {\ref{code1:g}};
    \node[n] (cond) at (0, -1.75) {\ref{code1:cond}};
    \node[minimum width=16pt] (gh) at (0, -3) {};
    \path[edge] (spawn) -- (f);
    \path[edge] (f) -- (g);
    \path[edge] (spawn) -- (cond);
  \end{tikzpicture}

  (b)
  \end{minipage}%
  \begin{minipage}{0.32\columnwidth}
    \centering
  \begin{tikzpicture}[scale=0.70]
    \tikzstyle{n} = [draw,circle,fill=white,inner sep = 2pt,minimum width=16pt];
    \tikzstyle{edge} = [draw,-stealth];
    \tikzstyle{wedge} = [draw,dashed,-stealth];

    \node[n] (spawn) at (0, 0) {\ref{code1:spawn}};
    \node[n] (f) at (1, -1) {\ref{code1:f}};
    \node[n] (g) at (2, -2) {\ref{code1:g}};
    \node[n] (cond) at (0, -1.75) {\ref{code1:cond}};
    \node[n] (join) at (0, -3) {\ref{code1:join}};
    \path[edge] (spawn) -- (f);
    \path[edge] (f) -- (g);
    \path[edge] (spawn) -- (cond);
    \path[wedge] (f) -- (cond);
    \path[edge] (cond) -- (join);
    \path[edge] (g) -- (join);
  \end{tikzpicture}

  (c)
  \end{minipage}
  \caption{DAGs in which the main thread reads a valid
    thread handle (a) and NULL (b), and a DAG with a weak edge representing
    a read of a valid thread handle (c). Vertices are labeled with the line of
    code they represent and threads are arranged in columns.}
  \label{fig:dags1}
\iffull\else \vspace{-5mm} \fi
\end{figure}

The DAG for this program, in particular whether there is an edge from~\cd{g}
to~\cd{main} representing the {\touch} on line \ref{code1:join},
depends crucially on whether~\cd{f} performs the {\create} and
assignment to~\cd{t} before~\cd{main} reads~\cd{t}, that is, on whether the
conditional on line~\ref{code1:cond} returns true or false.
In fact, depending on the outcome of the condition, this program gives
rise to one of two DAGs, both shown in Figure~\ref{fig:dags1}: one in which the
conditional is true and one in which it is false.
Applying the traditional separation between DAGs and schedules, given
DAG (a), the scheduler could execute the vertices in the following
order: 8, 9, 5, 3, 10.
But under this schedule, the read on line~\ref{code1:cond} should
read~\cd{NULL}, and thus line 10 should not be executed at all!
Similarly, the scheduler could execute DAG (b) in the order
8, 5, 3, 9, in which case the read would read a valid thread handle.


The issue is that each DAG is valid for only certain schedules but not all.
%
%
To encode this information, we extend the traditional notion
of DAGs with a new type of edge we call a \term{weak edge}.
A weak edge from~$u$ to~$u'$ records the fact that the given DAG makes sense
only for schedules where~$u$ is executed before~$u'$.
We call such a schedule \term{admissible}.
%
%
%
As an example, DAG (c) of Figure~\ref{fig:dags1} includes a weak edge
(shown as a dotted line) from 5 to 9.
The schedule 8, 5, 9, 3, 10 is an admissible schedule of DAG (c),
but 8, 9, 5, 3, 10 is not.

At first sight, the reader may feel that we can replace a weak edge
with an ordinary (\term{strong}) edge.
This is not quite correct, as strong and weak edges are treated differently
in determining whether a schedule is prompt.
Recall that a schedule is prompt if it assigns ready vertices in priority
order.
In the presence of weak edges, we define a vertex~$u$ to be ready when all of
its \term{strong parents}, that is, vertices~$u'$ such that there exists
a strong edge~$(u', u)$, have executed.

Consider again DAG (c) from Figure~\ref{fig:dags1}, but now suppose we
wish to construct a prompt schedule on two cores.
By the above definition, a prompt schedule must execute vertex 8, followed
by 5 and 9 in parallel, followed by 3, followed by 10.
This is, in fact, the only prompt schedule of DAG (c), but it is not admissible
because it does not execute 5 before 9.
We thus conclude that there are no prompt admissible schedules of DAG
(c) on two cores and DAG (b) is the only valid DAG for a
two-core execution of this program (as DAG (b) has no weak edges,
any prompt schedule of it is admissible).
If we were to replace the weak edge~$(5, 9)$ with a strong edge,
there would be a prompt schedule of DAG (c) that executes 8, followed by
5, followed by 9 and 3, followed by 10.
As always, a strong edge forces vertex~9 to wait for vertex~5, but this
violates the intended semantics of the program as a simple read operation
should not have to block waiting for a write.

In summary, strong edges determine what schedules are valid for a given DAG,
while weak edges determine whether a DAG is valid for a given schedule.
That is, weak edges internalize information about schedules into the DAG,
breaking what would otherwise be a circular dependency between constructing
a DAG and constructing a schedule of it.

We extend the notions of ancestors and paths to distinguish between
weak and strong edges.
We say that a path is \term{strong} if it contains no weak edges.
If~$\anc{u}{u'}$ and all paths from~$u$ to~$u'$ are strong, then we say that~$u$
is a \term{strong ancestor} of~$u'$ and write~$\sanc{u}{u'}$.
On the other hand, if there exists a weak path (i.e., a path with a weak edge)
from~$u$ to~$u'$, we say~$u$ is a \term{weak ancestor} of~$u'$
and write~$\wanc{u}{u'}$.
We will continue to drop the
superscript if it not important whether~$u$ is a weak or strong ancestor.

In formal notation, we represent a DAG~$\graph$ as a
quadruple $\dagq{\gthreads}{\spawns}{\syncs}{\reads}$.
The first component of the quadruple is a mapping from thread symbols,
for which we will use the metavariables~$a$,~$b$ and variants, to a
pair of that thread's priority and the vertices it comprises.
We use the notation~$\uthread$ for a sequence of
vertices~$u_1 \tscomp{} \dots \tscomp{u_n}$ making up a thread,
and write~$\ethread$ when~$n = 0$.
Such a sequence implies that~$g$ contains the
edges~$(u_1, u_2), \dots (u_{n - 1}, u_n)$.
We will refer to such edges as \term{continuation edges}.
For a thread with priority~$\prio$ and vertices~$\uthread$, we
write~$\gthread{a}{\prio}{\uthread} \in \gthreads$.
We write~$\uprio{\graph}{u}$ to refer to the priority of the thread containing
vertex~$u$ in~$\graph$.

The remaining three components are sets of edges.
The set~$\spawns$ contains \term{{\create} edges}~$(u, a)$ indicating
that vertex~$u$ creates thread~$a$.
It is shorthand for~$(u, s)$ where~$s$ is the first vertex of~$a$.
The set~$\syncs$ contains \term{{\touch} edges}~$(a, u)$ indicating
that vertex~$u$ touches thread~$a$.
It is shorthand for~$(t, u)$ where~$t$ is the last vertex of~$a$.
Finally, the set~$\reads$ contains weak edges.

\subsection{Well-Formedness and Response Time}
Our goal is to bound the \term{response time}~$\resptimeof{a}$ of a thread~$a$
in a DAG.
If~$\gthread{a}{\prio}{s \tscomp{} \dots \tscomp{} t} \in \graph$,
for a particular schedule of~$\graph$,
we define~$\resptimeof{a}$ to be the number of time steps between when~$s$
becomes ready and when~$t$ is executed, inclusive.

Intuitively, in a well-designed program and an appropriate schedule,
if thread~$a$ has priority~$\prio$, its response time should depend only
on parts of the graph that may happen in parallel with~$a$ (i.e. are not
ancestors or descendants of~$a$) and have priority not less than~$\prio$.
This is known as the
\term{competitor work}~$\prioworkof{\compwork{}{a}}{\psnlt{\prio}}$
of a thread~$a$ and is defined formally:
\[
\prioworkof{\compwork{}{a}}{\psnlt{\prio}} \defeq
|\{u \in \graph \mid \nanc{u}{s} \land \nanc{t}{u} \land
\pnlt{\uprio{\graph}{u}}{\prio}\}|\]
We must also define a metric corresponding to the critical path of~$a$.
We will call this metric the $a$-span, because it corresponds to the
traditional notion of span in a parallel cost DAG, but we will defer
its formal definition for now, because
we will need other definitions first.

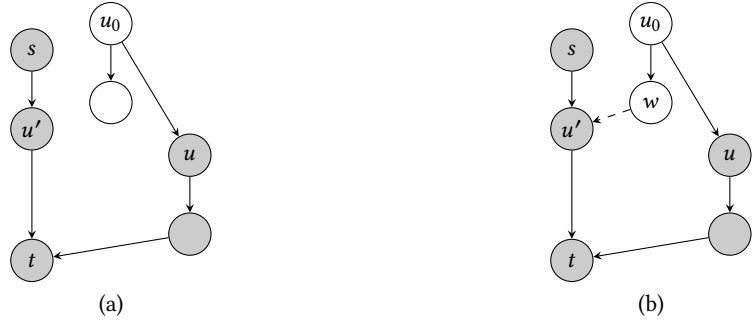
\begin{figure}
\begin{minipage}{0.47\columnwidth}
\centering
\begin{tikzpicture}[scale=0.70]
  \tikzstyle{n} = [draw,circle,fill=white,inner sep = 2pt,minimum width=16pt];
  \tikzstyle{pn} = [draw,circle,fill=black!20,inner sep = 2pt,minimum width=16pt];
  \tikzstyle{edge} = [draw,-stealth];
  \tikzstyle{wedge} = [draw,dashed,-stealth];

  \node[pn] (s) at (0, 1) {$s$};
  \node[n] (u0) at (1.5, 1.5) {$u_0$};
  \node[n] (u0') at (1.5, 0) {};
  \node[pn] (u) at (3, -1) {$u$};
  \node[pn] (u') at (3, -2.5) {};
  \node[pn] (s') at (0, -0.5) {$u'$};
  \node[pn] (t) at (0, -3) {$t$};

  \path[edge] (s) -- (s');
  \path[edge] (s') -- (t);
  \path[edge] (u0) -- (u);
  \path[edge] (u) -- (u');
  \path[edge] (u') -- (t);
  \path[edge] (u0) -- (u0');
\end{tikzpicture}

(a)
\end{minipage}%
\begin{minipage}{0.48\columnwidth}
\centering
\begin{tikzpicture}[scale=0.70]
  \tikzstyle{n} = [draw,circle,fill=white,inner sep = 2pt,minimum width=16pt];
  \tikzstyle{pn} = [draw,circle,fill=black!20,inner sep = 2pt,minimum width=16pt];
  \tikzstyle{edge} = [draw,-stealth];
  \tikzstyle{wedge} = [draw,dashed,-stealth];

  \node[pn] (s) at (0, 1) {$s$};
  \node[n] (u0) at (1.5, 1.5) {$u_0$};
  \node[n] (u0') at (1.5, 0) {$w$};
  \node[pn] (u) at (3, -1) {$u$};
  \node[pn] (u') at (3, -2.5) {};
  \node[pn] (s') at (0, -0.5) {$u'$};
  \node[pn] (t) at (0, -3) {$t$};

  \path[edge] (s) -- (s');
  \path[edge] (s') -- (t);
  \path[edge] (u0) -- (u);
  \path[edge] (u) -- (u');
  \path[edge] (u') -- (t);
  \path[edge] (u0) -- (u0');
  \path[wedge] (u0') -- (s');
\end{tikzpicture}

(b)
\end{minipage}
\caption{(a) a DAG that is not well-formed because of the strong path from~$u_0$
to~$t$ (b) a well-formed version of the DAG with a weak path from~$u_0$ to~$t$.}
\label{fig:well-formed}
\iffull\else \vspace{-5mm} \fi
\end{figure}

Bounding the response time of~$a$ in terms of only the competitor work
and~$a$-span is not possible for all DAGs: if~$a$ depends on lower-priority
code along its critical path, this code must be included in the response time
of~$a$.
This situation essentially corresponds to the well-known idea of a
\term{priority inversion}.
Our response time bound guarantees efficient scheduling of any DAG that
is \term{well-formed}, that is,
free of this type of priority inversion.
Well-formedness must, at a minimum, require that no {\touch} edges go from
lower- to higher-priority threads.
This requirement is formalized in the first bullet point of
Definition~\ref{def:well-formed}.
There is another, more subtle, way in which priority inversions could arise.
Consider the DAG in Figure~\ref{fig:well-formed}(a), in which shaded
vertices represent high-priority work.
Although no {\touch} edges violate the first requirement of the definition,
it would be possible, in a prompt schedule of the DAG, for high-priority
vertex~$t$ to be delayed indefinitely waiting for low-priority vertex~$u_0$
to execute due to the chain of strong dependences through~$u$.
Note that the problem is not that~$u$ depends on a lower-priority vertex---as
this is a {\create} edge, such a dependence is allowed.
The issue is that~$u$'s thread is then {\touch}ed by~$t$ with no other
dependence relation between~$u_0$ and~$t$.
The second bullet point of Definition~\ref{def:well-formed} requires that,
in such a situation, this dependence be mitigated by, e.g., the weak edge
added in Figure~\ref{fig:well-formed}(b).

We note that this second requirement actually places no additional restrictions
on programs.
DAGs such as the one in Figure~\ref{fig:well-formed}(a) could not arise from
real programs because in order for~$t$ to {\touch}~$u$'s thread, it must have
access to its thread handle, which will have been returned by the
{\create} call represented by~$u_0$.
This thread handle must be propagated to~$t$ through a chain of dependences
including at least one dependence through memory effects.
There must therefore be a weak path from~$u_0$ to~$t$, as in
DAG~\ref{fig:well-formed}(b), which reflects a write ($w$) of the thread handle
followed by a read ($u'$).

Definition~\ref{def:well-formed} formalizes the above intuitions.

\begin{defn}\label{def:well-formed} A DAG~$\graph = \dagq{\gthreads}{\spawns}{\syncs}{\reads}$ is
  \term{well-formed} if for all
  threads~$\gthread{a}{\prio}{s \tscomp{} \dots \tscomp{} t} \in \gthreads$,
  \begin{closeitemize}
  \item For all~$u \in \graph$, if~$\sanc{u}{t}$ and~$\nanc{u}{s}$,
    then~$\ple{\prio}{\uprio{\graph}{u}}$.
  \item For all strong edges~$(u_0, u)$ such that~$\sanc{u}{t}$
    and~$\nanc{u_0}{s}$
    and~$\nple{\uprio{\graph}{u}}{\uprio{\graph}{u_0}}$,
    there exists~$u'$ such that~$\wanc{u_0}{\sanc{u'}{t}}$
    and~$\nanc{u}{u'}$.
  \end{closeitemize}
\end{defn}

To a first approximation, we may define the~$a$-span of a
thread~$s \tscomp{} \dots \tscomp{} t$ as the longest path ending at~$t$
consisting of non-ancestors of~$s$ (i.e., the longest chain of vertices that
might delay the completion of~$a$).
In the presence of weak edges, however, the definition is not so simple.
Consider the DAG on the left of Figure~\ref{fig:strengthen}, in which
shaded nodes are high-priority.
Under the above definition, the~$a$-span includes low-priority node~$u_0$,
but in any admissible schedule,~$u'$ runs after~$u_0$, so~$u_0$ is not
actually on the critical path.
We thus transform the DAG into one, the \term{strengthening}, that
reflects this implicit dependence.

\begin{figure}
\begin{minipage}{0.47\columnwidth}
\centering
\begin{tikzpicture}[scale=0.70]
  \tikzstyle{n} = [draw,circle,fill=white,inner sep = 2pt,minimum width=16pt];
  \tikzstyle{pn} = [draw,circle,fill=black!20,inner sep = 2pt,minimum width=16pt];
  \tikzstyle{edge} = [draw,-stealth];
  \tikzstyle{wedge} = [draw,dashed,-stealth];

  \node[pn] (s) at (0, 1) {$s$};
  \node[n] (u0) at (1.5, 1.5) {$u_0$};
  \node[n] (u0') at (1.5, 0) {};
  \node[pn] (u) at (3, -1) {$u$};
  \node[pn] (u') at (3, -2.5) {};
  \node[pn] (s') at (0, -0.5) {$u'$};
  \node[pn] (t) at (0, -3) {$t$};

  \path[edge] (s) -- (s');
  \path[edge] (s') -- (t);
  \path[edge] (u0) -- (u);
  \path[edge] (u) -- (u');
  \path[edge] (u') -- (t);
  \path[edge] (u0) -- (u0');
  \path[wedge] (u0') -- (s');
\end{tikzpicture}

(a)
\end{minipage}%
\begin{minipage}{0.48\columnwidth}
\centering
\begin{tikzpicture}[scale=0.70]
  \tikzstyle{n} = [draw,circle,fill=white,inner sep = 2pt,minimum width=16pt];
  \tikzstyle{pn} = [draw,circle,fill=black!20,inner sep = 2pt,minimum width=16pt];
  \tikzstyle{edge} = [draw,-stealth];
  \tikzstyle{wedge} = [draw,dashed,-stealth];

  \node[pn] (s) at (0, 1) {$s$};
  \node[n] (u0) at (1.5, 1.5) {$u_0$};
  \node[n] (u0') at (1.5, 0) {};
  \node[pn] (u) at (3, -1) {$u$};
  \node[pn] (u') at (3, -2.5) {};
  \node[pn] (s') at (0, -0.5) {$u'$};
  \node[pn] (t) at (0, -3) {$t$};

  \path[edge] (s) -- (s');
  \path[edge] (s') -- (t);
  \path[edge] (u) -- (u');
  \path[edge] (u') -- (t);
  \path[edge] (u0) -- (u0');
  \path[edge] (s') -- (u);
\end{tikzpicture}

(b)
\end{minipage}
\caption{(a) a DAG; (b) its strengthening}
\label{fig:strengthen}
\iffull\else \vspace{-5mm} \fi
\end{figure}
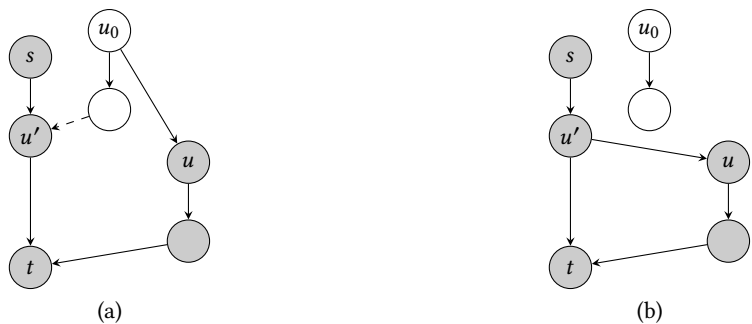

\begin{defn}
  Let~$\graph$ be a well-formed DAG with a thread
  $\gthread{a}{\prio_a}{s \tscomp{} \dots \tscomp{} t}$.
  We derive the $a$-\term{strengthening},
  written~$\strengthen{a}{\graph}$, from~$\graph$ as follows.
  For every strong edge~$(u_0, u)$ such that~$\sanc{u}{t}$
  and~$\nple{\uprio{\graph}{u}}{\uprio{\graph}{u_0}}$
  and~$\nanc{u}{s}$,
  \begin{closeitemize}
  \item Remove the edge~$(u_0, u)$.
  \item Let~$u' \in \graph$ such that~$\sanc{u'}{t}$ and~$\wanc{u_0}{u'}$.
    If~$\nanc{u'}{s}$, then add the edge~$(u', u)$ in place of the
    weak edge between~$u_0$ and~$u$.
  \end{closeitemize}
\end{defn}

The strengthening of the example DAG is shown in the right side of the
figure.
For a thread~$\gthread{a}{\prio}{s \tscomp{} \dots \tscomp{} t} \in \graph$,
we define the~$a$-span, written~$\longsp{\compwork{}{a}}{a}$, to be the
length of the longest path in~$\strengthen{a}{\graph}$ ending at~$t$ consisting
only of vertices that are not ancestors of~$s$.
More generally, we write~$\longsp{V}{a}$ to be the
length of the longest path in~$\strengthen{a}{\graph}$ ending at~$t$ consisting
only of vertices in~$V$.
Intuitively, the~$a$-span corresponds to the critical path of~$a$ because, in
a valid and admissible schedule,
it is possible that all of the vertices along this path may need to be executed
sequentially while~$a$ is being executed.
%
\ifproofs
Lemma~\ref{lem:strengthen-ready} formalizes this intuition.

\begin{lemma}\label{lem:strengthen-ready}
  If, at any step of an admissible schedule, a vertex~$u$ is ready
  in~$\strengthen{a}{\graph}$, then it is ready in~$\graph$.
\end{lemma}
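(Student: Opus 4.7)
My plan is to show, for every strong parent $v$ of $u$ in $\graph$, that $v$ has been executed by the step at which $u$ becomes ready in $\strengthen{a}{\graph}$. I would split on whether the edge $(v, u)$ survives the strengthening.

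If $(v, u)$ is preserved, then $v$ remains a strong parent of $u$ in $\strengthen{a}{\graph}$, so the readiness hypothesis directly yields that $v$ has executed. If $(v, u)$ is removed, then by the strengthening's definition $v$ plays the role of $u_0$, so the edge satisfies $\sanc{u}{t}$, $\nanc{u}{s}$, and $\nple{\uprio{\graph}{u}}{\uprio{\graph}{v}}$. Well-formedness of $\graph$ (Definition~\ref{def:well-formed}, second bullet) then supplies a vertex $u'$ with $\wanc{v}{u'}$, $\sanc{u'}{t}$, and $\nanc{u}{u'}$.

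The principal sub-case is $\nanc{u'}{s}$. Then the strengthening installs $(u', u)$ as a strong edge in $\strengthen{a}{\graph}$, so readiness of $u$ there forces $u'$ to have been executed. Since admissibility requires the schedule to respect the weak path witnessing $\wanc{v}{u'}$, every vertex on that path, including $v$, must have executed before $u'$. Hence $v$ has executed, as desired.

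The delicate sub-case is $\anc{u'}{s}$, where the strengthening adds no replacement edge and so there is no explicit record of the dependence in $\strengthen{a}{\graph}$. Here I would refine the choice of witness: along a strong path from $u'$ to $t$ that passes through $s$, I can push the witness to a vertex $u''$ lying strictly after $s$ on thread $a$'s continuation. Such a $u''$ is still a strong ancestor of $t$ and, by composing a weak path $v \sqsupseteq^w u'$ with a strong path $u' \sqsupseteq^s u''$, a weak descendant of $v$; I then have to verify that $u''$ can be chosen to satisfy $\nanc{u}{u''}$, reducing to the principal sub-case. The main obstacle is precisely this reduction: since $u$ may sit in a side thread whose strong path to $t$ merges into $a$'s continuation at some $s_k$, I need to show a suitable $u''$ between $s$ and $s_k$ always exists. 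I expect this to hinge on a careful use of the well-formedness condition together with the assumption $\sanc{u}{t}$, and it is the technical crux I would focus on.
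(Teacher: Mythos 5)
Your main line of argument coincides with the paper's: when a strong parent $v$ of $u$ loses its edge $(v,u)$ in the strengthening, the replacement edge $(u',u)$ plus readiness of $u$ in~$\strengthen{a}{\graph}$ forces $u'$ to have executed, and admissibility along the weak path witnessing $\wanc{v}{u'}$ then forces $v$ to have executed. The paper's proof is exactly this (phrased as a contradiction rather than ranging over all strong parents), and it silently assumes that every removed edge is in fact replaced.

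The genuine gap is the sub-case you flag yourself and leave open: when the well-formedness witness satisfies $\anc{u'}{s}$, the strengthening removes $(v,u)$ \emph{without} adding a replacement, so readiness of $u$ in~$\strengthen{a}{\graph}$ tells you nothing about $u'$ or $v$. Your proposed repair --- relocating the witness to some $u''$ strictly past $s$ on thread $a$'s continuation while preserving $\nanc{u}{u''}$ --- is not obviously available from \defref{well-formed} (which only guarantees \emph{some} witness $u'$, with no control over where it sits relative to $s$), and it is also unnecessary. The case closes directly once you use the context in which the lemma is actually invoked: in the proof of \thmref{gen-brent} the lemma is applied only at steps after $s$ has become ready, so every proper ancestor of $s$ has already executed. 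Then $\anc{u'}{s}$ immediately gives that $u'$ has executed, and admissibility along the weak path from $v$ to $u'$ gives that $v$ has executed --- the same closing move as your principal sub-case, with no replacement edge needed. So the fix is to add (or observe) the hypothesis that $s$ is ready at the step in question and dispatch the $\anc{u'}{s}$ case that way, rather than pursuing the witness-relocation argument, whose crux you correctly identify as unresolved.
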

\begin{proof}
  For this to not hold, it would need to be the case that there exists
  a strong edge~$(u_0, u) \in \graph$ such that~$u_0$ has not been executed
  and~$(u_0, u) \not\in \strengthen{a}{\graph}$.
  Then~$(u_0, u)$ has been replaced
  in~$\strengthen{a}{\graph}$ with an edge~$(u', u)$ where,
  by assumption,~$u'$
  has been executed (because~$u$ is ready in~$\strengthen{a}{\graph}$).
  By construction,~$\wanc{u_0}{u'}$.
  Because the schedule is admissible,~$u_0$ has executed and~$u$ is ready
  in~$\graph$.
\end{proof}

Lemma~\ref{lem:strong-path} is another auxiliary result about
$a$-strengthening that will be necessary to prove the bound.
It shows that, for a thread~$a$ ending at~$t$,
\emph{any strong path} ending at~$t$ in the~$a$-strengthening
starts at a high-priority vertex
(in a well-formed but not strengthened DAG, this is true for strong ancestors
of~$t$---the strengthening essentially removes weak ancestors of~$t$).

\begin{lemma}\label{lem:strong-path}
  Let~$\gthread{a}{\prio}{s \tscomp{} \dots \tscomp{} t} \in \graph$.
  If there exists a strong path from~$u$ to~$t$ in~$\strengthen{a}{\graph}$
  and~$\nanc{u}{s}$ in~$\strengthen{a}{\graph}$,
  then~$\ple{\prio}{\uprio{\graph}{u}}$.
\end{lemma}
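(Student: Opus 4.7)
The plan is to induct on the length of the strong path from $u$ to $t$ in $\strengthen{a}{\graph}$. The base case is immediate: when the path has length zero, $u = t$, which is the last vertex of thread $a$, so $\uprio{\graph}{u} = \prio$ and $\ple{\prio}{\prio}$ by reflexivity of the partial order.

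For the inductive step, it is convenient to first establish two bridging facts between ancestry in $\graph$ and in $\strengthen{a}{\graph}$. First, strengthening only removes edges $(u_0,u)$ whose target $u$ satisfies $\nanc{u}{s}$; hence every edge along a path to $s$ in $\graph$ survives in $\strengthen{a}{\graph}$, which gives that $\nanc{u}{s}$ in $\strengthen{a}{\graph}$ implies $\nanc{u}{s}$ in $\graph$. Second, every edge added by strengthening has the form $(u',u)$ with $\sanc{u'}{t}$ in $\graph$ by construction; splicing this in at the first added edge of any strong path transforms a strong path to $t$ in $\strengthen{a}{\graph}$ into one in $\graph$, so $\sanc{v}{t}$ in $\strengthen{a}{\graph}$ implies $\sanc{v}{t}$ in $\graph$.

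Now consider a strong path $u \to v_1 \to \dots \to t$ of length $n > 0$ in $\strengthen{a}{\graph}$. The edge $(u,v_1)$ witnesses a path $u$-to-$s$ whenever there is one $v_1$-to-$s$, so $\nanc{v_1}{s}$ in $\strengthen{a}{\graph}$ follows from $\nanc{u}{s}$, and the inductive hypothesis applied to $v_1$ yields $\ple{\prio}{\uprio{\graph}{v_1}}$. It remains to analyze the edge $(u,v_1)$ by cases. If $(u,v_1)$ is inherited from $\graph$, then the two bridging facts supply $\sanc{v_1}{t}$ and $\nanc{v_1}{s}$ in $\graph$; since strengthening did not delete the edge, its removal condition must fail on the priority clause, so $\ple{\uprio{\graph}{v_1}}{\uprio{\graph}{u}}$, and transitivity with the inductive hypothesis concludes. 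If instead $(u,v_1)$ was introduced by the strengthening rule, then $u$ plays the role of $u'$ in that rule's premise, so $\sanc{u}{t}$ in $\graph$ and $\nanc{u}{s}$ in $\graph$ both hold, and the first bullet of well-formedness (Definition~\ref{def:well-formed}) delivers $\ple{\prio}{\uprio{\graph}{u}}$ directly.

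The main obstacle is precisely this last case analysis on the first edge: one has to keep track of what strengthening can and cannot do to ancestry relations while simultaneously invoking the two different bullets of well-formedness depending on the edge's origin. The two bridging facts about $\sanc{\cdot}{t}$ and $\nanc{\cdot}{s}$ are what prevent the argument from becoming circular --- without them, one would be tempted to appeal to strengthening properties that themselves presuppose what we are trying to show.
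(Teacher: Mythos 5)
Your proof is correct and lands on the same two underlying facts as the paper's, but it is organized quite differently and does strictly more work. The paper's proof is a two-sentence contrapositive: if $\nple{\prio}{\uprio{\graph}{u}}$, some strong edge $(u_1,u_2)$ on the path must satisfy $\nple{\uprio{\graph}{u_2}}{\uprio{\graph}{u_1}}$, and ``all such edges are removed'' by the strengthening. That argument silently ignores the edges that strengthening \emph{adds}, which are not removed and whose endpoints need not stand in any edgewise priority relation. Your induction along the path handles exactly this omission: for an inherited edge you use the removal criterion to rule out a priority jump and chain by transitivity, while for an added edge you observe that its source satisfies the premises of the first bullet of Definition~\ref{def:well-formed} ($\sanc{u}{t}$ and $\nanc{u}{s}$ in $\graph$) and so already has priority at least $\prio$, without appeal to the inductive hypothesis. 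The two bridging facts relating ancestry in $\graph$ and in $\strengthen{a}{\graph}$ are genuinely needed for this and are absent from the paper's proof; what the paper's version buys in exchange is brevity, since it never has to track how strengthening perturbs the $\nanc{\cdot}{s}$ and $\sanc{\cdot}{t}$ relations.

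One caveat, which applies equally to the paper's own proof: in your inherited-edge case you must establish $\sanc{v_1}{t}$ in $\graph$ in the sense used by the removal condition, and your splicing argument only produces \emph{one} strong path from $v_1$ to $t$ in $\graph$, whereas the paper's stated definition of strong ancestry requires \emph{all} paths to be strong. Under the reading of $\sqsupseteq^s$ as ``there exists a strong path'' (the reading under which the strengthening construction and both proofs actually go through), your step is fine; under the literal ``all paths are strong'' reading, neither your proof nor the paper's closes this gap. This is a defect of the surrounding definitions rather than of your argument, so I would not count it against you.
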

\begin{proof}
  If~$\nple{\prio}{\uprio{\graph}{u}}$,
  there must be some strong edge~$(u_1, u_2)$ on the path from~$u$ to~$t$
  in~$\strengthen{a}{\graph}$ such
  that~$\nple{\uprio{\graph}{u_2}}{\uprio{\graph}{u_1}}$.
  But all such edges are removed in the construction
  of~$\strengthen{a}{\graph}$.
\end{proof}


\fi

Theorem~\ref{thm:gen-brent} gives a bound on the response times of threads
in admissible, prompt schedules of well-formed DAGs.
The intuitive explanation of the bound also gives a sketch of the proof:
at every time step, such a schedule is doing one of two types of work:
(1) executing~$P$ vertices of competitor work or
(2) executing all available vertices on the~$a$-span.
The amount of work of type (1) to be done is bounded by the competitor work
divided by~$P$.
Work of type (2) can only be done during~$\longsp{\compwork{}{a}}{a}$
time steps, during which~$P-1$ of the~$P$ cores might be idle.
Adding these amounts of work together gives the bound on response time.

\begin{theorem}\label{thm:gen-brent}
  Let~$\graph$ be a well-formed DAG
  and let~$a$ be a thread of priority~$\prio$ in~$\graph$.
  For any admissible prompt schedule on~$P$ processing cores,
  \[
  \resptimeof{a} \leq 
  \frac{1}{P}\left[
  \prioworkof{\compwork{}{a}}{\psnlt{\prio}} +
      (P-1)\longsp{\compwork{}{a}}{a}\right]
      \]
\end{theorem}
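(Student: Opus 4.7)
The plan is to adapt the classical Brent--Graham counting argument to our setting with priorities and weak edges, using Lemmas~\ref{lem:strengthen-ready} and~\ref{lem:strong-path} as the key tools. I would partition the $\resptimeof{a}$ steps of the response interval into \emph{saturated} steps, where all $P$ cores execute a competitor-work vertex, and \emph{slack} steps, where at least one core does not. Letting $n_i$ denote the number of cores executing a competitor vertex at step $i$, the total competitor work performed during the interval is at most $\prioworkof{\compwork{}{a}}{\psnlt{\prio}}$, so $\sum_i n_i \leq \prioworkof{\compwork{}{a}}{\psnlt{\prio}}$.

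The heart of the proof is to show that the number of slack steps is at most $\longsp{\compwork{}{a}}{a}$. I would argue this via a potential-function argument: let $\pi_i$ be the length of a longest path in $\strengthen{a}{\graph}$ ending at $t$ whose vertices are unexecuted non-ancestors of $s$ at the start of step $i$, so $\pi_0 = \longsp{\compwork{}{a}}{a}$. Fix any path $\sigma_i$ of length $\pi_i$ and let $v_i$ be its source. No strong parent of $v_i$ in $\strengthen{a}{\graph}$ can still be an unexecuted non-ancestor of $s$---else $\sigma_i$ could be extended, contradicting maximality---while every strong parent that is an ancestor of $s$ was executed before the response interval began. Hence $v_i$ is ready in $\strengthen{a}{\graph}$, so by Lemma~\ref{lem:strengthen-ready} it is ready in $\graph$, and by Lemma~\ref{lem:strong-path} applied to the strong path from $v_i$ to $t$ we have $\ple{\prio}{\uprio{\graph}{v_i}}$. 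Since step $i$ is slack, some core is not executing a competitor vertex, so promptness forces the ready, priority-at-least-$\prio$ vertex $v_i$ to be scheduled; running the same argument at every source of a length-$\pi_i$ path gives $\pi_{i+1} < \pi_i$, and therefore at most $\longsp{\compwork{}{a}}{a}$ slack steps.

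Finally, I would observe that the vertex $v_i$ scheduled at each slack step is itself a competitor vertex: it has priority at least $\prio$, it is a non-ancestor of $s$, and being an ancestor of $t$ it is not a descendant of $t$. Hence at least one of the $P$ cores performs competitor work at every slack step, so the non-competitor contribution at a slack step is at most $P-1$. Summing over the response interval,
\[ P \cdot \resptimeof{a} \;=\; \sum_i n_i + \sum_i (P - n_i) \;\leq\; \prioworkof{\compwork{}{a}}{\psnlt{\prio}} + (P-1)\longsp{\compwork{}{a}}{a}, \]
and dividing by $P$ yields the stated bound. The main obstacle is the readiness-and-execution claim for $v_i$: it threads the structure of $\strengthen{a}{\graph}$ through both auxiliary lemmas and the definition of a prompt schedule, and it depends on carefully distinguishing strong from weak parents so that maximality of $\sigma_i$ really does force $v_i$ to be a source. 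A secondary subtlety is observing that $v_i$ is itself a competitor, which is what tightens the bound from a naive $\prioworkof{\compwork{}{a}}{\psnlt{\prio}}/P + \longsp{\compwork{}{a}}{a}$ to the stated $\prioworkof{\compwork{}{a}}{\psnlt{\prio}}/P + (P-1)\longsp{\compwork{}{a}}{a}/P$.
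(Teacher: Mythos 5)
Your proposal is correct and follows essentially the same route as the paper's proof: the same split of each step's $P$ core-slots into competitor work versus slack, the same use of Lemma~\ref{lem:strengthen-ready} and Lemma~\ref{lem:strong-path} to show that the source of a maximal unexecuted strong path in $\strengthen{a}{\graph}$ is ready and of priority at least $\prio$, and the same promptness argument that this forces the $a$-span potential to drop at every slack step, giving at most $\longsp{\compwork{}{a}}{a}$ such steps with at most $P-1$ wasted slots each. The paper phrases the bookkeeping with token buckets rather than your explicit sum $\sum_i n_i + \sum_i(P-n_i)$, but the content is identical.
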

\ifproofs
\begin{proof}
  Let~$s$ and~$t$ be the first and last vertices of~$a$, respectively.
  Consider the portion of the schedule from the step in which~$s$ becomes ready
  (exclusive) to the step in which~$t$ is executed (inclusive).
  For each core at each step, place a token in one of two buckets.
  If the core is working on a vertex of a priority not less than~$\prio$,
  place a token in the ``high'' bucket;
  otherwise, place a token in the ``low'' bucket.
  Because~$P$ tokens are placed per step,
  we have $\resptimeof{a} = \frac{1}{P}(B_l + B_h)$,
  where~$B_l$ and $B_h$
  are the number of tokens in the respective buckets
  after~$t$ is executed.

  Each token in~$B_h$ corresponds to work done at priority~$\not\prec\prio$, and
  thus $B_h \leq \prioworkof{\compwork{}{a}}{\psnlt{\prio}}$, so
  $\resptimeof{a} \leq 
  \frac{\prioworkof{\graph}{\psnlt{\prio}}}{P}
  + \frac{B_l}{P}$.
  We now need only bound~$B_l$ by~$(P-1)\cdot \longsp{\compwork{}{a}}{a}$.

  Let step 0 be the first step after~$s$ is ready, and let~$\exec{j}$ be the
  set of vertices that have been executed at the start of step~$j$.
  Consider a step~$j$ in which a token is added to~$B_l$.
  Consider a strong path in~$\strengthen{a}{\graph}$ ending at~$t$ consisting of
  vertices in~$\graph \setminus \exec{j}$.
  Any such path, by definition, begins at a vertex that is ready
  in~$\strengthen{a}{\graph}$ at the start of step~$j$.
  By Lemma~\ref{lem:strengthen-ready}, this vertex is also ready in~$\graph$,
  and by Lemma~\ref{lem:strong-path},
  it has priority greater than or equal to~$\prio$.
  By promptness, this vertex must therefore be executed during step~$j$.
  Thus, the length of the path decreases by 1 and so
  $\longsp{\graph \setminus \exec{j+1}}{a} =
  \longsp{\graph \setminus \exec{j}}{a} - 1$.
  Therefore, the maximum number of steps in which~$B_l$ increases
  is~$\longsp{\graph \setminus \exec{0}}{a}$, and
  at most~$P-1$ cores are idle while~$a$ is active,
  so~$B_l \leq (P-1)\cdot \longsp{\graph \setminus \exec{0}}{a}$.
  Because~$\exec{0}$ contains all ancestors of~$s$,
  any path excluding vertices
  in~$\exec{0}$ also excludes all ancestors of~$s$, and therefore
  $\longsp{\graph \setminus \exec{0}}{a} \leq
  \longsp{\compwork{}{a}}{t}$, so
  $B_l \leq (P-1)\cdot \longsp{\compwork{}{a}}{a}$.
\end{proof}
\else
\fi

\secput{lang}{Type System for Responsiveness}
We describe a type system that can be used to ensure that a
program results in a well-formed cost graph, by way of a core
calculus {\calcname}, which extends~$\lambda^4$~\citep{mah-priorities-2018},
with the key addition of mutable references (memory locations).
Section~\ref{sec:calculus} presents the calculus and type system.
Section~\ref{sec:cost} equips {\calcname} with a \term{cost semantics} that
evaluates a {\calcname} program to produce a cost graph of the form
described in Section~\ref{sec:dag}.
We prove that, for a well-typed program,
the resulting graph is well-formed, and thus the program is free of priority
inversions.

\subsection{The {\calcname} Core Calculus}\label{sec:calculus}
The syntax of {\calcname} is shown in Figure~\ref{fig:syn},
in A-normal form (for most expressions, any subexpressions that are not under
binders are values; computations can be sequenced using let-bindings).
We differentiate between
\term{expressions}, language constructs that do not depend on
the state of memory or threads, and \term{commands}, which do.
%
%
%
%
%

\begin{figure}
\langfigsize
\[
\begin{array}{l l r l }

  \mathit{Constraints} & \cons & \bnfdef &
  \ple{\prio}{\prio} \bnfalt  \cconj{\cons}{\cons}
  \\

  \mathit{Types} & \tau & \bnfdef &
  \kwunit
  \bnfalt  \kwnat
  \bnfalt  \kwarr{\tau}{\tau}
  \bnfalt  \kwprod{\tau}{\tau}
  \bnfalt  \kwsum{\tau}{\tau}\\
  & & \bnfalt & \kwreft{\tau}
  \bnfalt  \kwat{\tau}{\prio}
  \bnfalt  \kwcmdt{\tau}{\prio}\\

  \mathit{Values} & v & \bnfdef &
  x
  \bnfalt \kwtriv
  \bnfalt  \kwnumeral{n}
  \bnfalt  \kwfun{x}{e}
  \bnfalt  \kwepair{v}{v}
  \bnfalt  \kweinl{v}
  \bnfalt  \kweinr{v}\\
  & & \bnfalt & \kwref{\kwassn}
  \bnfalt  \kwtid{a}
  \bnfalt  \kwcmd{\prio}{\cmd}
  \\

  \mathit{Expressions} & e & \bnfdef &
  v
  \bnfalt  \kwlet{x}{e}{e}
  \bnfalt  \kwifz{v}{e}{x}{e}\\
  & & \bnfalt & \kwapply{v}{v}
  \bnfalt  \kwfst{v}
  \bnfalt  \kwsnd{v}\\
  & & \bnfalt & \kwcase{v}{x}{e}{y}{e}
  \bnfalt  \kwfix{x}{\tau}{e}
  \\

  \mathit{Commands} & \cmd & \bnfdef &
  \kwspawn{\prio}{\tau}{\cmd}
  \bnfalt  \kwsync{e}\\
  & & \bnfalt & \kwdcl{\tau}{\kwassn}{e}{\cmd}\\
  & & \bnfalt & \kwderef{e}
  \bnfalt  \kwassign{e}{e}
  \bnfalt  \kwbind{e}{x}{\cmd}
  \bnfalt  \kwret{e}
\end{array}
\]
\iffull\else \vspace{-2mm} \fi
\caption{Syntax of {\calcname}}
\label{fig:syn}
\iffull\else \vspace{-3mm} \fi
\end{figure}


The non-standard types of~{\calcname} are
a type~$\kwreft{\tau}$ indicating references
to memory locations holding values of type~$\tau$;
a type~$\kwat{\tau}{\prio}$ representing
handles to running threads of type~$\tau$ at priority~$\prio$ and
a type~$\kwcmdt{\tau}{\prio}$ representing encapsulated commands which run
at priority~$\prio$ and have return type~$\tau$.
Priorities are drawn from a given fixed (partially ordered) set~$\worlds$.
%

%
%
The novel values of the calculus are
references~$\kwref{\kwassn}$, which allow access to
a memory location~$\kwassn$;
thread handles~$\kwtid{a}$, which reference a running
thread referred to by~$a$; and $\kwcmd{\prio}{\cmd}$, which encapsulates
the command~$\cmd$
at priority~$\prio$.
%
%
The expression layer is otherwise standard.

Commands include operations to manipulate threads and state,
including commands to create and touch threads\footnote{while the
syntax for \create and \touch is drawn from the fact that our threading model
is based on \textbf{f}utures, we simply use the term ``threads'' to refer to
running asynchronous threads of control and ``thread handles'' to refer to
the first-class values that refer to threads. This avoids some terminological
confusion frequently associated with futures.}
%
%
The command~$\kwdcl{\tau}{\kwassn}{e}{\cmd}$ declares a new mutable
memory location~$\kwassn$, initialized with the expression~$e$, in the
scope of~$\cmd$.
The read command~$\kwderef{e}$ evaluates~$e$ to a reference~$\kwref{\kwassn}$
and returns the current contents of~$\kwassn$.
The assignment command~$\kwassign{e_1}{e_2}$ evaluates~$e_1$ to a
reference~$\kwref{\kwassn}$ and writes the value of~$e_2$ to~$\kwassn$;
the command also returns the new value.

Commands are sequenced with an operator~$\kwbind{e}{x}{\cmd}$, which
evaluates~$e$ to an encapsulated command, executes the command, binds its
return value to~$x$ and continues as~$\cmd$.
Expressions may be embedded into the command layer using the
command~$\kwret{e}$ which evaluates~$e$ and returns its value.
These commands may be thought of as the monadic bind and return
operators, respectively.


  \iffull
\begin{figure}
\langfigsize
    \centering
  \def \MathparLineskip {\lineskip=\mylineskip}
\begin{mathpar}

\Rule{var}
{
  \strut
}
{
  \etyped{\sig}{\ctx, \hastype{x}{\tau}}{x}{\tau}
}
\and
\Rule{\kwunit I}
{
\strut
}
{
\etyped{\sig}{\ctx}{\kwtriv}{\kwunit}
}
\and
\Rule{Tid}
      {
        \strut
}
{
  \etyped{\sig, \sigtype{a}{\tau}{\prio'}}{\ctx}{\kwtid{a}}
        {\kwat{\tau}{\prio'}}
}
\and
\Rule{Ref}
     {
       \strut
     }
     {
       \etyped{\sig,\sigrtype{\kwassn}{\tau}}{\ctx}{\kwref{\kwassn}}
              {\kwreft{\tau}}
     }
\and
\Rule{\kwnat I}
{
\strut
}
{
\etyped{\sig}{\ctx}{\kwn}{\kwnat}
}
\and
\Rule{\kwnat E}
{
  \etyped{\sig}{\ctx}{v}{\kwnat}\\
  \etyped{\sig}{\ctx}{e_1}{\tau}\\
  \etyped{\sig}{\ctx,\hastype{x}{\kwnat}}{e_2}{\tau}
}
{
  \etyped{\sig}{\ctx}{\kwifz{v}{e_1}{x}{e_2}}{\tau}
}
\and
\Rule{\arrsym I}
{
  \etyped{\sig}
        {\ctx, \hastype{x}{\tau_1}}{e}{\tau_2}
}
{
  \etyped{\sig}{\ctx}{\kwfun{x}{e}}{\kwarr{\tau_1}{\tau_2}}
}
\and
\Rule{\arrsym E}
{
\etyped{\sig}{\ctx}{v_1}{\kwarr{\tau_1}{\tau_2}}\\
\etyped{\sig}{\ctx}{v_2}{\tau_1}
}
{
\etyped{\sig}{\ctx}{\kwapply{v_1}{v_2}}{\tau_2}
}
\and
\Rule{\prodsym $I$}
{
  \etyped{\sig}{\ctx}{v_1}{\tau_1}\\
  \etyped{\sig}{\ctx}{v_2}{\tau_2}
}
{
  \etyped{\sig}{\ctx}{\kwepair{v_1}{v_2}}{\kwprod{\tau_1}{\tau_2}}
}
\and
\Rule{\prodsym $E_1$}
{
  \etyped{\sig}{\ctx}{v}{\kwprod{\tau_1}{\tau_2}}
}
{
  \etyped{\sig}{\ctx}{\kwfst{v}}{\tau_1}
}
\and
\Rule{\prodsym $E_2$}
{
  \etyped{\sig}{\ctx}{v}{\kwprod{\tau_1}{\tau_2}}
}
{
  \etyped{\sig}{\ctx}{\kwsnd{v}}{\tau_2}
}
\and
\Rule{\sumsym $I_1$}
{
  \etyped{\sig}{\ctx}{v}{\tau_1}
}
{
  \etyped{\sig}{\ctx}{\kweinl{v}}{\kwsum{\tau_1}{\tau_2}}
}
\and
\Rule{\sumsym $I_2$}
{
  \etyped{\sig}{\ctx}{v}{\tau_2}
}
{
  \etyped{\sig}{\ctx}{\kweinr{v}}{\kwsum{\tau_1}{\tau_2}}
}
\and
\Rule{\sumsym E}
{
  \etyped{\sig}{\ctx}{v}{\kwsum{\tau_1}{\tau_2}}\\
  \etyped{\sig}{\ctx,\hastype{x}{\tau_1}}{e_1}{\tau'}\\
  \etyped{\sig}{\ctx,\hastype{y}{\tau_2}}{e_2}{\tau'}
}
{
  \etyped{\sig}{\ctx}{\kwcase{v}{x}{e_1}{y}{e_2}}{\tau'}
}
\and
\Rule{\cmdsym I}
{
  \cmdtyped{\sig}{\ctx}{\cmd}{\tau}{\prio}
}
{
  \etyped{\sig}{\ctx}{\kwcmd{\prio}{\cmd}}{\kwcmdt{\tau}{\prio}}
}
\and
\Rule{\fasym I}
{
  \etyped{\sig}{\ctx, \vprio \isprio, \cons}{e}{\tau}
}
{
  \etyped{\sig}{\ctx}{\kwwlam{\vprio}{\cons}{e}}{\kwforall{\vprio}{\cons}{\tau}}
}
\and
\Rule{\fasym E}
{
  \etyped{\sig}{\ctx}{v}{\kwforall{\vprio}{\cons}{\tau}}\\
  \meetc{\ctx}{[\prio'/\vprio]\cons}
}
{
  \etyped{\sig}{\ctx}{\kwwapp{v}{\prio'}}{[\prio'/\vprio]\tau}
}
\and
\Rule{fix}
{
  \etyped{\sig}{\ctx, \hastype{x}{\tau}}{e}{\tau}
}
{
  \etyped{\sig}{\ctx}{\kwfix{x}{\tau}{e}}{\tau}
}
\and
\Rule{let}
{
  \etyped{\sig}{\ctx}{e_1}{\tau_1}\\
  \etyped{\sig}{\ctx, \hastype{x}{\tau_1}}{e_2}{\tau_2}
}
{
  \etyped{\sig}{\ctx}{\kwlet{x}{e_1}{e_2}}{\tau_2}
}
\end{mathpar}
\caption{Expression typing rules.}
\label{fig:statics}
\end{figure}
\fi

\begin{figure}
\langfigsize
  \centering
  \begin{mathpar}
    \iffull
\Rule{Bind}
{
  \etyped{\sig}{\ctx}{e}{\kwcmdt{\tau}{\prio}}\\\\
  \cmdtyped{\sig}{\ctx, \hastype{x}{\tau}}{\cmd}{\tau'}{\prio}
}
{
  \cmdtyped{\sig}{\ctx}{\kwbind{e}{x}{\cmd}}{\tau'}{\prio}
}
\fi
\and
\Rule{Create}
{
  \cmdtyped{\sig}{\ctx}{\cmd}{\tau}{\prio'}
}
{
  \cmdtyped{\sig}{\ctx}{\kwspawn{\prio'}{\tau}{\cmd}}{\kwat{\tau}{\prio'}}{\prio}
}
\and
\Rule{Touch}
{
  \etyped{\sig}{\ctx}{e}{\kwat{\tau}{\prio'}}\\
  \meetc[\worlds]{\ctx}{\ple{\prio}{\prio'}}
}
{
  \cmdtyped{\sig}{\ctx}{\kwsync{e}}{\tau}{\prio}
}
\and
\Rule{Dcl}
     {
       \etyped{\sig}{\ctx}{e}{\tau}\\
       \cmdtyped{\sig, \sigrtype{\kwassn}{\tau}}{\ctx}{\cmd}{\tau'}{\prio}
     }
     {
       \cmdtyped{\sig}{\ctx}{\kwdcl{\tau}{\kwassn}{e}{\cmd}}{\tau'}{\prio}
     }
\and
\Rule{Get}
     {
       \etyped{\sig}{\ctx}{e}{\kwreft{\tau}}
     }
     {
       \cmdtyped{\sig}{\ctx}{\kwderef{e}}{\tau}{\prio}
     }
\and
\Rule{Set}
     {
       \etyped{\sig}{\ctx}{e_1}{\kwreft{\tau}}\\
       \etyped{\sig}{\ctx}{e_2}{\tau}
     }
     {
       \cmdtyped{\sig}{\ctx}{\kwassign{e_1}{e_2}}{\tau}{\prio}
     }
     \iffull
\and
\Rule{Ret}
{
  \etyped{\sig}{\ctx}{e}{\tau}
}
{
  \cmdtyped{\sig}{\ctx}{\kwret{e}}{\tau}{\prio}
}
\fi
%
\end{mathpar}
\iffull\else \vspace{-2mm} \fi
\caption{\iffull Command \else Selected command \fi typing rules.}
\label{fig:thread-statics}
\iffull\else \vspace{-3mm} \fi
\end{figure}

\iffull
\begin{figure}
\langfigsize
  \centering
  \begin{mathpar}
\Rule{hyp}
{
  \strut
}
{
  \meetc{\ctx, \ple{\prio_1}{\prio_2}}{\ple{\prio_1}{\prio_2}}
}
\and
\Rule{assume}
{
  \plt{\prioc_1}{\prioc_2} \in \worlds
}
{
  \meetc[\worlds]{\ctx}{\ple{\prioc_1}{\prioc_2}}
}
\and
\Rule{refl}
{
  \strut
}
{
  \meetc{\ctx}{\ple{\prio}{\prio}}
}
\and
\Rule{trans}
{
  \meetc{\ctx}{\ple{\prio_1}{\prio_2}}\\
  \meetc{\ctx}{\ple{\prio_2}{\prio_3}}
}
{
  \meetc{\ctx}{\ple{\prio_1}{\prio_3}}
}
\and
\Rule{conj}
{
  \meetc{\ctx}{\cons_1}\\
  \meetc{\ctx}{\cons_2}
}
{
  \meetc{\ctx}{\cconj{\cons_1}{\cons_2}}
}
\end{mathpar}
\caption{Constraint entailment}
\label{fig:constraint-statics}
\end{figure}
\fi

\iffull Figures~\ref{fig:statics}--\ref{fig:constraint-statics} show
the typing rules for {\calcname}.
The key concepts of the type system are those that deal with the types of
references.
\else Figure~\ref{fig:thread-statics} shows the key rules of the type
system for {\calcname}, namely the rules for threads and references.
Due to space constraints, we omit more standard features and present them
in the full version~\citep{arxiv}.
\fi

\iffull
Figure~\ref{fig:statics} gives the rules for
the typing judgment for expressions,~$\etyped[\worlds]{\sig}{\ctx}{e}{\tau}$.
Expressions do not manipulate memory, but the types of memory locations are
still necessary in order to type references~$\kwref{\kwassn}$.
We track these types in the signature~$\sig$, along with the return types and
priorities of thread symbols.
The elements of~$\sig$ are of two forms: the entry~$\sigtype{a}{\tau}{\prio}$
indicates that thread~$a$ runs at priority~$\prio$ and has return type~$\tau$,
and the entry~$\sigrtype{\kwassn}{\tau}$ indicates that memory
location~$\kwassn$ holds values of type~$\tau$.

The typing judgment is also parameterized by a partially-ordered set~$\worlds$
of priorities and a typing context~$\ctx$.
The context~$\ctx$, as usual, contains premises of the form~$\hastype{x}{\tau}$,
indicating that the variable~$x$ has type~$\tau$.
In addition,~$\ctx$ contains premises of the form~$\vprio\isprio$, introducing
the priority variable~$\vprio$, as well as priority constraints~$\cons$.
Under these assumptions, the judgment~$\etyped[\worlds]{\sig}{\ctx}{e}{\tau}$
indicates that expression~$e$ has type~$\tau$.
Note that expressions do not have priorities: because they do not create or
touch threads, expressions may be freely evaluated at any priority.

The non-standard expression typing rules are those that deal with
thread handles, references, priorities and commands.
A thread handle~$\kwtid{a}$ has type~$\kwat{\tau}{\prio'}$
if thread~$a$ has priority~$\prio'$ and return type~$\tau$
(rule~\rulename{Tid}).
A reference~$\kwref{\kwassn}$ has type~$\kwref{\tau}$ if~$\kwassn$ holds
values of type~$\tau$ (rule~\rulename{Ref}).
These are the only rules that inspect the signature~$\sig$; it is otherwise
simply threaded through the typing derivations.
The introduction form for priority-polymorphic
types,~$\kwwlam{\vprio}{\cons}{e}$
introduces the premise~$\vprio\isprio$ and the constraint~$\cons$ into the
context in order to type~$e$ (rule~\rulename{$\forall$ I}).
The corresponding elimination form applies such an abstraction to a
priority~$\prio'$ (rule~\rulename{$\forall$ E}).
This rule requires that~$\prio'$ meets the constraints imposed by~$\cons$
(the judgment~$\meetc{\ctx}{\cons}$ will be described below) and
substitutes~$\prio'$ for~$\vprio$ in the polymorphic type.
Finally, an encapsulated command~$\kwcmd{\prio}{\cmd}$ has
type~$\kwcmdt{\tau}{\prio}$ if~$\cmd$ is well-typed with priority~$\prio$
and type~$\tau$ under the command typing judgment, which is described below.
\fi

\iffull
The rules for typing commands are given in Figure~\ref{fig:thread-statics}.
These rules define the
judgment~$\cmdtyped[\worlds]{\sig}{\ctx}{\cmd}{\tau}{\prio}$.
\else
The command typing rules in the figure define the
judgment~$\cmdtyped[\worlds]{\sig}{\ctx}{\cmd}{\tau}{\prio}$.
The signature~$\sig$ tracks type information for threads and memory locations,
as well as the priorities of threads.
%
%
The typing judgment is also parameterized by a partially-ordered set~$\worlds$
of priorities and a typing context~$\ctx$.
The context~$\ctx$, as usual, contains premises of the form~$\hastype{x}{\tau}$,
indicating that the variable~$x$ has type~$\tau$.
\fi
%
%
In addition to the return type~$\tau$ of the command, the typing judgment
indicates that the command may run at priority~$\prio$.
The rules~\rulename{Create} and~\rulename{Touch} contain notable features
relating to priorities.
In particular,~\rulename{Touch} requires that~$e$ be a handle to a
thread running at priority~$\prio'$ and that this priority be higher than
or equal to the priority~$\prio$ of the current thread.
It is this requirement that prevents priority inversion.
The~\rulename{Create} rule requires that a command run in a new thread at
priority~$\prio'$ indeed be able to run at priority~$\prio'$.
Note, however, that the {\create} command itself may run at any priority; the
language does not enforce any priority relationship between a thread and its
parent.
We refer the reader to the presentation
of~$\lambda^4$~\citep{mah-priorities-2018} for a more thorough description of
these rules.

We describe the rules for allocating and accessing references in more detail.
Rule~\rulename{Dcl} types the initialization expression~$e$ at type~$\tau$
and introduces a new location~$\kwassn$ in typing~$\cmd$.
Rule~\rulename{Get} requires that its subexpression have reference type.
Rule~\rulename{Set} requires that~$e_1$ have type~$\kwreft{\tau}$ and
that~$e_2$ have type~$\tau$.
Note that this requires memory locations to have a consistent type throughout
execution.
The return type of an assignment to a~$\tau$ reference is~$\tau$.
All of these commands may type at any priority as state operations and
priorities are orthogonal.

The judgment~$\meetc[\worlds]{\ctx}{\cons}$ indicates that the premises
contained in~$\ctx$ entail the priority constraints~$\cons$.
\iffull
The rules for this judgment appear in Figure~\ref{fig:constraint-statics}.
A constraint is entailed by~$\ctx$ if it appears in~$\ctx$ (\rulename{hyp}),
is directly implied by the ordering on~$\worlds$ (\rulename{assume}),
follows from reflexivity (\rulename{refl}) or transitivity (\rulename{trans})
or is the conjunction of two entailed constraints (\rulename{conj}).
\else
The rules (which follow standard rules of logic) are found in the
full version~\citep{arxiv}
\fi

\ifproofs
The typing rules admit several forms of substitution.
%
%
Lemma~\ref{lem:subst} shows that all of these substitutions preserve typing.
The proof largely follows from that of the equivalent substitution lemma
  for~$\lambda^4$~\citep{mah-priorities-2018}.
\begin{lemma}[Substitution]\label{lem:subst}\strut\\[-1em]
  \begin{closeenum}
\item If $\etyped{\sig}{\ctx, \hastype{x}{\tau}}{e_1}{\tau'}$ and
  $\etyped{\sig}{\ctx}{e_2}{\tau}$, then
  $\etyped{\sig}{\ctx}{[e_2/x]e_1}{\tau'}$.
\item If $\cmdtyped{\sig}{\ctx, \hastype{x}{\tau}}{\cmd}{\tau'}{\prio}$ and
  $\etyped{\sig}{\ctx}{e}{\tau}$, then
  $\cmdtyped{\sig}{\ctx}{[e/x]\cmd}{\tau'}{\prio}$.
\item If $\etyped{\sig}{\ctx, \vprio \isprio}{e}{\tau}$,
  then $\etyped{\sig}{[\prio/\vprio]\ctx}{[\prio/\vprio]e}{[\prio/\vprio]\tau}$.
\item If $\cmdtyped{\sig}{\ctx, \vprio \isprio}{\cmd}{\tau}{\prio}$,
  then $\cmdtyped{\sig}{[\prio'/\vprio]\ctx}{[\prio'/\vprio]\cmd}
  {[\prio'/\vprio]\tau}{[\prio'/\vprio]\prio}$.
\item If $\meetc{\ctx, \vprio \isprio}{\cons}$, then
  $\meetc{[\prio/\vprio]\ctx}{[\prio/\vprio]\cons}$.
\end{closeenum}
\end{lemma}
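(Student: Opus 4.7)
The plan is to prove all five parts by simultaneous structural induction on the relevant derivations, grouping Parts 1 and 2 together (mutual induction on the typing derivations of $e_1$ and $\cmd$), and Parts 3, 4, and 5 together (mutual induction on the typing and constraint-entailment derivations). The mutual structure is forced by the syntactic interleaving of expressions and commands: the rule \rulename{\cmdsym I} embeds a command inside an expression, while rules like \rulename{Bind}, \rulename{Touch}, \rulename{Get}, \rulename{Set}, \rulename{Dcl}, and \rulename{Ret} embed expressions inside commands.

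For Parts 1 and 2, most cases are routine: apply the induction hypothesis to each typed subterm, invoke the corresponding typing rule, and observe that the signature $\sig$ is unchanged by substitution of a term variable. The only nontrivial expression case is \rulename{var}: if $e_1 = x$ then $[e_2/x]x = e_2$ and the typing assumption gives the result directly; if $e_1$ is a different variable, the context $\ctx, \hastype{x}{\tau}$ can be weakened back to $\ctx$ because that other variable's typing premise lies in $\ctx$. Binding forms (\rulename{let}, \rulename{\arrsym I}, \rulename{fix}, \rulename{\sumsym E}, \rulename{\kwnat E}, \rulename{Bind}, \rulename{Dcl}) require the standard alpha-renaming convention to avoid capture. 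The boundary cases \rulename{\cmdsym I} and \rulename{Ret} are where the two inductions cross-call each other, which is why they must be proved simultaneously.

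For Parts 3, 4, and 5, the induction again follows the structure of the derivation. Since $\prio$ is substituted for the priority variable $\vprio$, I push the substitution through the typing context by observing that $[\prio/\vprio](\ctx, \hastype{x}{\tau'}) = [\prio/\vprio]\ctx, \hastype{x}{[\prio/\vprio]\tau'}$ and similarly for priority and constraint premises. The critical cases are \rulename{Touch} and \rulename{\fasym E}, which demand a constraint-entailment premise; here Part 5 is invoked to propagate the substitution through~$\meetc{\ctx}{\cons}$. For Part 5 itself, the cases \rulename{refl}, \rulename{trans}, and \rulename{conj} follow immediately from the induction hypotheses; the \rulename{hyp} case uses the fact that $[\prio/\vprio]$ applied to a premise of $\ctx$ yields a premise of $[\prio/\vprio]\ctx$; and the \rulename{assume} case is preserved because the ordering $\worlds$ is fixed and $\prioc_1, \prioc_2$ there are concrete (non-variable) priorities, hence unaffected by the substitution.

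The main obstacle is not technical complexity but bookkeeping: ensuring the mutual induction between expression typing and command typing for Parts 1-2 is set up so that the inductive hypotheses are invoked on strictly smaller subderivations, and ensuring that the priority-substitution argument for Parts 3-4 carries the updated context, types, \emph{and} command priority (note the~$[\prio'/\vprio]\prio$ in the conclusion of Part 4) consistently through every rule that mentions a priority. As noted in the lemma statement, the argument is essentially that of the corresponding substitution lemma for~$\lambda^4$ in~\citet{mah-priorities-2018}, with the only genuinely new cases being those for references (\rulename{Ref}, \rulename{Dcl}, \rulename{Get}, \rulename{Set}), all of which are straightforward because their typing premises do not interact with priorities or term variables in novel ways.
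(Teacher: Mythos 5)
Your proposal is correct and matches the paper's approach: the paper gives no detailed proof, stating only that the argument follows that of the corresponding substitution lemma for $\lambda^4$~\citep{mah-priorities-2018}, which is exactly the standard mutual structural induction you spell out. Your identification of the reference rules as the only genuinely new (and routine) cases is consistent with the paper's framing.
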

\fi

\subsection{Cost Semantics and Time Bounds}\label{sec:cost}
In this section, we equip~{\calcname} with a small-step dynamic semantics
that tracks two notions of cost.
First,
in a straightforward sense, the number of steps taken by the semantics to
execute a program gives an abstract measure of execution time.
Second, we equip the dynamic semantics to construct a cost graph for the
program that captures the parallelism opportunities in the execution, and also
uses weak edges to record happens-before relations as
described in Section~\ref{sec:dag}.
%



We present the dynamic semantics of~{\calcname}
as a stack-based parallel abstract machine that serves
as a rough model of the program's execution time on realistic
parallel hardware.
%
%
A {\em stack}~$\stack$ consists of a sequence of {\em stack frames}~$f$
(or is the empty stack,~$\estack$).
Each frame is a command or expression with a hole, written~$\shole$,
to be filled with the result of the next frame.
The stack thus represents the continuation of the current computation.
At each step, each thread active in the
machine is either executing a command or expression from the
top of the stack (``popping'') or returning a resulting value to the
stack (``pushing'').
These states are represented by~$\ssend{\stack}{e}$ and~$\sreturn{\stack}{v}$,
respectively, for expressions (and similar syntax with filled triangles for
commands).
The syntax of stack frames, stacks, and stack states is given in
Figure~\ref{fig:stack-syn}.

\begin{figure}
\footnotesize
  \[
  \begin{array}{l l r l}
    \textit{Frames} & f & \bnfdef & \kwlet{x}{\shole}{e} \bnfalt
    \kwbind{\shole}{x}{\cmd} \bnfalt
    \kwsync{\shole}\\
     & & \bnfalt &
    \kwdcl{\tau}{\kwassn}{\shole}{\cmd} \bnfalt
    \kwderef{\shole} \bnfalt
    \kwassign{\shole}{e}\\
    & & \bnfalt &
    \kwassign{v}{\shole} \bnfalt
    \kwret{\shole}\\

    \textit{Stacks} & \stack & \bnfdef & \estack \bnfalt \scp{\stack}{f}\\

    \textit{States} & \stackstate & \bnfdef & \ssend{\stack}{e} \bnfalt
    \sreturn{\stack}{v} \bnfalt
    \scsend{\stack}{\cmd} \bnfalt
    \screturn{\stack}{\kwret{v}}
  \end{array}
  \]
\iffull\else \vspace{-3mm} \fi
\caption{Stack, frame and state syntax.}
\label{fig:stack-syn}
\iffull\else \vspace{-3mm} \fi
\end{figure}

\begin{figure*}
\langfigsize
    \centering
  \def \MathparLineskip {\lineskip=\mylineskip}
  \begin{mathpar}
    \iffull
\Rule{D-Bind1}
     {u\fresh}
     {
       \lconfig{\mem}{\tp}{\cthread{a}{\prio}{\sig}
         {\scsend{\stack}{\kwbind{e}{x}{\cmd_2}}}}
       \mstep
       \rconfig{\cthread{a}{\prio}{\sig}
         {\ssend{\scp{\stack}{\kwbind{\shole}{x}{\cmd_2}}}{e}}}
               {\etp}{\esig}{\mem}{\tgraph{a}{\prio}{\sthread{u}}}
     }
\and
\Rule{D-Bind2}
     {u\fresh}
     {
       \phantom{\mstep}\lconfig{\mem}{\tp}{\cthread{a}{\prio}{\sig}
         {\sreturn{\scp{\stack}{\kwbind{\shole}{x}{\cmd_2}}}{\kwcmd{\prio}{\cmd}}}}
       \mstep
       \rconfig{\cthread{a}{\prio}{\sig}
         {\scsend{\scp{\stack}{\kwbind{\shole}{x}{\cmd_2}}}{\cmd}}}
               {\etp}{\esig}{\mem}{\tgraph{a}{\prio}{\sthread{u}}}
     }
\and
\Rule{D-Bind3}
     {u\fresh}
     {
       \lconfig{\mem}{\tp}{\cthread{a}{\prio}{\sig}
         {\screturn{\scp{\stack}{\kwbind{\shole}{x}{\cmd_2}}}{\kwret{v}}}}
       \mstep
       \rconfig{\cthread{a}{\prio}{\sig}
         {\scsend{\stack}{[v/x]\cmd_2}}}
               {\etp}{\esig}{\mem}
               {\tgraph{a}{\prio}{\sthread{u}}}
     }
     \and
     \fi

\iffull
\Rule{D-Create}
     {u\fresh\\
       b\fresh}
     {
       \phantom{\mstep} \lconfig{\mem}{\tp}
               {\cthread{a}{\prio}{\sig}
                 {\scsend{\stack}{\kwspawn{\prio'}{\tau}{\cmd}}}}
       \\\mstep
       \rconfig{\cthread{a}{\prio}{\sig, \sigtype{b}{\tau}{\prio'}}
         {\screturn{\stack}{\kwret{\kwtid{b}}}}}
               {\cthread{b}{\prio'}{\sig}{\scsend{\estack}{\cmd}}}
               {\esig}{\mem}
               {\dagq{\gthread{a}{\prio}{\sthread{u}}}
                 {\{(u, b)\}}{\emptyset}{\emptyset}}
     }
\else
\Rule{D-Create}
     {u\fresh\\
       b\fresh}
     {
       \phantom{\mstep} \lconfig{\mem}{\tp}
               {\cthread{a}{\prio}{\sig}
                 {\scsend{\stack}{\kwspawn{\prio'}{\tau}{\cmd}}}}
       \mstep
       \rconfig{\cthread{a}{\prio}{\sig, \sigtype{b}{\tau}{\prio'}}
         {\screturn{\stack}{\kwret{\kwtid{b}}}}}
               {\cthread{b}{\prio'}{\sig}{\scsend{\estack}{\cmd}}}
               {\esig}{\mem}
               {\dagq{\gthread{a}{\prio}{\sthread{u}}}
                 {\{(u, b)\}}{\emptyset}{\emptyset}}
     }
\fi

\and
\iffull
\Rule{D-Touch1}
     {u \fresh}
     {
       \lconfig{\mem}{\tp}
               {\cthread{a}{\prio}{\sig}{\scsend{\stack}{\kwsync{e}}}}
       \mstep
       \rconfig{\cthread{a}{\prio}{\sig}
         {\ssend{\scp{\stack}{\kwsync{\shole}}}{e}}}
               {\etp}{\esig}{\mem}
               {\tgraph{a}{\prio}{\sthread{u}}}
     }
\and
\fi
\Rule{D-Touch2}
     {u \fresh}
     {
       \phantom{\mstep}
       \lconfig{\mem}{\tp \tpcp \cthread{b}{\prio'}{\sig'}
         {\screturn{\estack}{\kwret{v}}}}
               {\cthread{a}{\prio}{\sig, \sigtype{b}{\prio'}{\tau'}}
                 {\sreturn{\scp{\stack}{\kwsync{\shole}}}{\kwtid{b}}}}
       \\\mstep
       \rconfig{\cthread{a}{\prio}{\sig, \sigtype{b}{\prio'}{\tau'}, \sig'}
         {\screturn{\stack}{\kwret{v}}}}
               {\etp}{\esig}{\mem}
               {\dagq{\gthread{a}{\prio}{\sthread{u}}}
                 {\emptyset}{\{(b, u)\}}{\emptyset}}
     }
\and
\iffull
\Rule{D-Dcl1}
     {u \fresh}
     {
       \phantom{\mstep}
       \lconfig{\mem}{\tp}
               {\cthread{a}{\prio}{\sig}
                 {\scsend{\stack}{\kwdcl{\tau'}{\kwassn}{e}{\cmd}}}}
       \mstep
       \rconfig{\cthread{a}{\prio}{\sig}
         {\ssend{\scp{\stack}{\kwdcl{\tau'}{\kwassn}{\shole}{\cmd}}}{e}}}
               {\etp}{\esig}{\mem}
               {\tgraph{a}{\prio}{\sthread{u}}}
     }
\and
\Rule{D-Dcl2}
     {u \fresh}
     {
       \phantom{\mstep}
       \lconfig{\mem}{\tp}{\cthread{a}{\prio}{\sig}
         {\sreturn{\scp{\stack}{\kwdcl{\tau'}{\kwassn}{\shole}{\cmd}}}{v}}}
       \mstep
       \rconfig{\cthread{a}{\prio}{\sig}
         {\scsend{\stack}{\cmd}}}
               {\etp}{\sigrtype{\kwassn}{\tau'}}
               {\mem[\mement{\kwassn}{v}{u}{\sig}]}
               {\tgraph{a}{\prio}{\sthread{u}}}
     }
\and
\Rule{D-Get1}
     {u \fresh}
     {
       \lconfig{\mem}{\tp}
               {\cthread{a}{\prio}{\sig}
                 {\scsend{\stack}{\kwderef{e}}}}
       \mstep
       \rconfig{\cthread{a}{\prio}{\sig}
         {\ssend{\scp{\stack}{\kwderef{\shole}}}{e}}}
               {\etp}{\esig}{\mem}
               {\tgraph{a}{\prio}{\sthread{u}}}
     }
\and
\fi
\Rule{D-Get2}
     {u \fresh\\
       \mem(\kwassn) = \memrent{v}{u'}{\sig'}
     }
     {
       \phantom{\mstep}
       \lconfig{\mem}{\tp}
               {\cthread{a}{\prio}{\sig, \sigrtype{\kwassn}{\tau'}}
                 {\sreturn{\scp{\stack}{\kwderef{\shole}}}{\kwref{\kwassn}}}}
       \mstep
       \rconfig{\cthread{a}{\prio}{\sig, \sigrtype{\kwassn}{\tau'}, \sig'}
         {\screturn{\stack}{\kwret{v}}}}
               {\etp}{\esig}{\mem}
               {\dagq{\gthread{a}{\prio}{\sthread{u}}}
                 {\emptyset}{\emptyset}{\{(u', u)\}}}
     }
\iffull
\end{mathpar}
\caption{Cost semantics for threads (1 of 2)}
\label{fig:cost1}
\end{figure*}

\begin{figure*}
\langfigsize
  \centering
  \begin{mathpar}
\Rule{D-Set1}
     {u \fresh}
     {
       \lconfig{\mem}{\tp}
               {\cthread{a}{\prio}{\sig}
                 {\scsend{\stack}{\kwassign{e_1}{e_2}}}}
       \mstep
       \rconfig{\cthread{a}{\prio}{\sig}
         {\ssend{\scp{\stack}{\kwassign{\shole}{e_2}}}{e_1}}}
               {\etp}{\esig}{\mem}
               {\tgraph{a}{\prio}{\sthread{u}}}
     }
\and
\Rule{D-Set2}
     {u \fresh}
     {
       \lconfig{\mem}{\tp}
               {\cthread{a}{\prio}{\sig}
                 {\sreturn{\scp{\stack}{\kwassign{\shole}{e}}}{\kwref{\kwassn}}}}
       \mstep
       \rconfig{\cthread{a}{\prio}{\sig}
         {\scsend{\scp{\stack}{\kwassign{\kwref{\kwassn}}{\shole}}}{e}}}
               {\etp}{\esig}{\mem}
               {\tgraph{a}{\prio}{\sthread{u}}}
     }
\fi
\and
\Rule{D-Set3}
     {u \fresh}
     {
       \phantom{\mstep}
       \lconfig{\mem}{\tp}{\cthread{a}{\prio}{\sig,\sigrtype{\kwassn}{\tau'}}
         {\sreturn{\scp{\stack}{\kwassign{\kwref{\kwassn}}{\shole}}}{v}}}
       \mstep
       \rconfig{\cthread{a}{\prio}{\sig,\sigrtype{\kwassn}{\tau'}}
         {\screturn{\stack}{\kwret{v}}}}
               {\etp}{\esig}
               {\mem[\mement{\kwassn}{v}{u}{\sig}]}
               {\tgraph{a}{\prio}{\sthread{u}}}
     }
\and
\iffull
\Rule{D-Ret1}
     {u \fresh}
     {
       \lconfig{\mem}{\tp}{\cthread{a}{\prio}{\sig}
         {\scsend{\stack}{\kwret{e}}}}
       \mstep
       \rconfig{\cthread{a}{\prio}{\sig}
         {\ssend{\scp{\stack}{\kwret{\shole}}}{e}}}
               {\etp}{\esig}{\mem}{\tgraph{a}{\prio}{\sthread{u}}}
     }
\and
\Rule{D-Ret2}
     {u \fresh}
     {
       \lconfig{\mem}{\tp}{\cthread{a}{\prio}{\sig}
         {\sreturn{\scp{\stack}{\kwret{\shole}}}{v}}}
       \mstep
       \rconfig{\cthread{a}{\prio}{\sig}
         {\screturn{\stack}{\kwret{v}}}}
               {\etp}{\esig}{\mem}
               {\tgraph{a}{\prio}{\sthread{u}}}
     }
\and
\Rule{D-Exp}
     {\stackstate \estep \stackstate'\\
       u\fresh}
     {
       \lconfig{\mem}{\tp}{\cthread{a}{\prio}{\sig}{\stackstate}}
       \mstep
       \rconfig{\cthread{a}{\prio}{\sig}{\stackstate'}}
               {\etp}{\esig}
               {\mem}{\tgraph{a}{\prio}{\sthread{u}}}
     }
\fi
\end{mathpar}
  \iffull
  \caption{Cost semantics for threads (2 of 2)}
  \else
  \vspace{-4mm}
  \caption{Cost semantics for threads}
  \fi
\label{fig:cost2}
\iffull\else \vspace{-3mm} \fi
\end{figure*}

A full configuration of the stack machine includes the current heap~$\mem$
and set of threads~$\tp$.
A heap, essentially, is a mapping from memory locations to values.
For technical reasons, we also record two pieces of metadata in the heap
at each location: the DAG vertex that performed the last
write to that memory location (which will be used to add weak edges to the
cost graph) and a signature containing threads that one might ``learn about''
by reading this memory location.
For example, suppose thread~$a$ creates thread~$b$ and writes~$\kwtid{b}$
into a memory location~$\kwassn$.
If thread~$c$ later reads from~$\kwassn$, it must ``learn about'' the existence
of thread~$b$ in order to preserve typing.
We write an element of the heap as~$\mement{\kwassn}{v}{u}{\sig}$.
We denote the empty heap~$\emem$, and
let~$\mem[\mement{\kwassn}{v}{u}{\sig}]$ be the extension of~$\mem$ with
the binding~$\mement{\kwassn}{v}{u}{\sig}$.
If~$\kwassn \in \dom{\mem}$, the new binding is assumed to overwrite the
existing binding.
\iffull

\else%
\fi
A {\em thread pool}~$\tp$ maps thread symbols~$a$ to a triple consisting
of thread~$a$'s priority, its stack state and a signature~$\sig$
consisting of the threads that~$a$ ``knows about,'' as motivated above.
%
\iffull
We require that~$a$'s stack state be well-typed with
signature~$\sig$.
This requirement will be preserved by the transition rules.
We write an element of the thread pool
as~$\cthread{a}{\prio}{\sig}{\stackstate}$.
We notate the empty thread pool as~$\etp$,
and notate the union of two (disjoint) thread pools by~$\tp_1 \tpcp \tp_2$.
For notational simplicity, we will
write~$\cthread{a}{\prio}{\sig}{\stackstate}$ for the
singleton thread pool consisting of only this binding.

Formally, a configuration of the stack machine consists of
a 4-tuple of a signature~$\sig$ containing signatures for the current heap,
the heap~$\mem$, a cost graph~$\graph$, and a thread pool~$\tp$.
Such a configuration is written as~$\gconfig{\sig}{\mem}{\graph}{\tp}$.
The judgment~$\gconfig{\sig}{\mem}{\graph}{\tp}
\gstep \gconfig{\sig'}{\mem'}{\graph'}{\tp'}$
indicates a single (parallel) step of the stack machine.
This judgment has one rule,~\rulename{D-Par}, shown below:
\[
\Rule{D-Par}
     {
       n \geq 1\\
       \tp = \tp_0 \tpcp
       \cthread{a_1}{\prio_1}{\sig_1}{\stackstate_1} \tpcp \dots
       \tpcp \cthread{a_n}{\prio_n}{\sig_n}{\stackstate_n}\\
         \lconfig{\mem}{\tp}{\cthread{a_i}{\prio_i}{\sig_i}{\stackstate_i}}
         \mstep
         \rconfig{\cthread{a_i}{\prio_i}{\sig_i'}{\stackstate_i'}}
                 {\tp_i'}{\sig_i''}{\mem_i'}
                 {\graph_i'}\\
       \tp' = \tp_0 \tpcp
                 \cthread{a_1}{\prio_1}{\sig_1'}{\stackstate_1'} \tpcp \tp_1'
                 \tpcp \dots \tpcp
                 \cthread{a_n}{\prio_i}{\sig_n'}{\stackstate_n'} \tpcp \tp_n'
     }
     {
       \gconfig{\sig}{\mem}{\graph}{\tp}
       \gstep
       \gconfig{\sig,\sig_1'',\dots, \sig_n''}
               {\mem,\mem_1', \dots, \mem_n'}
               {\graph \scomp{a_1} \graph_1' \dots \scomp{a_n} \graph_n'}
               {\tp'}
     }
\]
The rule selects an arbitrary subset~$a_1, \dots, a_n$ of the threads of~$\tp$
and steps them using an auxiliary
judgment~$\lconfig{\mem}{\tp}{\cthread{a_i}{\prio_i}{\sig_i}{\stackstate_i}}
\mstep \rconfig{\cthread{a_i}{\prio_i}{\sig_i'}{\stackstate_i'}}{\tp_i'}
       {\sig_i''}{\mem_i'}{\graph_i'}$.
In this judgment,~$\stackstate_i'$ is the new state of thread~$a_i$,~$\sig_i''$
contains the memory locations allocated by the step
and~$\mem_i'$ contains any
heap writes performed by the step.
The graph~$\graph_i'$ contains a vertex corresponding to this step as well as
any additional \create{}, \touch{} or weak edges added by this step.
We combine the signatures~$\sig, \sig_1'', \dots, \sig_n''$ freely
since the allocated memory locations are assigned to be disjoint.
We combine the heaps~$\mem_1', \dots, \mem_n'$, allowing writes
by~$a_j$ to overwrite writes by~$a_i$ for~$j > i$.
This corresponds to non-deterministic resolution of
write-write data races in the program.
We add each of the generated cost graphs to the existing graph~$\graph$
using thread-sequential composition:
\[
\begin{array}{c}
  \dagq{\gthread{a}{\prio}{\uthread_1} \uplus \gthreads_1}{\spawns_1}{\syncs_1}
       {\reads_2}
\scomp{a}
\dagq{\gthread{a}{\prio}{\uthread_2} \uplus \gthreads_2}{\spawns_2}{\syncs_2}
     {\reads_2}\\
\defeq\\
\dagq{\gthread{a}{\prio}{\uthread_1 \tscomp{} \uthread_2} \uplus
  \gthreads_1 \uplus \gthreads_2}{\spawns_1 \cup \spawns_2}
     {\syncs_1 \cup \syncs_2}{\reads_1 \cup \reads_2}
\end{array}
\]
Finally, we concatenate the resulting thread pools.

Figures~\ref{fig:cost1} and~\ref{fig:cost2} define the rules for the
auxiliary judgment~$\lconfig{\mem}{\tp}{\cthread{a_i}{\prio_i}{\sig_i}{\stackstate_i}}
\mstep \rconfig{\cthread{a_i}{\prio_i}{\sig_i'}{\stackstate_i'}}{\tp_i'}
       {\sig_i''}{\mem_i'}{\graph_i'}$.
\else

Figure~\ref{fig:cost2} presents a subset of the rules for the
command transition
judgment
\[\lconfig{\mem}{\tp}{\cthread{a_i}{\prio_i}{\sig_i}{\stackstate_i}}
\mstep \rconfig{\cthread{a_i}{\prio_i}{\sig_i'}{\stackstate_i'}}{\tp_i'}
       {\sig_i''}{\mem_i'}{\graph_i'}\]
In this judgment,~$\stackstate_i'$ is the new state of thread~$a_i$,~$\sig_i''$
contains the memory locations allocated by the step
and~$\mem_i'$ contains any
heap writes performed by the step.
The graph~$\graph_i'$ contains a vertex corresponding to this step as well as
any additional \create{}, \touch{} or weak edges added by this step.
The full semantics for the abstract machine also includes a single rule that
steps some number of threads in parallel and combines the resulting states
and graphs.
The full set of rules can be found in the full version~\citep{arxiv}
\fi

\iffull
If the command at the top of the stack is a bind, we first push the binding
on the stack and evaluate the expression~$e$ (\rulename{D-Bind1}).
When the resulting encapsulated command is returned, it in turn is evaluated
(\rulename{D-Bind2}) and finally the resulting value is substituted into
the new command and the bind command is popped from the stack
(\rulename{D-Bind3}).
Each of the above rules produces a graph consisting of one vertex.
\fi
An {\create} command simply creates a new thread symbol~$b$ and adds a thread~$b$
to the thread pool to execute the command~$m$.
It returns the thread handle, and adds a {\create} edge to the graph.
An {\touch} command first evaluates its subexpression
(\rulename{D-Touch1}).
When the thread handle~$\kwtid{b}$ is returned, rule~\rulename{D-Touch2}
inspects the thread pool for the
entry~$\cthread{b}{\prio'}{\sig'}{\screturn{\estack}{\kwret{v}}}$
(if~$b$'s stack is not of this form,~$b$ has not finished executing and
the \touch{} will block until it does).
The command returns the value~$v$ and adds the appropriate \touch{} edge.
It also adds~$\sig'$ to the set of threads that~$a$ ``knows about,'' because~$v$
might contain handles to threads in~$\sig'$.

\iffull
The rules \rulename{D-Dcl1}, \rulename{D-Get1}, \rulename{D-Set1},
\rulename{D-Set2} and \rulename{D-Ret1} are similar in that they
push a frame onto the stack and set aside a subexpression for evaluation.
Rule~\rulename{D-Dcl2} adds a binding to the heap for the new memory
location, initialized with the returned value~$v$, the new graph vertex~$u$
and the signature~$\sig$.
The signature~$\sigrtype{\kwassn}{\tau'}$ is returned as well.
The rule~\rulename{D-Set3} is similar, but returns the new value
while~\rulename{D-Dcl2} continues as the new command~$\cmd$.

Rule~\rulename{D-Get2} inspects the heap for the binding of~$\kwassn$,
returns its value, adds a weak edge~$(u', u)$ (recall that~$u'$
is the vertex corresponding to most recent write
to~$\kwassn$) and adds~$\sig'$ to the signature of~$a$.
Rule~\rulename{D-Ret2} returns the value of a command that has completed
execution.

\begin{figure}
\langfigsize
  \[
  \begin{array}{r l l}
    \ssend{\stack}{v} & \estep & \sreturn{\stack}{v}\\

    \ssend{\stack}{\kwifz{\kwnumeral{0}}{e_1}{x}{e_2}}
    & \estep & \ssend{\stack}{e_1}\\

    \ssend{\stack}{\kwifz{\kwnumeral{n + 1}}{e_1}{x}{e_2}}
    & \estep & \ssend{\stack}{[\kwn/x]e_2}\\

    \ssend{\stack}{\kwapply{(\kwfun{x}{e})}{v}} & \estep &
    \ssend{\stack}{[v/x]e}\\

    \ssend{\stack}{\kwfst{\kwepair{v_1}{v_2}}} & \estep &
    \sreturn{\stack}{v_1}\\

    \ssend{\stack}{\kwsnd{\kwepair{v_1}{v_2}}} & \estep &
    \sreturn{\stack}{v_2}\\

    \ssend{\stack}{\kwcase{\kweinl{v}}{x}{e_1}{y}{e_2}} & \estep &
    \ssend{\stack}{[v/x]e_1}\\

    \ssend{\stack}{\kwcase{\kweinr{v}}{x}{e_1}{y}{e_2}} & \estep &
    \ssend{\stack}{[v/y]e_2}\\

    \ssend{\stack}{\kwwapp{(\kwwlam{\vprio}{\cons}{e})}{\prio}} & \estep &
    \ssend{\stack}{[\prio/\vprio]e}\\

    \ssend{\stack}{\kwfix{x}{\tau}{e}} & \estep &
    \ssend{\stack}{[\kwfix{x}{\tau}{e}/x]e}\\

    \ssend{\stack}{\kwlet{x}{e_1}{e_2}} & \estep &
    \ssend{\scp{\stack}{\kwlet{x}{\shole}{e_2}}}{e_1}\\

    \sreturn{\scp{\stack}{\kwlet{x}{\shole}{e_2}}}{v} & \estep &
    \ssend{\stack}{[v/x]e_2}
  \end{array}
  \]
  \caption{Stack dynamics for expressions.}
  \label{fig:exp-stack}
\end{figure}

Finally, the rule~\rulename{D-Exp} handles cases in which the stack state
is evaluating an expression.
This rule uses the auxiliary judgment~$\stackstate_1 \estep \stackstate_2$
which expresses the transition relation between stack states that only
require expression evaluation.
Because expressions do not alter or depend on global state, this judgment
requires access only to the stack state itself and thus its rules
are drastically simpler and relatively standard for a stack-based semantics.
These rules are presented in Figure~\ref{fig:exp-stack}.
Because expressions are in A-normal form, only the rules for let-binding
require evaluating a subexpression.
All other rules immediately pop the expression from the stack,
perform its associated computation, and either return the resulting value or
(e.g., in the case of a function application) continue as a new expression.
\else
Rule~\rulename{D-Set3} adds a binding to the heap for the new value of the
memory location, and includes as metadata
the new graph vertex~$u$ and the signature~$\sig$.
%
%
%
Rule~\rulename{D-Get2} inspects the heap for the binding of~$\kwassn$,
returns its value, adds a weak edge~$(u', u)$ (recall that~$u'$
is the vertex corresponding to most recent write
to~$\kwassn$) and adds~$\sig'$ to the signature of~$a$.
\fi

\iffull
\begin{figure}
\langfigsize
  \centering
  \def \MathparLineskip {\lineskip=\mylineskip}
\begin{mathpar}
\Rule{KS-Empty}
     {\strut}
     {\stackacceptsc{\sig}{\estack}{\tau}{\tau}{\prio}}
\and
\Rule{KS-Let}
     {\stackaccepts{\sig}{\stack}{\tau_2}{\tau'}{\prio}\\
       \etyped{\sig}{\hastype{x}{\tau_1}}{e}{\tau_2}
     }
     {
       \stackaccepts{\sig}{\scp{\stack}{\kwlet{x}{\shole}{e}}}{\tau_1}{\tau'}{\prio}
     }
\iffull
\and
\Rule{KS-Bind1}
     {\stackacceptsc{\sig}{\stack}{\tau_2}{\tau'}{\prio}\\
       \cmdtyped{\sig}{\hastype{x}{\tau_1}}{\cmd}{\tau_2}{\prio}
     }
     {
       \stackaccepts{\sig}{\scp{\stack}{\kwbind{\shole}{x}{\cmd}}}
                    {\kwcmdt{\tau_1}{\prio}}{\tau'}{\prio}
     }
\and
\Rule{KS-Bind2}
     {\stackacceptsc{\sig}{\stack}{\tau_2}{\tau'}{\prio}\\
       \cmdtyped{\sig}{\hastype{x}{\tau_1}}{\cmd}{\tau_2}{\prio}
     }
     {
       \stackacceptsc{\sig}{\scp{\stack}{\kwbind{\shole}{x}{\cmd}}}
                     {\tau_1}{\tau'}{\prio}
     }
\and
\Rule{KS-Sync}
     {
       \stackacceptsc{\sig}{\stack}{\tau}{\tau'}{\prio}\\
       \ple{\prio}{\prio'}
     }
     {
       \stackaccepts{\sig}{\scp{\stack}{\kwsync{\shole}}}{\kwat{\tau}{\prio'}}
                    {\tau'}{\prio}
     }
\and
\Rule{KS-Dcl}
     {
       \stackacceptsc{\sig}{\stack}{\tau_2}{\tau'}{\prio}\\
       \cmdtyped{\sig, \sigrtype{\kwassn}{\tau_1}}{\ectx}{\cmd}{\tau_2}{\prio}
     }
     {
       \stackaccepts{\sig}{\scp{\stack}{\kwdcl{\tau_1}{\kwassn}{\shole}{\cmd}}}
         {\tau_1}{\tau'}{\prio}
     }
\and
\Rule{KS-Get}
     {
       \stackacceptsc{\sig}{\stack}{\tau}{\tau'}{\prio}
     }
     {
       \stackaccepts{\sig}{\scp{\stack}{\kwderef{\shole}}}{\kwreft{\tau}}
                    {\tau'}{\prio}
     }
\and
\Rule{KS-Set1}
     {
       \stackacceptsc{\sig}{\stack}{\tau}{\tau'}{\prio}
     }
     {
       \stackaccepts{\sig}{\scp{\stack}{\kwassign{\shole}{e}}}{\kwreft{\tau}}
                    {\tau'}{\prio}
     }
\and
\Rule{KS-Set2}
     {
       \stackacceptsc{\sig}{\stack}{\tau}{\tau'}{\prio}
     }
     {
       \stackaccepts{\sig}{\scp{\stack}{\kwassign{e}{\shole}}}{\tau}
                    {\tau'}{\prio}
     }
\and
\Rule{KS-Ret}
     {
       \stackacceptsc{\sig}{\stack}{\tau}{\tau'}{\prio}
     }
     {
       \stackaccepts{\sig}{\scp{\stack}{\kwret{\shole}}}{\tau}{\tau'}{\prio}
     }
\end{mathpar}
\\[4ex]
\begin{mathpar}
\Rule{KS-PopExp}
     {
       \stackaccepts{\sig}{\stack}{\tau}{\tau'}{\prio}\\\\
       \etyped{\sig}{\ectx}{e}{\tau}
     }
     {
       \sstyped{\sig}{\ssend{\stack}{e}}{\tau'}{\prio}
     }
\and
\Rule{KS-PopCmd}
     {
       \stackacceptsc{\sig}{\stack}{\tau}{\tau'}{\prio}\\\\
       \cmdtyped{\sig}{\ectx}{\cmd}{\tau}{\prio}
     }
     {
       \sstyped{\sig}{\scsend{\stack}{\cmd}}{\tau'}{\prio}
     }
\and
\Rule{KS-PushExp}
     {
       \stackaccepts{\sig}{\stack}{\tau}{\tau'}{\prio}\\\\
       \etyped{\sig}{\ectx}{v}{\tau}
     }
     {
       \sstyped{\sig}{\sreturn{\stack}{v}}{\tau'}{\prio}
     }
\and
\Rule{KS-PushCmd}
     {
       \stackacceptsc{\sig}{\stack}{\tau}{\tau'}{\prio}\\\\
       \etyped{\sig}{\ectx}{v}{\tau}
     }
     {
       \sstyped{\sig}{\screturn{\stack}{\kwret{v}}}{\tau'}{\prio}
     }
\fi
\end{mathpar}
\iffull
\caption{Typing rules for stacks.}
\else
\caption{Selected typing rules for stacks.}
\fi
\label{fig:stack-statics}
\end{figure}

We can relate the typing rules of the previous section with the transition
judgment in a type safety proof showing that, in the typical parlance,
``well-typed programs don't go wrong'' (``go wrong'' here refers simply to
entering a ``stuck'' state; we also wish to show that well-typed programs
lead to well-formed cost graphs, but this will be the subject of the next
subsection).
In order to state such a property formally, we need static semantics for
stacks, and stack states.
These are given by Figure~\ref{fig:stack-statics} in three judgments.
The judgments~$\stackaccepts{\sig}{\stack}{\tau_1}{\tau_2}{\prio}$
and~$\stackacceptsc{\sig}{\stack}{\tau_1}{\tau_2}{\prio}$
indicate that the stack~$\stack$ ``accepts'' a value of type~$\tau_1$ and
transforms it to a value of type~$\tau_2$ at priority~$\prio$.
The former judgment is used for stacks that are evaluating an expression
and the latter for stacks that are evaluating a command.
The rules for these judgments largely follow the corresponding rules
of Figures~\ref{fig:statics} and~\ref{fig:thread-statics}.
The judgment~$\sstyped{\sig}{\stackstate}{\tau'}{\prio}$
states that a stack state is well-formed and computes a value of type~$\tau'$
at priority~$\prio$.
The rules for this judgment  require that the stack takes a value of
type~$\tau$ to one of~$\tau'$ (using one of the two judgments described above)
and that the expression or command have type~$\tau$.

As is typical, the proof of type safety will be split into two theorems:
Preservation and Progress.
The statement of the preservation theorem requires two additional definitions.
Intuitively, we say that a thread pool~$\tp$ is {\em compatible} with
a signature~$\sig$ if, whenever a thread~$a$ ``knows about'' a thread~$b$,
the thread~$b$ exists in~$\tp$ and its stack state is well-typed.

\begin{defn}
  A thread pool~$\tp$ is {\em compatible} with a signature~$\sig$
  if for all~$\cthread{a}{\prio_a}{\sig_a}{\stackstate_a} \in \tp$,
  and all~$\sigtype{b}{\tau_b}{\prio_b} \in \sig_a$,
  we have~$\cthread{b}{\prio_b}{\sig_b}{\stackstate_b} \in \tp$
  and~$\sstyped{\sig,\sig_b}{\stackstate_b}{\tau_b}{\prio_b}$.
\end{defn}

A heap~$\mem$ is well-typed with respect to a signature~$\sig$,
written~$\mtyped{\worlds}{\mem}{\sig}$
if for all~$\sigrtype{\kwassn}{\tau} \in \sig$,
we have~$\mement{\kwassn}{v}{u}{\sig'} \in \mem$
and~$\etyped{\sig,\sig'}{\ectx}{v}{\tau}$.

The preservation theorem itself is split into three cases, corresponding
to expression steps\iffull(Figure~\ref{fig:exp-stack})\fi, thread steps
(Figures~\ref{fig:cost1} and~\ref{fig:cost2})
and parallel steps
(Rule~\rulename{D-Par}).

\begin{theorem}[Preservation]\label{lem:preservation}
  \strut\\
  \begin{enumerate}
  \item If~$\sstyped{\sig}{\stackstate}{\tau}{\prio}$
    and~$\stackstate \estep \stackstate'$
    then~$\sstyped{\sig}{\stackstate'}{\tau}{\prio}$.
  \item If~$\tp$ is compatible with~$\sig$
    and~$\sstyped{\sig,\sig_a}{\stackstate}{\tau}{\prio}$ and
    \[\lconfig{\mem}{\tp}{\cthread{a}{\prio}{\sig_a}{\stackstate}}
      \mstep
      \rconfig{\cthread{a}{\prio}{\sig_a'}{\stackstate'}}
              {\tp'}{\sig'}{\mem'}
              {\graph}
    \]
    then~$\sstyped{\sig,\sig_a'}{\stackstate'}{\tau}{\prio}$
    and~$\mtyped{\worlds}{\mem'}{\sig,\sig'}$ and~$\tp'$ is
    compatible with~$\sig, \sig'$ and for
    all~$\cthread{b}{\prio_b}{\sig_b}{\stackstate_b} \in \tp' \setminus \tp$,
    we have~$\sstyped{\sig,\sig_b}{\stackstate_b}{\tau_b}{\prio_b}$.
  \item If~$\mtyped{\worlds}{\mem}{\sig}$
    and~$\tp$ is compatible with~$\sig$
    and~$\sstyped{\sig,\sig_a}{\stackstate}{\tau}{\prio}$
    for all~$\cthread{a}{\prio}{\sig_a}{\stackstate} \in \tp$
    and
    \[\gconfig{\sig}{\mem}{\graph}{\tp} \gstep
    \gconfig{\sig'}{\mem'}{\graph'}{\tp'}
    \]
    then~$\mtyped{\worlds}{\mem'}{\sig'}$
    and~$\tp'$ is compatible with~$\sig'$
    and~$\sstyped{\sig,\sig_a'}{\stackstate'}{\tau}{\prio}$
    for all~$\cthread{a}{\prio}{\sig_a'}{\stackstate'} \in \tp'$.
  \end{enumerate}
\end{theorem}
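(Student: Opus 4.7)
The plan is to prove the three parts in order, since each builds on the previous. I will prove Part~1 by induction on the derivation of $\stackstate \estep \stackstate'$ using the rules of the expression stack dynamics. Each rule either performs a beta-style reduction (function application, case analysis, fixed-point unrolling, priority-polymorphic instantiation) or shuffles frames on the stack (let-binding push/pop, value-return). For the beta cases, I invert the stack-state typing judgment to obtain the relevant typing premise, invert that premise to recover subderivations, and apply the appropriate clause of the substitution lemma (Lemma~\ref{lem:subst}); the signature $\sig$ and target type are unchanged, so the new stack state is typed by the same~\rulename{KS-*} rule. For the push/pop cases, the frame-accepting judgment $\stackaccepts{\sig}{\stack}{\tau_1}{\tau_2}{\prio}$ witnesses exactly the typing obligations needed to reassemble the state.

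For Part~2, I proceed by case analysis on the thread-step rule $\mstep$. The rule~\rulename{D-Exp} delegates to Part~1, after observing that when the inner expression step fires the thread-level signature $\sig_a$ and heap $\mem$ are unchanged. The ``push a frame'' rules (\rulename{D-Bind1}, \rulename{D-Touch1}, \rulename{D-Dcl1}, \rulename{D-Get1}, \rulename{D-Set1}, \rulename{D-Set2}, \rulename{D-Ret1}) and their matching ``pop'' rules are handled by inverting the typing of the source command and assembling a stack-accepts derivation for the new frame using the corresponding \rulename{KS-*} rule. The substantive cases are the ones that modify $\sig_a$, $\tp$, or $\mem$: for \rulename{D-Create} I allocate $\sigtype{b}{\tau}{\prio'}$ and use the inverted premise of~\rulename{Create} to type $b$'s fresh stack state; for \rulename{D-Touch2} I use the compatibility assumption at $b$ to conclude that $\sig_b$ is a consistent extension and that $v$ is typed under $\sig,\sig_b$; for \rulename{D-Get2} I invoke $\mtyped{\worlds}{\mem}{\sig}$ at the location $\kwassn$ to get $v:\tau'$ under $\sig,\sig'$ and attach $\sig'$ to $a$; for \rulename{D-Set3} and \rulename{D-Dcl2} I use the assumption that the returned value was typed at the reference's type to re-establish $\mtyped{\worlds}{\mem'}{\sig,\sig'}$. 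Each of these cases requires a small \emph{signature weakening} observation: if $\etyped{\sig}{\ctx}{e}{\tau}$ and $\sig \subseteq \sig''$ (with consistent bindings), then $\etyped{\sig''}{\ctx}{e}{\tau}$, and likewise for commands, stacks, and stack states. I would prove this by a routine induction on the typing derivation, since only the rules~\rulename{Tid} and~\rulename{Ref} consult $\sig$ and both are monotone.

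Part~3 follows directly from Part~2 by inspecting rule~\rulename{D-Par}. Each selected thread $a_i$ steps independently, producing a new stack-state, a fresh sub-signature $\sig_i''$, a heap fragment $\mem_i'$, and a thread-pool fragment $\tp_i'$. I would apply Part~2 to each $a_i$ to obtain: well-typedness of $\stackstate_i'$ under $\sig,\sig_i'$; well-typedness of $\mem_i'$ under $\sig,\sig_i''$; compatibility of $\tp_i'$ with $\sig,\sig_i''$; and well-typedness of every newly-spawned thread in $\tp_i' \setminus \tp$. The $\sig_i''$ are disjoint by the freshness side conditions on location allocations, so their union is consistent. For heaps, the non-deterministic write-resolution specified for $\mem, \mem_1',\dots, \mem_n'$ picks at each location one value that was typed at the location's declared type, so $\mtyped{\worlds}{\mem'}{\sig'}$ follows. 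Compatibility of $\tp'$ with $\sig'$ then follows by signature weakening applied to all pre-existing threads in $\tp_0$ together with compatibility of each $\tp_i'$ obtained from Part~2.

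The hardest part will be Part~2 at the \rulename{D-Touch2} and \rulename{D-Get2} cases, where a thread absorbs signature data it did not previously ``know about.'' I need the signature weakening lemma to be genuinely compositional (allowing $\sig_a$ to grow to $\sig_a, \sig'$ while the remaining threads keep their original signatures) and I need compatibility to transfer from the source thread/heap to the recipient without introducing contradictions. The formal workhorse is the signature-weakening lemma plus the invariant, captured in the definition of compatibility and $\mtyped{\worlds}{\mem}{\sig}$, that every ``reachable'' thread handle $\kwtid{b}$ or reference $\kwref{\kwassn}$ stored anywhere in the state is accompanied by a signature entry that types it; with that invariant in hand all the other cases reduce to book-keeping.
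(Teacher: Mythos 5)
Your proposal is correct and follows essentially the same route as the paper's proof: case analysis on the step rules, inversion of the relevant \rulename{KS-*} and command typing rules, the substitution lemma for the beta/pop cases, thread-pool compatibility for \rulename{D-Touch2}, heap typing for \rulename{D-Get2}, and an appeal to part (2) plus disjointness of the fresh signatures to discharge part (3). The only difference is that you make explicit the signature-weakening lemma that the paper uses silently when $\sig_a$ grows to $\sig_a'$; that is a sound and indeed necessary supporting step, not a divergence in approach.
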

\begin{proof}
  \begin{enumerate}
  \item
    By cases on the rules for~$\stackstate \estep \stackstate'$.
  \item
    By cases on the rules for
    \[\lconfig{\mem}{\tp}{\cthread{a}{\prio}{\sig_a}{\stackstate}}
      \mstep
      \rconfig{\cthread{a}{\prio}{\sig_a'}{\stackstate'}}
              {\tp'}{\sig'}{\mem'}
              {\graph}
              \]
    \begin{itemize}
    \item \rulename{D-Bind1}.
      Then~$\stackstate = \ssend{\stack}{\kwbind{e}{x}{\cmd_2}}$.
      By inversion on~\rulename{KS-PopCmd} and~\rulename{Bind},
      $\stackaccepts{\sig,\sig_a}{\stack}{\tau_2}{\tau}{\prio}$
      and~$\etyped{\sig,\sig_a}{\ectx}{e}{\kwcmdt{\tau_1}{\prio}}$
      and~$\cmdtyped{\sig,\sig_a}{\hastype{x}{\tau_1}}{\cmd_2}{\tau_2}{\prio}$.
      Apply~\rulename{KS-Bind1} and~\rulename{KS-PopExp}.
    \item \rulename{D-Bind2}.
      Then~$\stackstate = \sreturn{\scp{\stack}{\kwbind{\shole}{x}{\cmd_2}}}
      {\kwcmd{\prio}{\cmd}}$.
      By inversion on~\rulename{KS-PushExp},
      $\stackaccepts{\sig,\sig_a}{\scp{\stack}{\kwbind{\shole}{x}{\cmd_2}}}
      {\kwcmdt{\tau_1}{\prio}}{\tau}{\prio}$
      and~$\etyped{\sig,\sig_a}{\ectx}{\kwcmd{\prio}{\cmd}}{\kwcmdt{\tau_1}{\prio}}$.
      By inversion on~\rulename{\cmdsym I},
      $\cmdtyped{\sig,\sig_a}{\ectx}{\cmd}{\tau_1}{\prio}$.
      By inversion on~\rulename{KS-Bind1},
      $\stackacceptsc{\sig,\sig_a}{\stack}{\tau_2}{\tau}{\prio}$
      and~$\cmdtyped{\sig,\sig_a}{\ectx}{\hastype{x}{\tau_1}}{\cmd_2}{\tau_2}{\prio}$.
      Apply~\rulename{KS-Bind2} and~\rulename{KS-PopCmd}.
    \item \rulename{D-Bind3}.
      Then~$\stackstate = \screturn{\scp{\stack}{\kwbind{\shole}{x}{\cmd_2}}}
      {\kwret{v}}$.
      By inversion on~\rulename{KS-PushCmd},
      $\stackacceptsc{\sig,\sig_a}{\scp{\stack}{\kwbind{\shole}{x}{\cmd_2}}}
      {\tau}{\tau'}{\prio}$
      and~$\etyped{\sig,\sig_a}{\ectx}{v}{\tau}$.
      By inversion on~\rulename{KS-Bind2},
      $\stackacceptsc{\sig,\sig_a}{\stack}{\tau_2}{\tau'}{\prio}$
      and~$\cmdtyped{\sig,\sig_a}{\ectx}{\hastype{x}{\tau_1}}{\cmd_2}{\tau_2}{\prio}$.
      By Lemma~\ref{lem:subst},
      $\cmdtyped{\sig,\sig_a}{\ectx}{[v/x]\cmd_2}{\tau_2}{\prio}$.
      Apply~\rulename{KS-PopCmd}.
    \item \rulename{D-Create}.
      Then~$\stackstate = \scsend{\stack}{\kwspawn{\prio'}{\tau}{\cmd}}$.
      By inversion on~\rulename{KS-PopCmd} and~\rulename{Create},
      $\stackacceptsc{\sig,\sig_a}{\stack}{\kwat{\tau}{\prio'}}{\tau'}{\prio}$
      and~$\cmdtyped{\sig,\sig_a}{\ectx}{\cmd}{\tau}{\prio'}$.
      By~\rulename{Tid},~$\etyped{\sig,\sig_a'}{\ectx}{\kwtid{b}}{\kwat{\tau}{\prio'}}$.
      By~\rulename{KS-PushCmd},
      $\sstyped{\sig,\sig_a'}{\sreturn{\stack}{\kwret{\kwtid{b}}}}{\tau'}{\prio}$.
      By~\rulename{KS-PopCmd} and~\rulename{KS-Empty},
      $\sstyped{\sig,\sig_a}{\ssend{\estack}{\cmd}}{\tau'}{\prio'}$.
    \item \rulename{D-Touch2}.
      Then~$\stackstate = \sreturn{\scp{\stack}{\kwsync{\shole}}}{\kwtid{b}}$.
      By inversion on~\rulename{KS-PushExp} and~\rulename{KS-Touch},
      $\stackacceptsc{\sig,\sig_a}{\stack}{\tau}{\tau'}{\prio}$
      and~$\etyped{\sig,\sig_a}{\ectx}{\kwtid{b}}{\kwat{\tau}{\prio'}}$.
      By inversion on~\rulename{Tid}, we
      have~$\sigtype{b}{\tau}{\prio'} \in \sig_a$,
      so by
      compatibility,~$\sstyped{\sig,\sig_b}{\screturn{\estack}{\kwret{v}}}{\tau}{\prio'}$
      and for all~$\sigtype{c}{\tau_c}{\prio_c} \in \sig'$,
      we have~$\cthread{c}{\prio_c}{\sig_c}{\stackstate_c} \in \tp$
      and~$\sstyped{\sig,\sig_c}{\stackstate_c}{\tau_c}{\prio_c}$
      so~$\tp$ is compatible with~$\sig$.
      By inversion on~\rulename{KS-PushCmd} and \rulename{KS-Empty},
      $\etyped{\sig,\sig_a}{\ectx}{v}{\tau}$.
      Apply~\rulename{Ret} and~\rulename{KS-PushCmd}.
    \item \rulename{D-Dcl2}.
      Then~$\stackstate = \sreturn{\scp{\stack}
        {\kwdcl{\tau'}{\kwassn}{\shole}{\cmd}}}{v}$.
      By inversion on \rulename{KS-PushExp} and \rulename{KS-Dcl},
      $\stackacceptsc{\sig,\sig_a}{\stack}{\tau_0}{\tau}{\prio}$
      and~$\cmdtyped{\sig,\sig_a'}{\ectx}
      {\cmd}{\tau_0}{\prio}$
      and~$\etyped{\sig,\sig_a'}{\ectx}{v}{\tau'}$.
      By \rulename{KS-PopCmd},
      $\sstyped{\sig,\sig_a'}{\scsend{\stack}{\cmd}}{\tau}{\prio}$.
    \item \rulename{D-Get2}.
      Then~$\stackstate = \sreturn{\scp{\stack}{\kwderef{\shole}}}
      {\kwref{\kwassn}}$.
      By inversion on \rulename{KS-PushExp} and \rulename{KS-Get},
      $\stackacceptsc{\sig,\sig_a}{\stack}{\tau_0}{\tau}{\prio}$
      and~$\etyped{\sig,\sig_a}{\ectx}{\kwref{\kwassn}}{\kwreft{\tau_0}}$.
      By inversion on~\rulename{Ref},
      $\sigrtype{\kwassn}{\tau} \in \sig,\sig_a$.
      By heap typing,~$\etyped{\sig,\sig'}{\ectx}{v}{\tau_0}$.
      Apply \rulename{KS-PushCmd}.
    \end{itemize}
  \item
    By (2), for
    all~$\cthread{a}{\prio}{\sig_a}{\stackstate} \in \tp'$
    we have~$\sstyped{\sig,\sig_1'',\dots,\sig_n'',\sig_a}{\stackstate}{\tau}{\prio}$.
    We also have for all~$i$,
    that~$\mtyped{\worlds}{\mem_i'}{\sig,\sig_1'',\dots, \sig_n''}$
    and that~$\tp \tpcp \cthread{a_i}{\prio_i}{\sig_i'}{\stackstate_i'}
    \tpcp \tp_i'$ is compatible with~$\sig,\sig_i''$.
    Thus,~$\tp'$ is compatible with~$\sig,\sig_1'',\dots,\sig_n''$.
  \end{enumerate}
\end{proof}

The progress theorem for~{\calcname} is somewhat different from those of
sequential calculi.
One might expect this theorem to state that any machine configuration
consisting of well-typed threads can take a step with Rule~\rulename{D-Par}.
However, this theorem would hold as long as even one thread can take
a step---all other threads could be stuck!
Instead, we show that if all threads of a machine configuration are
well-typed, {\em each thread} can take a step (cases (2) and (4)) in the
statement below unless it is finished executing (case (1)) or is blocked
waiting on another thread to finish (case (3)).

\begin{theorem}[Progress]\label{thm:progress}
  If~$\cthread{a}{\prio}{\sig_a}{\stackstate} \in \tp$
  and~$\sstyped{\sig}{\stackstate}{\tau}{\prio}$
  and~$\mtyped{\worlds}{\mem}{\sig}$
  and~$\tp$ is compatible with~$\sig$
  then
  \begin{enumerate}
  \item $\stackstate = \screturn{\estack}{\kwret{v}}$ or
  \item $\stackstate \estep \stackstate'$ or
  \item $\stackstate = \sreturn{\scp{\stack'}{\kwsync{\shole}}}{\kwtid{b}}$
  and~$\cthread{b}{\prio'}{\sig_b}{\stackstate_b} \in \tp$
  where~$\stackstate_b \neq \screturn{\estack}{\kwret{v}}$
  or
  \item
  $\lconfig{\mem}{\tp}{\cthread{a}{\prio}{\sig_a}{\stackstate}}
  \mstep
  \rconfig{\cthread{a}{\prio}{\sig_a'}{\stackstate'}}
          {\tp'}{\sig'}{\mem'}{\graph'}
          $
  \end{enumerate}
\end{theorem}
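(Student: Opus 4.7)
The plan is to proceed by case analysis on the form of $\stackstate$, with subcase analysis on the top stack frame (in return states) or on the head constructor (in pop states). A preliminary canonical forms lemma for values---obtained by inversion on the value typing rules (so that a value of type $\kwarr{\tau_1}{\tau_2}$ must be a $\lambda$, of type $\kwat{\tau}{\prio'}$ must be $\kwtid{b}$, of type $\kwreft{\tau}$ must be $\kwref{\kwassn}$, of type $\kwcmdt{\tau}{\prio}$ must be $\kwcmd{\prio}{\cmd}$, etc.)---will discharge the ``no stuck elimination'' obligation uniformly.

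For $\ssend{\stack}{e}$, if $e$ is a value apply the rule $\ssend{\stack}{v}\estep\sreturn{\stack}{v}$ (case 2); otherwise $e$ is an elimination form, a let, a fix, or an ifz, and by canonical forms the appropriate rule of the expression stack dynamics fires, again giving case 2. For $\scsend{\stack}{\cmd}$, I case on $\cmd$: each command form either pushes a subexpression and sets up a frame (\rulename{D-Bind1}, \rulename{D-Touch1}, \rulename{D-Dcl1}, \rulename{D-Get1}, \rulename{D-Set1}, \rulename{D-Ret1}) or, in the case of $\kwspawn{\prio'}{\tau}{\cmd}$, directly fires \rulename{D-Create}; all land in case 4. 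For $\screturn{\stack}{\kwret{v}}$, inversion on the stack typing judgment $\stackacceptsc{\sig,\sig_a}{\stack}{\tau}{\tau'}{\prio}$ shows the top frame, if any, must be one that accepts a command result: either $\kwbind{\shole}{x}{\cmd}$ (fire \rulename{D-Bind3}) or $\kwret{\shole}$ (fire \rulename{D-Ret2}); if $\stack=\estack$ then case 1 holds.

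The interesting state is $\sreturn{\scp{\stack'}{f}}{v}$. I case on the frame $f$: $\kwlet{x}{\shole}{e}$ yields a pure substitution step (case 2); $\kwbind{\shole}{x}{\cmd}$ together with canonical forms on $v:\kwcmdt{\tau_1}{\prio}$ fires \rulename{D-Bind2}; $\kwdcl{\tau'}{\kwassn}{\shole}{\cmd}$ fires \rulename{D-Dcl2}; $\kwassign{\shole}{e}$ plus canonical forms on $v:\kwreft{\tau}$ fires \rulename{D-Set2}, and $\kwassign{\kwref{\kwassn}}{\shole}$ fires \rulename{D-Set3}; $\kwret{\shole}$ fires a trivial return. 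The $\kwderef{\shole}$ case uses canonical forms to deduce $v=\kwref{\kwassn}$, and the hypothesis $\mtyped{\worlds}{\mem}{\sig}$ then guarantees that $\mem(\kwassn)$ is defined, so \rulename{D-Get2} fires.

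The main obstacle is the $\kwsync{\shole}$ frame. Here canonical forms give $v=\kwtid{b}$ and inversion on \rulename{Tid} yields $\sigtype{b}{\tau}{\prio'}\in\sig_a$. The compatibility hypothesis on $\tp$ and $\sig$ then guarantees that some entry $\cthread{b}{\prio'}{\sig_b}{\stackstate_b}\in\tp$ exists---this is the essential role of the compatibility invariant, and is exactly why it was engineered into the semantics. If $\stackstate_b=\screturn{\estack}{\kwret{v'}}$ then \rulename{D-Touch2} fires (case 4); otherwise we land in case 3, matching the statement's concession that a thread may legitimately block on an unfinished touch. No other source of stuckness exists, because every other frame/value mismatch is ruled out by canonical forms, and every heap lookup is justified by heap well-typedness.
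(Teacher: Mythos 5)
Your proposal is correct and follows essentially the same route as the paper's proof: a case analysis on the stack state (equivalently, on the last rule of the stack-state typing derivation), using canonical forms to rule out frame/value mismatches, the compatibility hypothesis to obtain the touched thread's entry in the pool (splitting into case 4 via \rulename{D-Touch2} or the blocked case 3), and heap well-typedness to justify the lookup in \rulename{D-Get2}. The only nit is that the signature entry for the touched thread should be located in the signature under which the stack state is typed (written~$\sig$ in the theorem statement) rather than in~$\sig_a$ alone, but this is a notational slip, not a gap.
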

\begin{proof}
By induction on the derivation
of~$\sstyped{\sig}{\stackstate}{\tau}{\prio}$.
\iffull
    \begin{itemize}
    \item \rulename{KS-PopExp}.
      Then~$\stackstate = \ssend{\stack}{e}$
      and~$\stackaccepts{\sig}{\stack}{\tau'}{\tau}{\prio}$
      and~$\etyped{\sig}{\ectx}{e}{\tau'}$. Proceed by induction on the
      latter derivation.
    \item \rulename{KS-PushExp}.
      Then~$\stackstate = \sreturn{\stack}{v}$
      and~$\stackaccepts{\sig}{\stack}{\tau'}{\tau}{\prio}$
      and~$\etyped{\sig}{\ectx}{v}{\tau'}$.
      Proceed by induction on the former derivation.
      \begin{itemize}
      \item \rulename{KS-Bind1}.
        Then~$\stack = \scp{\stack'}{\kwbind{\shole}{x}{\cmd}}$
        and~$\tau' = \kwcmdt{\tau_1}{\prio}$,
        so by canonical forms,~$v = \kwcmd{\prio}{\cmd_1}$.
        Apply~\rulename{D-Bind2}.
      \item \rulename{KS-Touch}.
        Then~$\stack = \scp{\stack'}{\kwsync{\shole}}$
        and~$\tau' = \kwat{\tau_1}{\prio'}$,
        so by canonical forms,~$v = \kwtid{b}$.
        and~$\sigtype{b}{\tau_1}{\prio'} \in \sig$.
        By compatibility,~$\cthread{b}{\prio'}{\sig_b}{\stackstate_b} \in \tp$.
        If~$\stackstate_b = \screturn{\estack}{\kwret{v}}$, then
        apply~\rulename{D-Touch2}.
        Otherwise, we are in case (3) of the conclusion.
      \end{itemize}
    \item \rulename{KS-PopCmd}.
      Then~$\stackstate = \scsend{\stack}{\cmd}$
      and~$\stackacceptsc{\sig}{\stack}{\tau'}{\tau}{\prio}$
      and~$\cmdtyped{\sig}{\ectx}{\cmd}{\tau'}{\prio}$.
      Proceed by induction on the latter derivation.
    \item \rulename{KS-PushCmd}.
      Then~$\stackstate = \screturn{\stack}{\kwret{v}}$
      and~$\stackacceptsc{\sig}{\stack}{\tau'}{\tau}{\prio}$
      and~$\etyped{\sig}{\ectx}{v}{\tau'}{\prio}$.
      Proceed by induction on the former derivation.
      \begin{itemize}
      \item \rulename{KS-Empty}.
        Then we are in case (1) of the conclusion.
      \item \rulename{KS-Bind2}.
        Then~$\stack = \scp{\stack'}{\kwbind{\shole}{x}{\cmd_2}}$.
        Apply rule~\rulename{D-Bind3}.
      \end{itemize}
    \end{itemize}
    \fi
\end{proof}

\subsubsection{Soundness of Type System and Cost Bounds}\label{sec:cost-corr}

We have now shown that the operational semantics and the type system respect
each other in the standard sense of type safety: that well-typed programs
don't become ``stuck''.
In this section, we show that the type system also meets the goal for which
it was designed: that it ensures that well-typed programs lead to well-formed
cost graphs, to which we can apply the cost bound results of
Theorem~\ref{thm:gen-brent}.
In fact, we will show a slightly stronger property that is easier to prove.
A DAG is {\em strongly well-formed} if, whenever there is a {\touch} edge
from~$a$ to~$b$, it is the case that~$a$'s priority is higher than~$b$'s
priority and, in the intuition we have been using above,~$b$ ``knows about''~$a$.
This latter property is formalized by requiring that there exists a path from
the creation of~$a$ to the {\touch} point, other than the one that goes through~$a$.

\begin{defn}\label{def:strongly-well-formed}
  A DAG~$\graph= \dagq{\gthreads}{\spawns}{\syncs}{\reads}$ is
  {\em strongly well-formed} if
  for all~$(a, u) \in \syncs \cup \reads$,
  we have that
  \begin{enumerate}
  \item $\gthread{a}{\prio_a}{\uthread},
  \gthread{b}{\prio_b}{\uthread_1 \tscomp{} u \tscomp{} \uthread_2} \in \gthreads$,
  \item If~$(a, u) \in \syncs$ then $\ple{\prio_b}{\prio_a}$ and
  \item If~$(u', a) \in \spawns$, then there exists a path from~$u'$ to~$u$
    where the first and last edges are continuation edges.
  \end{enumerate}
\end{defn}

\begin{lemma}\label{lem:wf-alt}
  If~$\graph$ is strongly well-formed, then~$\graph$ is well-formed.
\end{lemma}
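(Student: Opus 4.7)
The plan is to verify both bullets of Definition~\ref{def:well-formed} for a strongly well-formed $\graph$, fixing an arbitrary thread $\gthread{a}{\prio}{s \tscomp \dots \tscomp t} \in \gthreads$. I would exploit the fact that strong edges in $\graph$ are exactly continuation, spawn, and touch edges, enabling case analysis on the edge type.

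For the first bullet, fix $u$ with $\sanc{u}{t}$ and $\nanc{u}{s}$. I would prove $\ple{\prio}{\uprio{\graph}{u}}$ by induction on the length of a strong path $P$ from $u$ to $t$, formulated so it applies to any target thread. Let $v_j$ be the first vertex on $P$ lying in $a$. If $j = 0$, then $u \in a$ and its priority is exactly $\prio$. Otherwise, the edge $(v_{j-1}, v_j)$ crosses into $a$ from outside: a spawn entry would force $v_j = s$ and hence $\anc{u}{s}$, contradicting $\nanc{u}{s}$, so the edge must be a touch $(c, v_j) \in \syncs$ with $v_{j-1}$ the last vertex of some thread $c$. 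Condition 2 of strong well-formedness then gives $\ple{\prio}{\uprio{\graph}{c}}$, and the induction hypothesis applied to the strictly shorter prefix $u \to \dots \to v_{j-1}$, now targeting $c$, yields $\ple{\uprio{\graph}{c}}{\uprio{\graph}{u}}$; transitivity finishes the case.

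For the second bullet, let $(u_0, u)$ be a strong edge satisfying the three hypotheses. A continuation edge would equate the priorities of $u_0$ and $u$, and a touch edge would, by condition 2 of strong well-formedness, force $\ple{\uprio{\graph}{u}}{\uprio{\graph}{u_0}}$; both contradict $\nple{\uprio{\graph}{u}}{\uprio{\graph}{u_0}}$. So $(u_0, u)$ is a spawn edge and $u$ is the first vertex of a thread $b$ with $(u_0, b) \in \spawns$. Any strong path witnessing $\sanc{u}{t}$ must cross into $a$ via a touch edge $(c, v) \in \syncs$ (a spawn entry would again force $\anc{u_0}{s}$). Iteratively applying condition 3 of strong well-formedness to this touch, and to any earlier touches encountered while retracing the path back toward $u_0$, produces a path from $u_0$ into $a$ whose final edge lies on $a$'s continuation chain. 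I would take $u'$ to be the vertex immediately preceding that final continuation edge, giving $\sanc{u'}{t}$ via the remainder of $a$ and $\nanc{u}{u'}$ because $u$ is the entry vertex of the disjoint thread $b$.

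The main obstacle is bridging the gap between the path delivered by condition 3 of strong well-formedness---a plain path in $\graph$ with continuation edges at its endpoints---and the genuinely weak witness $\wanc{u_0}{u'}$ required by well-formedness. The informal argument preceding Definition~\ref{def:well-formed} asserts that inter-thread propagation of a thread handle forces at least one weak edge on any such path, and I would need to lift that observation into a structural invariant of the cost semantics of Section~\ref{sec:cost} to formally discharge this obligation. A secondary complication in the first bullet's induction is the case where $u$ happens to be an ancestor of the initial vertex of an intermediate thread $c$; this either must be ruled out by tracing the resulting path back to $s$ and contradicting $\nanc{u}{s}$, or absorbed by strengthening the induction invariant to carry both possibilities.
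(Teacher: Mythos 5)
Your overall strategy (case analysis on the three kinds of strong edges, invoking conditions 2 and 3 of strong well-formedness) is the same as the paper's, but the two difficulties you flag at the end are not side issues --- they are precisely where the proof lives, and neither of your proposed resolutions closes them. For the first bullet, the problematic case is $\anc{u}{s_c}$ for an intermediate touched thread $c$. Your first fix fails outright: $\anc{u}{s_c}$ does \emph{not} imply $\anc{u}{s}$ (e.g.\ $u$ creates $c$ and $c$ is later touched by $a$; $u$ need not be an ancestor of $a$'s start). Your second fix --- strengthening the invariant --- is the right direction, but a two-way disjunction is not enough. The paper's invariant is the trichotomy ``$\anc{u'}{s}$ or $\wanc{u'}{t}$ or $\ple{\prio}{\uprio{\graph}{u'}}$,'' proved by well-founded induction on the ancestor relation rather than on the length of one chosen path. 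The extra $\wanc{u'}{t}$ disjunct is essential, and the create-edge case is discharged by invoking condition 3: the bypass path from the creator to the touch point must start with a continuation edge, which either contradicts the assumption that all outgoing edges are create edges or forces $\anc{u'}{s}$. Your first-bullet argument never invokes condition 3 and has no weak-ancestor escape, so the $\anc{u}{s_c}$ case cannot be closed. (Induction on path length is also shaky here, since the alternative path condition 3 hands you may be longer than the one you started with.)

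For the second bullet, you correctly reduce to a create edge and correctly locate the touch crossing, but you then concede that you cannot produce the weak witness $\wanc{u_0}{u'}$ from the graph alone and would import an invariant from the cost semantics of Section~\ref{sec:cost}. That changes the statement: Lemma~\ref{lem:wf-alt} is a self-contained implication between two graph properties, with no hypothesis about how $\graph$ was generated, and it is used as such in the proof of Theorem~\ref{thm:cost-corr}. The paper closes this step internally: it applies the trichotomy established for the first bullet \emph{along the path supplied by condition 3}, which yields a vertex $u'$ with $\wanc{u_0}{\sanc{u'}{u_2}}$, and then gets $\nanc{u}{u'}$ because $u'$ is an ancestor of the join point of $u$'s thread. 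So the missing ingredient in both halves is the same: the weak-ancestor disjunct of the strengthened induction hypothesis, which is exactly what turns the informal ``the handle must have propagated through memory'' intuition into a provable graph-theoretic fact without appealing to the operational semantics.
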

\iffull
\begin{proof}
  Let $\gthread{a}{\prio}{u_1 \tscomp{} \dots \tscomp{} u_n} \in \gthreads$ and
  let~$\anc{u'}{u_n}$.
  We first show that either~$\anc{u'}{u_1}$ or~$\wanc{u'}{u_n}$ or
  $\ple{\prio}{\uprio{\graph}{u}}$.
  Since the graph is finite and acyclic, we can proceed by well-founded
  induction on~$\anc{}{}$.
  If~$u' = u_n$, the result is clear.
  Otherwise, assume that
  for all~$u''$ such that~$\neqanc{u'}{\anc{u''}{u_n}}$,
  we have~$\anc{u''}{u_1}$ or~$\wanc{u''}{u_n}$ or
  $\ple{\prio}{\uprio{\graph}{u''}}$.
  If~$\anc{u''}{u_1}$ for any such~$u''$, then~$\anc{u'}{u_1}$,
  and if~$\wanc{u''}{u_n}$ for any such~$u''$, then~$\wanc{u'}{u_n}$,
  so suppose that for all such~$u''$ it is the case that
  $\ple{\prio}{\uprio{\graph}{u''}}$.
  Let~$E$ be the set of outgoing edges of~$u'$.
  If any edge in~$E$ is a continuation or {\touch} edge, then we have
  $\ple{\prio}{\ple{\uprio{\graph}{u''}}{\uprio{\graph}{u'}}}$.
  So, there must exist an~{\create} edge~$(u, u'') \in E$ where~$u''$ is
  the first vertex of a thread~$b$.
  If there exists a corresponding {\touch} or weak edge~$(b, u_t)$ in
  the path from~$u''$ to~$u_n$,
  then by assumption there exists a path from~$u'$ to~$u_t$ where the
  first edge is a continuation edge, but this is a contradiction because
  there were assumed to be no continuation edges in~$E$.
  If no corresponding {\touch} edge~$(b, u_t)$ is in the path,
  then~$u_n$ must be in~$b$ or a thread transitively {\create}d by it,
  so~$\anc{u'}{u_1}$, also a contradiction.

  Next, we show the second condition of well-formedness.
  Let~$(u_0, u) \in \graph$ be a strong edge such that~$\sanc{u}{t}$
  and~$\nanc{u_0}{s}$ and $\nple{\uprio{\graph}{u}}{\uprio{\graph}{u_0}}$.
  By strong well-formedness,~$(u_0, u)$ must be an {\create} edge.
  If there is no {\touch} or weak edge on the path from~$u$ to~$t$,
  then~$\anc{u_0}{s}$, a contradiction.
  So there exists an {\touch} or weak edge~$(u_1, u_2)$ on the path
  from~$u$ to~$t$.
  By assumption, there exists a path from~$u_0$ to~$u_2$ where the first
  and last edges are continuation edges.
  By the first condition of well-formedness, along this path,
  there exists~$u'$ such that~$\wanc{u_0}{\sanc{u'}{u_2}}$.
  Because~$u'$ is an ancestor of the join point of the thread containing~$u$,
  we have~$\nanc{u}{u'}$.
\end{proof}
\else
The proof is similar to that of the corresponding theorem of
prior work~\citep{mah-priorities-2018}, but with an extra case to handle
weak edges.
Full details are available in the supplementary material.
\fi

We now prove the {\em soundness} result for {\calcname}: a well-typed
program produces a strongly well-formed cost graph under the operational
semantics.
Lemma~\ref{lem:soundness-mstep} states some key properties of the
step relation on individual threads.
In particular, it states that such a step does not create any cycles in the
larger DAG (item 2),
that it does not create a priority-inverted {\touch} edge (item 3),
that it ``knows about'' any threads it adds to its signature through
creating them itself or through {\touch} or weak edges (item 4),
and that any locations it allocates or writes to on the heap are tagged with
the correct vertex and signature (item 5).

\begin{lemma}\label{lem:soundness-mstep}
  Let~$\sig$,~$\sig_a$, $\stackstate$ and~$\prio$ be such that
  we have~$\sstyped{\sig_a}{\stackstate}{\tau}{\prio}$.
  If
  \[\lconfig{\mem}{\tp}{\cthread{a}{\prio}{\sig_a}{\stackstate}}
         \mstep
         \rconfig{\cthread{a}{\prio}{\sig_a'}{\stackstate'}}
                 {\tp'}{\sig'}{\mem'}
                 {\dagq{\gthreads}{\spawns}{\syncs}{\reads}}
  \]
  then
  \begin{enumerate}
  \item There exists~$\gthread{a}{\prio}{\sthread{u}} \in \gthreads$,
    where~$u$ is a freshly created vertex.
  \item For any edge~$(u, u') \in \spawns \cup \syncs \cup \reads$,
    the vertex~$u'$ is freshly created or is a placeholder for a freshly
    created thread. Furthermore, the collection of edges contains no cycles.
  \item If $(b, u) \in \syncs$, then
    $\sigtype{b}{\tau_b}{\prio_b} \in \sig_a$ and~$\ple{\prio}{\prio_b}$.
  \item If~$\sigtype{b}{\tau_b}{\prio_b} \in \sig_a' \setminus \sig_a$,
    then (a)~$(u, b) \in \spawns$
    or (b)~$(c, u) \in \syncs$
    and~$\cthread{c}{\prio_c}{\sig_c}{\stackstate_c} \in \tp$
    and~$\sigtype{b}{\tau_b}{\prio_b} \in \sig_c$
    or (c)~$(u', u) \in \reads$
    and~$\mement{\kwassn}{v}{u'}{\sig_{u'}} \in \mem$
    and~$\sigtype{b}{\tau_b}{\prio_b} \in \sig_{u'}$.
  \item For all~$\mement{\kwassn}{v}{u'}{\sig_{u'}} \in \mem' \setminus \mem$,
    we have that~$u' = u$ and~$\sig_{u'} \subset \sig_a$.
  \end{enumerate}
\end{lemma}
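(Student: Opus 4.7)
The plan is to proceed by case analysis on the derivation of the step judgment~$\lconfig{\mem}{\tp}{\cthread{a}{\prio}{\sig_a}{\stackstate}} \mstep \rconfig{\cthread{a}{\prio}{\sig_a'}{\stackstate'}}{\tp'}{\sig'}{\mem'}{\graph}$, using the rules in Figures~\ref{fig:cost1} and~\ref{fig:cost2}. For each rule, the five items can be verified largely by inspection of the rule itself, combined, in a few cases, with inversion on the static typing judgment~$\sstyped{\sig_a}{\stackstate}{\tau}{\prio}$ for the stack state.

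Item~(1) is essentially immediate: every rule produces a local graph of the form~$\tgraph{a}{\prio}{u}$ (composed, in \rulename{D-Create}, with a single {\create} edge, or in \rulename{D-Touch2}/\rulename{D-Get2} with a single {\touch}/weak edge), where~$u$ is explicitly marked~$\fresh$ in the premises. Item~(2) follows by inspecting the three rules that add non-continuation edges: in \rulename{D-Create} the edge~$(u,b)$ targets a fresh thread symbol~$b$; in \rulename{D-Touch2} and \rulename{D-Get2} the edges~$(b,u)$ and~$(u',u)$ target the fresh vertex~$u$. Since the fresh vertex or symbol has no pre-existing incoming edges, no cycle can be created.

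For item~(3), only \rulename{D-Touch2} produces sync edges, so I would invert the typing derivation~$\sstyped{\sig_a}{\sreturn{\scp{\stack}{\kwsync{\shole}}}{\kwtid{b}}}{\tau}{\prio}$: first \rulename{KS-PushExp} gives that~$\kwtid{b}$ has type~$\kwat{\tau'}{\prio'}$ for some~$\tau',\prio'$ and that the stack~$\scp{\stack}{\kwsync{\shole}}$ accepts~$\kwat{\tau'}{\prio'}$; then \rulename{KS-Sync} yields~$\ple{\prio}{\prio'}$; finally inversion on \rulename{Tid} gives~$\sigtype{b}{\tau'}{\prio'} \in \sig_a$, which is exactly what is required. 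Item~(4) is handled by a three-way case split matching the three sub-clauses~(a)--(c): \rulename{D-Create} extends~$\sig_a$ with~$\sigtype{b}{\tau}{\prio'}$ and simultaneously adds~$(u,b)$ to the spawn edges; \rulename{D-Touch2} extends~$\sig_a$ with the callee's signature~$\sig'$ while adding a sync edge~$(b,u)$ to the same~$b$; and \rulename{D-Get2} extends~$\sig_a$ with the signature~$\sig'$ attached to the heap cell while adding the weak edge~$(u',u)$. In each case the relevant side condition follows directly from the rule's premises.

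Item~(5) isolates the two rules that modify the heap, \rulename{D-Dcl2} and \rulename{D-Set3}: both update~$\mem$ with a binding of the form~$\mement{\kwassn}{v}{u}{\sig}$ where~$u$ is precisely the freshly created vertex and the recorded signature is~$\sig$, the current signature of~$a$; all other rules leave~$\mem$ unchanged, so~$\mem' \setminus \mem = \emptyset$ and the claim holds vacuously. The main obstacle is item~(3): it is the only item that cannot be read off directly from the operational rule and instead requires chasing through the stack typing rules~\rulename{KS-PushExp}, \rulename{KS-Sync}, and \rulename{Tid} to recover both the signature membership and the priority inequality. Every other item reduces to syntactic inspection of the step rules.
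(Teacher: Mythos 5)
Your proposal is correct and follows essentially the same route as the paper: a case analysis on the step rules, with items (1), (2), and (5) read off by inspection, item (4) split into the three cases \rulename{D-Create}, \rulename{D-Touch2}, and \rulename{D-Get2}, and item (3) obtained by inverting \rulename{D-Touch2} together with the stack typing rule for \touch{} frames. The only difference is that you spell out the inversion chain for item (3) in more detail than the paper does, which is harmless.
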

\iffull
\begin{proof}
  \begin{enumerate}
  \item By inspection of the cost semantics rules.
  \item By inspection of the cost semantics rules.
  \item Only rule~\rulename{D-Touch2} applies.
    The requirements follow from inversion on~\rulename{D-Touch2}
    and~\rulename{KS-Touch}.
  \item By cases on the step rule applied.
    \begin{itemize}
    \item \rulename{D-Create}. Then we
      have~$\sig_a' \setminus \sig_a = \sigtype{b}{\prio'}{\tau}$
      and~$(u, b) \in \spawns$.
    \item \rulename{D-Touch2}. Then we
      have~$\sig_a' \setminus \sig_a = \sig'$
      and~$\cthread{b}{\prio'}{\sig'}{\stackstate_b} \in \tp$.
    \item \rulename{D-Get2}. Then we
      have~$\sig_a' \setminus \sig_a = \sig'$
      and~$\mement{\kwassn}{v}{u'}{\sig'} \in \mem$.
    \end{itemize}
  \item By cases on the step rule applied.
    \begin{itemize}
    \item \rulename{D-Dcl2}. Then we
      have~$\mem' \setminus \mem = \mement{\kwassn}{v}{u}{\sig_a}$.
    \item \rulename{D-Set3}. Then we
      have~$\mem' \setminus \mem = \mement{\kwassn}{v}{u}{\sig}$
      and~$\sig \subset \sig_a$.
    \end{itemize}
  \end{enumerate}
\end{proof}
\else
The proof of each statement follows largely from inspection of the
corresponding cost rules.
\fi

To show that the operational semantics preserves the guarantees of
strong well-formedness, we must introduce two definitions formally connecting
concepts of the DAG model and the operational semantics.
In particular, in the context of graphs, we have said that a vertex~$u$
``knows about'' a thread~$a$ if there exists a path from the creation of~$a$
to~$u$ other than the path that goes through~$a$.
In the context of the operational semantics, we have said that a thread~$b$
``knows about'' another thread~$a$ if~$b$ is in the domain of the signature
carried by~$a$.
The key to showing the soundness theorem is maintaining that these two
properties match up throughout execution.
This is captured by the definitions of two additional compatibility results.
Intuitively, a graph is compatible with a thread pool~$\tp$ if
for all~$\cthread{a}{\prio_a}{\sig_a}{\stackstate_a} \in \tp$,
the last vertex of~$a$ in the graph ``knows about'' all threads in~$\sig_a$
through a path in the graph.
A graph is compatible with a heap~$\mem$ if
for all~$\mement{\kwassn}{v}{u}{\sig} \in \mem$, the vertex~$u$ ``knows about''
all threads in~$\sig$ through a path in the graph.

\begin{defn}
  A graph~$\dagq{\gthreads}{\spawns}{\syncs}{\reads}$
  is {\em compatible} with a thread pool~$\tp$ if
  for all~$\cthread{a}{\prio_a}{\sig_a}{\stackstate_a} \in \tp$,
  there exists~$\gthread{a}{\prio_a}{\uthread_a \tscomp u_a} \in \gthreads$
  such that for all~$\sigtype{b}{\tau_b}{\prio_b} \in \sig_a$,
  there exists~$(u, b) \in \spawns$ and either~$u = u_a$ or
  there exists a path from~$u$ to~$u_a$ in~$\graph$
  whose first edge is not a {\create} edge.
\end{defn}

\begin{defn}
  A graph~$\dagq{\gthreads}{\spawns}{\syncs}{\reads}$
  is {\em compatible} with a heap~$\mem$ if
  for all~$\mement{\kwassn}{v}{u}{\sig} \in \mem$,
  there exists~$\gthread{a}{\prio_a}{\uthread_a} \in \gthreads$
  such that~$u \in \uthread_a$
  and for all~$\sigtype{b}{\tau_b}{\prio_b} \in \sig$,
  there exists~$(u', b) \in \spawns$ and
  there exists a path from~$u'$ to~$u$ in~$\graph$
  whose first edge is not a {\create} edge.
\end{defn}

Lemma~\ref{lem:soundness-gstep} shows that one parallel step of the operational
semantics preserves strong well-formedness of a graph and does not introduce
cycles into the graph.
The lemma also assumes both of the relevant compatibility properties hold,
and ensures that they are preserved.

\begin{lemma}\label{lem:soundness-gstep}
  Let~$\sig$ and~$\tp$ be such that for
  all~$\cthread{a}{\prio_a}{\sig_a}{\stackstate_a} \in \tp$,
  we have~$\sstyped{\sig_a,\sig}{\stackstate_a}{\tau}{\prio_a}$.
  Let~$\graph$ be strongly well-formed, acyclic and compatible
  with both~$\tp$ and~$\mem$.
  If~$\gconfig{\sig}{\mem}{\graph}{\tp} \gstep
  \gconfig{\sig'}{\mem'}{\graph'}{\tp'}$,
  then~$\graph'$ is strongly well-formed, acyclic and compatible
  with~$\tp'$ and~$\mem'$.
\end{lemma}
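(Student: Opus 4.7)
The plan is to proceed by inversion on the parallel step rule \rulename{D-Par}, which decomposes the step into individual thread steps $\lconfig{\mem}{\tp}{\cthread{a_i}{\prio_i}{\sig_i}{\stackstate_i}} \mstep \rconfig{\cthread{a_i}{\prio_i}{\sig_i'}{\stackstate_i'}}{\tp_i'}{\sig_i''}{\mem_i'}{\graph_i'}$, and then appeal to \lemref{soundness-mstep} for each individual step. The new graph $\graph'$ is $\graph \scomp{a_1} \graph_1' \cdots \scomp{a_n} \graph_n'$. I will verify each of the three conclusions (acyclicity, strong well-formedness, compatibility with $\tp'$ and $\mem'$) in turn.

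For \emph{acyclicity}, by item~1 of \lemref{soundness-mstep} each $\graph_i'$ introduces a fresh vertex $u_i$, and by item~2 any new edges either are continuation edges into $u_i$ or target a freshly-created thread's first vertex. Since all targets of new edges are fresh, no new edge can close a cycle with an existing vertex in $\graph$, and the newly introduced vertices have no outgoing edges yet, so $\graph'$ remains acyclic. For \emph{strong well-formedness}, I need to verify conditions (2) and (3) of \defref{strongly-well-formed} for each newly added \touch{} or weak edge $(b, u_i)$. Condition~(2) (the priority constraint on \touch{} edges) follows from item~3 of \lemref{soundness-mstep}, which hands me both $\sigtype{b}{\tau_b}{\prio_b} \in \sig_{a_i}$ and $\ple{\prio_{a_i}}{\prio_b}$. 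Condition~(3) (existence of a path from the \create{} of $b$ to $u_i$ starting and ending with continuation edges) is where the compatibility hypotheses do the real work: by compatibility of $\graph$ with $\tp$ (for \touch{} edges) or with $\mem$ (for weak edges) combined with item~4 of \lemref{soundness-mstep}, we already have a path from the \create{}-point of $b$ to the previous last vertex of $a_i$ whose first edge is not a \create{} edge; extending by the continuation edge to $u_i$ yields the required path.

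For \emph{compatibility of $\graph'$ with $\tp'$}, I consider each thread in $\tp'$. Existing threads in $\tp$ whose signatures did not change are handled by the fact that $\graph'$ extends $\graph$ with a continuation edge at the end of each stepping thread, preserving all required paths. For a stepping thread $a_i$ whose signature gained an entry $\sigtype{b}{\tau_b}{\prio_b}$, item~4 of \lemref{soundness-mstep} gives three cases: (a) $a_i$ itself just \create{}d $b$, so $(u_i, b) \in \spawns$ and $u_i$ is the new last vertex; (b) $a_i$ just \touch{}ed some $c$ whose signature contained $b$, in which case the path witnessing compatibility of $\graph$ with $\tp$ at $c$ plus the new \touch{} edge into $u_i$ gives the needed path; (c) $a_i$ read a location whose signature contained $b$, handled symmetrically using compatibility with $\mem$ and the new weak edge. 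For newly spawned threads in $\tp_i' \setminus \tp$, their signatures are inherited from their parent $a_i$, and compatibility follows from the compatibility of $\graph$ with $\tp$ at $a_i$ together with the new \create{} edge.

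For \emph{compatibility of $\graph'$ with $\mem'$}, only rules \rulename{D-Dcl2} and \rulename{D-Set3} modify the heap, and item~5 of \lemref{soundness-mstep} tells me that any new binding $\mement{\kwassn}{v}{u_i}{\sig_{u_i}}$ is tagged with the fresh vertex $u_i$ and a signature $\sig_{u_i} \subseteq \sig_{a_i}$; the required paths for the entries of $\sig_{u_i}$ then follow from compatibility of $\graph$ with $\tp$ at $a_i$, appended with the continuation edge to $u_i$. The main obstacle I anticipate is the case analysis for strong well-formedness condition~(3) and the corresponding compatibility cases: these require carefully threading together the path supplied by the old compatibility hypothesis with the new continuation edge, and making sure that the ``first edge is not a \create{} edge'' requirement is maintained—in particular when $u$ coincides with $u_a$ in the old compatibility witness, so that the new witnessing path begins directly with the continuation edge to $u_i$.
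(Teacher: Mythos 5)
Your proposal is correct and follows essentially the same route as the paper's proof: inversion on \rulename{D-Par} to reduce to individual thread steps, then Lemma~\ref{lem:soundness-mstep} items 1--5 combined with the compatibility hypotheses on $\graph$ to discharge acyclicity, the two conditions of strong well-formedness, and compatibility with $\tp'$ and $\mem'$, including the same three-way case split on how a signature entry was acquired. The subtleties you flag (threading the old compatibility path through the new continuation/\touch/weak edge, and the degenerate case where the witness vertex is already the thread's last vertex) are exactly the points the paper's proof handles.
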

\begin{proof}
  By inversion on~\rulename{D-Par},
  we have~$\graph' = \graph \scomp{a_1} \graph_1' \dots \scomp{a_n} \graph_n'$
  where for each~$i \in [1, n]$,
  \[
  \lconfig{\mem}{\tp}{\cthread{a_i}{\prio_i}{\sig_i}{\stackstate_i}}
         \mstep
         \rconfig{\cthread{a_i}{\prio_i}{\sig_i'}{\stackstate_i'}}
                 {\tp_i'}{\sig_i''}{\mem_i'}
                 {\graph_i'}
  \]
  Let~$\graph_i' = \dagq{\gthreads_i}{\spawns_i}{\syncs_i}{\reads_i}$.
  We first show that~$\graph'$ is strongly well-formed.
  Because~$\graph$ is well-formed, it suffices to show that
  all {\touch} edges in~$\syncs_1 \cup \dots \cup \syncs_n$
  obey the requirements of Definition~\ref{def:strongly-well-formed}.
  Let~$(b, u) \in \syncs_i$.
  By Lemma~\ref{lem:soundness-mstep},
  $\gthread{a_i}{\prio_i}{\sthread{u}} \in \gthreads_i$
  and~$\sigtype{b}{\tau_b}{\prio_b} \in \sig_i$ and~$\ple{\prio_i}{\prio_b}$.
  By compatibility and the definition of sequential graph composition,
  there exists~$(u', b) \in \graph$ and a path from~$u'$
  to~$u$ whose first edge is a continuation edge.

  We next show that~$\graph'$ is acyclic.
  Because~$\graph$ is acyclic, it suffices to show that for all
  added edges~$(u, u')$, we have~$\nanc{u'}{u}$.
  This is immediate because, by Lemma~\ref{lem:soundness-mstep},
  for all such edges,~$u' \not\in \graph$ and the introduced edges contain
  no cycles.

  Next, we show that~$\graph'$ is compatible with~$\tp'$.
  Let~$\cthread{a_i}{\prio_i}{\sig_i'}{\stackstate_i'} \in \tp'$.
  By Lemma~\ref{lem:soundness-mstep},
  $\gthread{a_i}{\prio_i}{\sthread{u}} \in \gthreads_i$
  and if~$\sigtype{b}{\tau_b}{\prio_b} \in \sig_i' \setminus \sig_i$
  then (1)~$(u, b) \in \spawns_i$ or
  (2)~$(c, u) \in \syncs_i$
  and~$\sigtype{b}{\tau_b}{\prio_b} \in \sig_c$
  or (3)~$(u', u) \in \reads_i$
  and~$\mement{\kwassn}{v}{u'}{\sig_{u'}} \in \mem$
  and~$\sigtype{b}{\tau_b}{\prio_b} \in \sig_{u'}$.
  In case (1), compatibility is immediate.
  In case (2), let~$u_c$ be the last vertex of thread~$c$ in~$\graph$.
  By compatibility of~$\graph$, there exists
  $(u', b) \in \graph$ where~$u' = u_c$ or there is a path from~$u'$
  to~$u_c$ whose first edge is not a {\create} edge.
  The edge~$(c, u)$ creates a path from~$u'$ to~$u$
  in~$\graph'$ whose first edge is not a {\create} edge.
  In case (3), by compatibility of~$\graph$, there exists
  $(u'', b) \in \graph$ where~$u'' = u'$ or there is a path from~$u''$ to~$u'$
  whose first edge is not a {\create} edge.
  The edge~$(u', u)$ creates a path from~$u''$ to~$u$ whose first edge is not
  a {\create} edge.

  Finally, we show that~$\graph'$ is compatible with~$\mem'$.
  Let~$\mement{\kwassn}{v}{u}{\sig_u} \in \mem_i'$
  and~$\sigtype{b}{\tau_b}{\prio_b} \in \sig_u$.
  By Lemma~\ref{lem:soundness-mstep},
  $\gthread{a_i}{\prio_i}{\sthread{u}} \in \gthreads_i$
  and~$\sigtype{b}{\tau_b}{\prio_b} \in \sig_i$.
  By compatibility of~$\graph$,
  there exists~$(u', b) \in \spawns$ and a path from~$u'$ to~$u$ in~$\graph'$
  whose first edge is not a {\create} edge.
\end{proof}

The soundness theorem is then simply an inductive application of
Lemma~\ref{lem:soundness-gstep}.

\begin{theorem}\label{thm:soundness}
  Let~$\cmd$ be such that~$\cmdtyped{\esig}{\ectx}{\cmd}{\tau}{\prio}$.
  If
  \[
  \gconfig{\esig}{\emem}{\egraph}
          {\cthread{a}{\prio}{\esig}{\ssend{\estack}{\cmd}}}
  \gstep^*
  \gconfig{\sig}{\mem}{\graph}
          {\tp}
  \]
  then~$\graph$ is strongly well-formed and acyclic.
\end{theorem}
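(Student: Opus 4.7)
The plan is to proceed by induction on the length of the $\gstep^*$ derivation, carrying an invariant that bundles every premise required by \lemref{soundness-gstep}: namely, the graph is strongly well-formed and acyclic, is compatible with both the current thread pool and the current heap, the heap is well-typed at the current signature, and every thread in the pool is well-typed with respect to its own signature extended with the global one.

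For the base case, the initial configuration $\gconfig{\esig}{\emem}{\egraph}{\cthread{a}{\prio}{\esig}{\ssend{\estack}{\cmd}}}$ satisfies every conjunct vacuously: the empty graph has no edges and no vertices, so strong well-formedness, acyclicity, and compatibility with $\emem$ all hold trivially; compatibility with the pool holds because the single thread carries the empty signature $\esig$; and well-typedness of the lone stack state follows from the assumption $\cmdtyped{\esig}{\ectx}{\cmd}{\tau}{\prio}$ together with the stack-typing rules \rulename{KS-Empty} and \rulename{KS-PopCmd}.

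For the inductive step, suppose the invariant holds at $\gconfig{\sig}{\mem}{\graph}{\tp}$ and that this configuration steps to $\gconfig{\sig'}{\mem'}{\graph'}{\tp'}$ via \rulename{D-Par}. First I would invoke the preservation result (item~3 of \lemref{preservation}) to obtain that $\mem'$ is well-typed at $\sig'$, that $\tp'$ is compatible with $\sig'$, and that every thread in $\tp'$ is well-typed. These are precisely the typing hypotheses required by \lemref{soundness-gstep}; the graph-side hypotheses (strong well-formedness, acyclicity, and the two compatibility conditions at $\graph$) hold by the inductive hypothesis. Applying \lemref{soundness-gstep} then yields that $\graph'$ is strongly well-formed, acyclic, and compatible with $\tp'$ and $\mem'$, reinstating the invariant. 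The conclusion of the theorem is then the projection of the invariant onto its first two conjuncts.

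The main obstacle, as with most soundness results of this style, is choosing a strong enough invariant rather than doing any new reasoning. In particular, it would not suffice to carry only well-formedness: the rules \rulename{D-Touch2} and \rulename{D-Get2} both enlarge the acting thread's signature with threads it learns about either through another thread's signature or through the heap metadata recorded at a location. Without carrying compatibility of $\graph$ with $\tp$ and with $\mem$ throughout the induction, there would be no way to exhibit the non-create graph path demanded by clause~(3) of \defref{strongly-well-formed} when a subsequent touch edge to one of those inherited threads is introduced. Once the invariant correctly includes both compatibility conditions, the induction closes via a single appeal to each of \lemref{preservation} and \lemref{soundness-gstep}.
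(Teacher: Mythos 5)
Your proposal is correct and matches the paper's proof, which is stated in one line as ``an inductive application of Lemma~\ref{lem:soundness-gstep}''; your version simply makes explicit the invariant being carried and the appeal to item~3 of the preservation theorem to re-establish the typing hypotheses of \lemref{soundness-gstep} at each step, which the paper leaves implicit. No gaps.
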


Having shown that a well-typed program yields a well-formed cost graph, we
can now use the cost bounds of Section~\ref{sec:dag} to bound the length of
a program execution by considering an execution of the program using the
operational semantics to be a schedule of the resulting DAG.
There is one complication here: our bounds apply to prompt schedules, and
rule~\rulename{D:Par}, by construction, can choose any set of threads to
execute at a given time; it is not required to be prompt.
%
%
A na{\"\i}ve scheduling algorithm that would be easy to implement in our
formal semantics would be to simply select as many threads as possible,
ordered by priority, up to a given number of processors~$P$.
Such a scheduler would meet the formal bounds of promptness but a real-world
implementation of it would require too much synchronization to be practical.
These concerns are outside the scope of this paper, and so we instead add as
a condition of the theorem that, in an execution that is under
consideration,~\rulename{D:Par} chooses threads to execute in a prompt manner.

\begin{theorem}\label{thm:cost-corr}
  Let~$\cmd$ be such that~$\cmdtyped{\esig}{\ectx}{\cmd}{\tau}{\prio}$.
  Suppose
  \[
  \gconfig{\esig}{\emem}{\egraph}
          {\cthread{a}{\prio}{\esig}{\ssend{\estack}{\cmd}}}
  \gstep^*
  \gconfig{\sig}{\mem}{\graph}
          {\tp}
  \]
  and at each step in this execution, threads are chosen in a prompt
  manner.
  Let~$\prioworkof{\compwork{}{a}}{\psnlt{\prio}}$ be the competitor
  work of~$a$ in~$\graph$,
  and~$\longsp{\compwork{}{a}}{a}$ be the~$a$-span.
  If thread~$a$ is active for~$\resptimeof{a}$ steps during the
  execution, then
  \[
  \resptimeof{a} \leq 
  \frac{1}{P}(
  \prioworkof{\compwork{}{a}}{\psnlt{\prio}} +
      (P-1)\longsp{\compwork{}{a}}{t})
      \]
\end{theorem}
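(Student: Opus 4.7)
The plan is to reduce Theorem~\ref{thm:cost-corr} to Theorem~\ref{thm:gen-brent} by viewing the multi-step execution under $\gstep^*$ as an admissible prompt schedule of the DAG $\graph$ produced by the cost semantics. First I will invoke Theorem~\ref{thm:soundness} on the derivation to conclude that the produced $\graph$ is strongly well-formed and acyclic, and then apply Lemma~\ref{lem:wf-alt} to upgrade this to well-formedness in the sense of Definition~\ref{def:well-formed}. This takes care of the well-formedness precondition of Theorem~\ref{thm:gen-brent}.

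Next I will define the schedule formally: time step $j$ of the schedule corresponds to the $j$-th use of rule~\rulename{D-Par}, and the vertices assigned at step $j$ are exactly those vertices $u$ freshly allocated by the per-thread $\mstep$ derivations in that application of~\rulename{D-Par} (these are unique by Lemma~\ref{lem:soundness-mstep}, item~1). I then need to check two properties of this schedule. Promptness is essentially given by hypothesis, since the statement restricts attention to executions in which~\rulename{D-Par} chooses threads in a prompt manner; I would make this precise by noting that a thread is ``ready'' (has an available next vertex) in the operational semantics exactly when all of its strong parents (continuation, {\create}, and {\touch} predecessors) have already been executed, which matches the notion of readiness used by Theorem~\ref{thm:gen-brent}.

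The main obstacle is admissibility: I must show that for every weak edge $(u',u) \in \reads$ added to the DAG, the schedule executes $u'$ strictly before $u$. This follows directly from the way weak edges are generated by rule~\rulename{D-Get2}: a weak edge $(u',u)$ is introduced only when the current heap $\mem$ already contains an entry $\mement{\kwassn}{v}{u'}{\sig'}$, and such entries are placed in the heap only by the prior executions of \rulename{D-Dcl2} or \rulename{D-Set3} that allocated or updated $\kwassn$. Because the heap is threaded monotonically through $\gstep$ and each such write happens at the step when its vertex is executed, $u'$ must have been scheduled at a strictly earlier step than $u$. Hence every weak edge is respected and the schedule is admissible.

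Finally, with the DAG well-formed and the schedule admissible and prompt on $P$ cores, I will apply Theorem~\ref{thm:gen-brent} to the thread $a$ of priority $\prio$ to obtain exactly the stated bound on $\resptimeof{a}$, noting that the definition of response time under the operational semantics (the number of $\gstep$ steps in which $a$ is active between its first vertex becoming ready and its last vertex being executed) coincides with the definition used by Theorem~\ref{thm:gen-brent} under the correspondence between $\gstep$ steps and schedule time steps.
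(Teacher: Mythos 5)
Your proposal is correct and follows essentially the same route as the paper's proof: construct the schedule from the \rulename{D-Par} steps, get promptness from the hypothesis, get well-formedness from Theorem~\ref{thm:soundness} and Lemma~\ref{lem:wf-alt}, and establish admissibility by observing that a weak edge added by \rulename{D-Get2} points from a vertex whose heap entry was written at an earlier step, then conclude via Theorem~\ref{thm:gen-brent}. The only cosmetic difference is that the paper packages your direct "heap entries are written when their vertex executes" argument as the heap-compatibility invariant of Lemma~\ref{lem:soundness-gstep}, applied inductively.
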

\begin{proof}
  We construct a schedule of~$\graph$ in which a time step of the schedule
  corresponds to a parallel step of the execution and, at each step, the
  vertices added to the graph by the operational semantics are the vertices
  executed by the schedule.
  By assumption, this schedule is prompt.
  We must show that it is admissible.
  Consider a step~$\gconfig{\sig_i}{\mem_i}{\graph_i}{\tp_i} \gstep
  \gconfig{\sig_{i+1}}{\mem_{i+1}}{\graph_{i+1}}{\tp_{i+1}}$
  that adds a weak edge~$(u, u')$ to~$\graph_i$.
  Such an edge is only created by rule~\rulename{D-Get2} and is created
  at the time that~$u'$ is executed.
  It thus suffices to show that at such a time,~$u$ will have already
  been executed.
  By inversion on~\rulename{D-Get2}, we
  have~$\mement{s}{v}{u}{\sig_{u}} \in \mem_i$.
  By inductive application of Lemma~\ref{lem:soundness-gstep},
  we have that~$\graph_i$ is compatible with~$\mem_i$, and so~$u \in \graph_i$,
  meaning that~$u$ has already been executed.

  The constructed schedule is thus prompt and admissible.
  By Theorem~\ref{thm:soundness},~$\graph$ is strongly well-formed
  (and thus well-formed by Lemma~\ref{lem:wf-alt}).
  The result then follows immediately from Theorem~\ref{thm:gen-brent}.
\end{proof}

\subsection{Extensions of {\calcname}}
The calculus of the last two sections represents the essence of an imperative
parallel language.
Of course, the implementations and case studies of the following sections use
additional features not present in {\calcname}, but many such features could
be added fairly conservatively.
For example, the case studies use atomic operations
like compare-and-swap (CAS) to implement concurrent data structures.
The CAS operation is non-blocking and so can be added to {\calcname} without
fear of creating priority inversions.
As a proof of concept, we present representative inference rules for
adding CAS to the dynamic semantics:
\begin{mathpar}
  \Rule{D-CAS1}
       {u \fresh\\
         \mem(\kwassn) = \memrent{v}{u'}{\sig'}\\
         v = v_{\text{old}}
       }
     {
       \phantom{\mstep}
       \lconfig{\mem}{\tp}{\cthread{a}{\prio}{\sig,\sigrtype{\kwassn}{\tau'}}
         {\sreturn{\scp{\stack}{\kw{cas}(\kwref{\kwassn}, v_{\text{old}},
               \shole)}}{v_{\text{new}}}}}
       \\\mstep
       \rconfig{\cthread{a}{\prio}{\sig,\sigrtype{\kwassn}{\tau'}}
         {\screturn{\stack}{\kwret{\kwnumeral{1}}}}}
               {\etp}{\esig}
               {\mem[\mement{\kwassn}{v_{\text{new}}}{u}{\sig}]}
               {\tgraph{a}{\prio}{\sthread{u}}}
     }
     \and
     \Rule{D-CAS2}
       {u \fresh\\
         \mem(\kwassn) = \memrent{v}{u'}{\sig'}\\
         v \neq v_{\text{old}}
       }
     {
       \phantom{\mstep}
       \lconfig{\mem}{\tp}{\cthread{a}{\prio}{\sig,\sigrtype{\kwassn}{\tau'}}
         {\sreturn{\scp{\stack}{\kw{cas}(\kwref{\kwassn}, v_{\text{old}},
               \shole)}}{v_{\text{new}}}}}
       \\\mstep
       \rconfig{\cthread{a}{\prio}{\sig,\sigrtype{\kwassn}{\tau'}}
         {\screturn{\stack}{\kwret{\kwnumeral{0}}}}}
               {\etp}{\esig}
               {\mem}
               {\tgraph{a}{\prio}{\sthread{u}}}
     }
\end{mathpar}
\else
The soundness theorem for the type system states that well-typed programs
have well-formed cost graphs.
\begin{theorem}\label{thm:soundness}
  Let~$\cmd$ be such that~$\cmdtyped{\esig}{\ectx}{\cmd}{\tau}{\prio}$.
  If
  \[
  \gconfig{\esig}{\emem}{\egraph}
          {\cthread{a}{\prio}{\esig}{\ssend{\estack}{\cmd}}}
  \gstep^*
  \gconfig{\sig}{\mem}{\graph}
          {\tp}
  \]
  then~$\graph$ is well-formed and acyclic.
\end{theorem}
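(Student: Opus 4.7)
The plan is to proceed by induction on the length of the execution $\gstep^*$, but the statement as given is too weak to serve directly as an inductive hypothesis. Well-formedness (Definition~\ref{def:well-formed}) is a global statement about paths in the DAG, so proving it is preserved by a single step would require re-examining arbitrarily long paths. To get an inductively tractable invariant I would introduce an auxiliary, stronger property, which I will call \emph{strong well-formedness}: for every $(b, u) \in \syncs$ we require both $\ple{\uprio{\graph}{u}}{\uprio{\graph}{b}}$ and the existence of a non-trivial path in $\graph$ from the {\create} vertex of $b$ to $u$ whose first edge is a continuation edge (rather than the {\create} edge of $b$ itself); for every weak edge $(u', u) \in \reads$ the analogous condition should hold for the thread of the last writer. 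A direct graph-theoretic argument, using only that a {\create} edge is the sole way to leave a thread at its head, then shows that strong well-formedness implies well-formedness: any cross-thread strong path ending at~$t$ must pick up a {\create} edge, and the auxiliary path furnished by strong well-formedness supplies the weak-ancestor witness~$u'$ demanded by the second bullet of Definition~\ref{def:well-formed}.

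The heart of the argument is then a single-step preservation lemma. To state it I would introduce two compatibility invariants linking the runtime state to the graph: (i)~graph--thread-pool compatibility, which says that whenever $\sigtype{b}{\tau_b}{\prio_b} \in \sig_a$ for some $\cthread{a}{\prio_a}{\sig_a}{\stackstate_a} \in \tp$, there is a path from the {\create} vertex of~$b$ to the last vertex of~$a$ in $\graph$ whose first edge is a continuation edge; and (ii)~graph--heap compatibility, the analogous property for the signature stored with each heap cell $\mement{\kwassn}{v}{u}{\sig}$. The lemma states that one application of rule~\rulename{D-Par} preserves strong well-formedness, acyclicity, and both compatibility properties. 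I would prove it by cases on the per-thread rules. The interesting cases are \rulename{D-Create} (which trivially satisfies compatibility since $(u, b)$ is the fresh {\create} edge), \rulename{D-Touch2} and \rulename{D-Get2} (which add a {\touch} or weak edge and augment $\sig_a$ with the entries ``inherited'' from the touched thread or the writer of the cell), and \rulename{D-Dcl2}/\rulename{D-Set3} (which record a fresh vertex and signature on a heap cell). For each of these, compatibility of the pre-state, plus the fact that the graph step prepends a fresh vertex to thread~$a$ and adds a continuation edge into it, gives exactly the path required by the post-state invariant; the priority constraint of strong well-formedness for \rulename{D-Touch2} follows from the premise $\ple{\prio}{\prio'}$ in~\rulename{Touch} together with the signature information carried by the thread, which is supplied by the type-safety argument of Lemma~\ref{lem:preservation}.

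The main obstacle will be the weak-edge case \rulename{D-Get2}: the newly added weak edge $(u', u)$ goes from the previous writer of $\kwassn$ to the current reader, and the reader's signature gains all of the writer's promises from the heap. To restore graph--thread-pool compatibility we need, for every $\sigtype{b}{\tau_b}{\prio_b}$ transferred, a path from the {\create} vertex of~$b$ to~$u$ whose first edge is a continuation edge. Graph--heap compatibility gives such a path ending at $u'$, and the new weak edge $(u', u)$ followed by the continuation edge into $u$ extends it; but care is needed to check that the first edge of the combined path is still a continuation edge, which is true because the path from graph--heap compatibility begins with one. Acyclicity is immediate throughout since every newly added edge targets either a fresh vertex $u$ or a fresh thread placeholder. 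Once the single-step lemma is established, a straightforward induction on the length of $\gstep^*$ (with base case given by the empty initial graph, which is vacuously strongly well-formed and compatible with the singleton initial pool and empty heap) yields strong well-formedness of the final~$\graph$, and hence well-formedness by the implication above.
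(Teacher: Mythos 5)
Your proposal is correct and follows essentially the same route as the paper: the paper likewise introduces a \emph{strongly well-formed} notion (Definition~\ref{def:strongly-well-formed}) requiring the priority condition on {\touch} edges plus an alternate continuation-initial path from each thread's {\create} point to its {\touch}/read point, proves it implies well-formedness (Lemma~\ref{lem:wf-alt}), maintains exactly your two compatibility invariants between the graph and the thread pool and heap, establishes a single-step preservation lemma (Lemma~\ref{lem:soundness-gstep}) by cases on the per-thread rules with the same case analysis for \rulename{D-Create}, \rulename{D-Touch2}, \rulename{D-Get2}, \rulename{D-Dcl2}, and \rulename{D-Set3}, and concludes by induction on the length of the execution.
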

The proof of this theorem consists of showing that all steps maintain two
invariants:
\begin{enumerate}
\item No strong edges go from lower to higher priority
\item $\sig$ correctly reflects the ``knows about'' relation motivated above.
\end{enumerate}
These invariants respectively imply the two well-formedness requirements
of Section~\ref{sec:dag}.
Full proof details are available in the supplementary material.
\fi

\vspace{2ex}
\section{Implementation of \SYS} \label{sec:impl}

This section presents the design and implementation of \SYS, our prototype
task-parallel platform that supports parallel interactive applications.
\SYS is based on an open-source implementation~\cite{CilkF-impl}
of Cilk (a parallel dialect of C/C++) called Cilk-F~\cite{SingerXuLe19}
that extends Cilk with support for futures.    
The implementation of \SYS consists of two main components, a type system to
rule out priority inversions (closely following the typing rules discussed in
\secref{lang}) and a runtime scheduler that automates load balancing while
prioritizing high-priority tasks over lower-priority ones.  
\iffull 
This section discusses the implementations of these components.  Before we
discuss each of these components, we first briefly discuss the programming
interface supported by \SYS for writing interactive parallel applications.
\fi

\subsection{Programming Interface}


\paragraph{Thread creation.} In \SYS, like in \calcname{}, a
function $f$ can invoke
another function $g$ with \create, which indicates that the execution of $g$
is logically in parallel with the continuation of $f$ after \create.  A
function invocation prefixed with \create returns a handle to the new thread,
on which one can later invoke \touch to ensure that the thread
terminates before the control passes beyond the \touch statement.  Since a
thread handle can be stored in a data structure or global variable and
retrieved later, the use of \create and \touch can generate irregular
parallelism with arbitrary dependences.  In \SYS, as is common in C-like
languages, it is possible to allocate a variable of thread handle type
without associating it to a thread, and later pass this variable
by reference to \create, to associate it with the created thread.
This is in contrast to \calcname{}, where the allocation of the handle
and the creation of the thread happen simultaneously.
\footnote{\SYS
additionally supports \spawn and \sync for nested parallelism.
The use of \spawn and \sync can be
subsumed by \create and \touch from the type checking perspective and hence we
omit the discussion here.}

\paragraph{I/O Operations.} \SYS supports the use of I/O operations via a
special type of thread, called an \ioF, that performs an I/O operation in a
latency-hiding way.  Specifically, \SYS provides special versions of  the
\cilkRead and \cilkWrite functions that behave similarly to the
Linux \texttt{read} and \texttt{write} except that they return a \ioF
reference representing the I/O operation.  Upon invocation,
\cilkRead and \cilkWrite create a thread to perform the I/O without
occupying the processor, and the returned \ioF can be used to wait on the I/O
by calling \touch on it.

\subsection{Type System}

The type system in \SYS  does not provide full type safety guarantees,
as C++ is not type safe.  Nevertheless, provided that the programmer follows a
set of simple rules, the C++-based type system can ensure that a program that
type checks will result in strongly well-formed DAGs when executed.  The type
system enables us to type check moderately large benchmarks that implement
interesting functionalities involving the use of low-level system calls and
concurrent data structures (discussed in \secref{eval-app}).

\subheading{Enforcing Typing Rules} 

We utilize templates and other C++11 language features to encode the type
system.  In the C++ encoding, each priority is represented as a \code{class}.
The relationship between two priorities is captured through the class
hierarchy via inheritance; if priority $\rho$ inherits from priority $\rho'$
or some descendant of $\rho'$, then $\rho \succ \rho'$ (i.e., $\rho$ has
higher priority than $\rho'$).  Such relationships can be tested at compile
time using \code{is\_base\_of}, which tests whether one \code{class} is either
the same as or the ancestor of another.  Unlike in \calcname{}, priorities are
thus user-defined types rather than a pre-defined set of constants.

In \calcname{}, there is a separation between the command layer and expression
layer.  In \SYS, the separation is not as clear.  However, we must enforce
restrictions on which functions can be invoked with \create (generating a
handle that can be \touch{}ed later) and which function can execute \touch,
because the priority of such functions must be retrievable at compile time in
order to enforce the typing rules.  We require these functions to be wrapped
in a \code{command class} whose type relies on a template that specifies its
execution priority.  For ease of discussion, we will refer to such a function
as a \code{command} function.  Unlike in {\calcname}, \create is not a command
--- code at any priority may safely invoke a function with \create; this
causes no difficulties in enforcing the typing guarantees.  Also unlike in
\calcname{}, code in \SYS does not require special syntax for invoking an
expression (e.g., function that is not a \code{command}) within a command.

The encoding of the type system is realized by C++ macros that transform
\create, \touch, and declarations / invocations of command functions into the
necessary C++ encodings.\footnote{We additionally provide macros for declaring
and defining a \code{command} function to ease the use of \code{command}
functions.}  The templated types of \code{command} functions allow their
priority to be known at compile time, and the type system checks for priority
inversion at the execution of \touch.  First, a function invoked with \create
(which must be a \code{command} function) returns a thread handle whose type
is templated with its priority and return type (i.e., what its corresponding
thread returns when done executing, which may be void).  Second, an
\touch can only be executed from within a \code{command} function, and 
\touch on a thread handle \code{fptr} is translated to:
\begin{lstlisting}[language={[11]C++}, firstnumber=last]
fptr->touch();
static_assert(is_base_of<this->Priority, 
                         fptr->Priority>::value,
  "ERROR: priority inversion on future touch");
\end{lstlisting}
The static assert ensures that the thread invoking the
\touch has priority lower than or equal to that of the thread whose handle is
\touch{}ed, causing a compiler error otherwise.  

Lastly, we enforce that a \code{command} function $g$, if invoked by another
\code{command} function $f$, must be invoked with \create or
inherits the priority of $f$.\footnote{Currently this is enforced by 
name mangling \code{command} functions which can be circumvented, but in
principle this can be enforced with better compiler support.} Doing so
ensures that another \code{command} function $h$ joining with $f$ (with
lower priority than $f$ but higher priority than $g$) does not suffer from
priority inversion by waiting on $g$. 
In \calcname{} such an issue does not arise because call is an expression
whereas \create is a command, and therefore the two do not mix.  This issue is
an artifact of the fact that the distinction between the command and the
expression is not clear in \SYS.  

\subheading{Discussion: Type Safety}

Ideally we would like to guarantee that programs which type check
using our API will always generate strongly well-formed DAGs when executed.
However,
we cannot make this guarantee in full because C++ is not a type-safe language.
Nevertheless, provided that the programmer
follows a few simple rules, our type system can statically prevent
cases of priority inversions, and a program that type checks will result in
strongly well-formed DAGs when executed. 

\iffull
The first rule is that the programmer should not use unsafe type casts.  Type
casts circumvent the type system; the programmer can use \create to invoke a
\code{command} function \code{foo} with a low priority, but at the point of
\touch to join with \code{foo} can type cast the thread handle
to be of a higher priority.  The
code would type check, as at the point of \touch, the thread handle is
interpreted at a higher priority.  Similarly, one could allocate a thread
handle \code{thread_pointer} of a high priority, but then type cast it to a
lower priority at the point of \create to create a thread with low
priority.  Because the priority check is done against
the priority type of the \code{thread_pointer}, one can trick the type system
into thinking that a \touch does not cause a priority inversion when the
thread associated with the handle is actually of a lower priority.   
\else
The first rule is that the programmer should not use unsafe type casts,
which circumvent the type system and allow the programmer to modify 
priority types in ways that the type system cannot detect.
\fi

\iffull
The second rule is that the programmer should always ensure that a thread handle
is already associated with a thread (i.e., the handle has been used to
invoke a function via \create) before invoking \touch on it.  This rule is
important because a strongly well-formed DAG, by Definition
\ref{def:strongly-well-formed}, must have a path between the vertex that invokes
the \create and the vertex that invokes the \touch.  This requirement is
trivially satisfied in PriML because a thread handle
cannot come to existence separately from the \create Command.
Thus, if a thread handle exists and is being touched, the
thread must have been created (i.e., a path exists between them).
In C++, we allow for the thread handle allocation to be
separated from the creation of its thread (a C++ function invocation).
Thus, such a requirement is not trivially satisfied, and the programmer has to
ensure this is the case manually. 
\else
The second rule is that the programmer should always ensure that a thread
handle is already associated with a thread (via \create) before
invoking \touch on it.  This rule is important because a strongly well-formed
DAG must have a path between the vertex that invokes the \create and the
vertex that invokes the \touch. 
This is trivially satisfied in \calcname{} because allocation and creation are
inextricably linked, but in \SYS a thread handle allocation can be separate from
its thread creation.  Thus, such a requirement is not trivially 
satisfied, and the programmer has to manually ensure the thread has 
been created before an \touch.
\fi

\subsection{Runtime Scheduler}

An execution of an \SYS program generates a computation DAG as described in
\secref{dag} that dynamically unfolds on the fly, and the underlying runtime
schedules the computation in a way that respects the dependences in the DAG.
\SYS, like Cilk-F, schedules the computation using proactive work
stealing~\cite{SingerXuLe19} but in addition, prioritizes threads.  

Recall from \secref{dag} that one can bound the response times of threads in a
well-formed DAG (\thmref{gen-brent}), provided that the schedule is admissible
and \term{prompt}, i.e., the schedule assigns a ready vertex~$u$ such that no
currently unassigned vertex is higher-priority than $u$.  Any schedule
produced by an actual execution is admissible by construction.
Promptness, however, requires the scheduler to find ready vertices of
high-priority threads in the system to assign before vertices of
lower-priority threads.  Doing so requires maintaining centralized
information, which becomes inefficient in practice due to frequent
synchronizations.  Thus, \SYS implements a scheduler that approximates
promptness. 

Specifically, \SYS uses a two-level scheduling scheme, similar to the
scheme proposed by prior work A-STEAL~\cite{AgrawalHeLe06, AgrawalHeLe07}.
The top-level \term{master} scheduler determines how to best assign processing
cores to different priority levels, and threads within each priority level are
scheduled with a second-level \term{work-stealing}
scheduler~\cite{BlumofeLe94, blumofele99, arorablpl98,abp-multi-01}, known for
its decentralized scheduling protocol with low overhead and provably efficient
execution time bound.  \SYS utilizes a variant of work stealing called
\term{proactive work stealing}~\cite{SingerXuLe19} inherited from Cilk-F, the
baseline scheduler \SYS extends.

The master scheduler evaluates the cores-to-priority-level assignments in a
fixed scheduling interval, called the \term{scheduling quantum}.  The master
assigns cores based on the desired number of cores reported by the
work-stealing schedulers of each priority-level, but in a way that prioritizes
high-priority threads --- it always assigns cores in the order of priority.
Thus, the highest priority always gets its requested cores up to the limit of
what is available on the system, and the next levels get the left-over cores.

The work-stealing scheduler at each priority level maintains its
\term{desire}, the number of cores it wishes to get.  At the end of a quantum,
the scheduler for a given priority level determines its core utilization in
this quantum and re-evaluates its desire based on the measured utilization and
whether its desire was satisfied in this quantum.  Because a work-stealing
scheduler is either doing useful work (making progress on the computation), or
attempting to steal (which leads to load balancing), its \term{utilization} is
computed by the fraction of processing cycles that went into doing work.  If
its utilization exceeded a fixed threshold (e.g., $90\%$) and its desire was
satisfied (i.e., it got its desired number of cores), it increases its desire
by a multiplicative factor of the \term{growth parameter} $\gp$.  For
instance, if $\gp = 2$, double the desire.  On the other hand, if the
utilization exceeded the threshold but its desire was not met, it keeps the
same desire.  Finally, if the utilization did not meet the threshold, it
reduces its desire by a factor of $\gp$ (e.g., if $\gp = 2$, halve the
desire).

Prior work~\cite{AgrawalHeHs06, AgrawalHeLe07, AgrawalHeHs08} has analyzed
similar two-level strategies and shown that one can bound the wasted cycles
(i.e., due to low utilization) and the execution time of computations
scheduled by the second-level schedulers.  The prior analyses do not directly
apply in our case, however, for two reasons.  First, \SYS utilizes proactive
work stealing for the second-level schedulers, which differs from the ones
analyzed in prior work.  Second, in prior work, the computations scheduled by
the second-level schedulers are independent, whereas in our case, each
second-level scheduler corresponds to a priority level, and threads in
different priority levels can have dependences.  Nevertheless, in
\secref{eval}, we show that our scheduler does appropriately prioritize
high-priority threads over low-priority ones and provides better response time
for high-priority threads compared to the baseline system that does not account
for priorities.

\secput{eval}{Evaluation of \SYS}

This section empirically evaluates \SYS.  To evaluate the practicality 
and usability of the type system, we wrote three moderately sized application
benchmarks: a proxy server (\proxy, ~1.5K LoC), a multi-user email client
(\emailapp, ~1.1K LoC), and a job server (\jserver, ~1.1K LoC).\footnote{LoC
exclude comments, system libraries, and runtime code.}
The type system helps the programmer ensure that there is no priority
inversion, which is not always easy to tell, as thread handles are
often used to coordinate interactions among different application components.
We also use the same applications to evaluate the efficiency of the scheduler 
by comparing \SYS against Cilk-F, the baseline system that utilizes proactive
work stealing but does not account for the priority of threads (and thus does
not incur the two-level scheduling overhead).  For fair comparison, Cilk-F is
also equipped with the same \ioF library that performs I/O operations in a
latency-hiding way.  We use this library for the I/O operations in the
benchmarks so that I/O-blocked threads do not hinder parallelism.
The empirical results indicate that \SYS was able to
prioritize high-priority threads and thus provide shorter response
times.

\paragraph{Experimental Setup.} Our experiments ran on a computer with 2 Intel
Xeon Gold 6148 processors with 20 2.40-GHz cores. Each core has a 32-kB L1 data
and 32-KB L1 instruction cache, and a private 1 MB L2 cache.
Hyperthreading was enabled, and each core had 2 hardware threads. Both
processors have a 27.5 MB shared L3 cache, and there are 768 GB of main
memory.  \SYS and all benchmarks were compiled using the Tapir
compiler~\cite{SchardlMoLe17} (based on clang 5.0.0), with -O3 and -flto.
Experiments ran in Linux kernel 4.15.

\subsection{Application Case Studies}\label{sec:eval-app}

We evaluate the type system with three applications representative of
interactive applications in the real world in that they utilize interesting
features commonly used to write such applications, such as low-level file
system and network libraries, and concurrent data structures implemented using
primitives such as fetch-and-add and compare-and-swap.  Due to space limitations, we
discuss the email client in detail but only summarize the other two
applications.

\paragraph{Proxy server.} 
The first application, \proxy, allows multiple clients to connect and request
websites by their URL. The server fetches the website on the client's behalf,
masking the client's IP address.  As an optimization, the server maintains a
cache of website contents using a concurrent hashtable.  If a website is
cached, the server can respond with it immediately.  The application utilizes
components with four priority levels, listed in order from highest to
lowest: a) the loop that accepts client connections and the per-client event
loop that handles the client requests, b) a component that fetches websites in
the event of a cache miss, c) a component that logs statistics, and the lowest is
d) the main function that performs server startup / shutdown.  The priority
specification favors response time for client requests.

\paragraph{Email client.} The second application \emailapp is a multi-user,
shared email client that allows users with individual mailboxes to sort
messages, send messages, and print messages; a background task also runs
periodically to reduce storage overhead by compressing each user's messages
using Huffman codes~\cite{CormenLeRi09}[Chp.~16.3].  The application contains
components with six priority levels, listed in order from highest to
lowest: a) an event loop to handle user requests, b) a send component
that sends email, c) a sort component that sorts emails, d) a compress
component to compress emails and a print component to uncompress and
send the uncompressed emails to the printer, e) a check component that
periodically checks for the need to compress and fires off compression,
f) the main function that performs shutdown.

One interesting feature is that the application requires the print
and compress to interact with one another --- if the user asks to print a
particular email but it is in the  midst of being compressed, the print
component needs to coordinate with the compress component and wait for it to
finish.  Similarly, the compress component may encounter an email
that it is about to compress, but it is in the midst of being printed, and
thus the compress needs to wait for the print to complete.

To enable this, within each user's inbox data structure is an array
indexed using the email ID where any thread attempting to print or
compress the email will store its own handle.  For instance, say there is an
ongoing print thread for an email.  The array slot corresponding to the email
stores (a pointer to) the handle of the print thread.  If a compress
thread for the same email is created, the first thing the compress thread does
is perform a compare-and-swap (CAS) on the same array slot, swapping out the
handle of the print thread and inserting a pointer to its own handle
into the slot.  Assuming that CAS returns a non-null
reference, the compress thread invokes \touch on the reference to ensure that
the printing is done before proceeding with the compress.

\iffull

The simplified pseudo code for the compress thread is shown below.
\begin{lstlisting}[firstnumber=last,language={[11]C++}]
int compress(int userID, int emailID, future<PrintOfComp, int> *thisFut) {
  future<PrintOrCompP, int> *prev = CAS(&userInbox[userID][emailID], thisFut);
  int emailState = DECOMPRESSED;
  if (prev) emailState = @\touch{}@ prev;
  if (emailState == DECOMPRESSED) compressEmail(userID, emailID);
  return COMPRESSED;
}
\end{lstlisting}

\fi
A print thread performs similar operations on the array to coordinate with an
ongoing compress thread for the same email.  Such an interaction is achieved
by utilizing the thread handles and mutable state in an interesting
way. 

\paragraph{Job server.} The \jserver application executes jobs that arrive in
the system using a smallest-work-first policy, i.e., given different types of
jobs, the server knows the amount of work entailed for each type, and it
prioritizes jobs with the least amount of work.  We simulate user inputs using
a Poisson process to generate jobs at random intervals and execute them.  The
priority levels correspond to the types of jobs.  We simulated four
different types of jobs with fixed input size $n$, listed in order of
priority (high to low): a) parallel divide-and-conquer matrix multiplication
(\texttt{matmul}, $n=1024$), b) fibonacci (\texttt{fib}, $n=36$), c) parallel
merge sort (\texttt{sort}, $n=1.1\times 10^7$),
and d) Smith-Waterman for sequence
alignment (\texttt{sw}, $n=1024$).  This application differs from the previous
two in that threads in different priority levels are independent of each
other, and it is constructed so that we can easily modify the workload to
simulate a server that is lightly loaded to heavily loaded.  

\paragraph{Compilation time.} Because the type system heavily utilizes templates,
we measure its effect by comparing the compilation time and resulting binary
sizes between code that uses priorities and code that does not.\footnote{The use of 
template can increase code size as each type instantiation of a given 
template gets its own code clone.}  As shown in \figref{comp}, 
the use of templates for enforcing the typing rules incurs acceptable 
overhead.

\begin{table}
\caption{The compilation times and resulting binary sizes of application code
without and with priority.  The compilation time is in seconds and the maximum
out of the three compile runs.  The binary size is in KB.  The numbers in
parentheses show overhead compared to the no priority version.}
\vspace{-2mm}
\begin{center}
\newcommand*{\oh}[1]{\scriptsize ($#1\times$)}
\footnotesize
\begin{tabular}{l|rr}
  \textit{case study} & \textit{compilation time} & \textit{binary size} \\
  \hline
  \proxy (w/out)    & 1.95 \oh{1.00} & 824.0 \oh{1.00} \\
  \proxy (with)     & 2.48 \oh{1.27} & 974.7 \oh{1.18} \\
  \hline
  \emailapp (w/out) & 4.66 \oh{1.00} & 1241.16 \oh{1.00} \\
  \emailapp (with)  & 5.40 \oh{1.16} & 1454.58 \oh{1.17} \\
  \hline
  \jserver (w/out)  & 2.10 \oh{1.00} & 851.2 \oh{1.00} \\
  \jserver (with)   & 2.67 \oh{1.27} & 987.7 \oh{1.16} \\
\end{tabular}
\end{center}
\label{fig:comp} 
\vspace{-2mm}
\end{table}

\subsection{Empirical Evaluation}

To evaluate the efficiency of our implementation, we compare the applications'
running times on \SYS
and on Cilk-F with the same latency-hiding I/O support.  The main distinctions
between the two systems are that a) \SYS prioritizes high-priority threads
whereas Cilk-F does not; and b) \SYS utilizes the two-level scheduling scheme
discussed in \secref{impl} whereas Cilk-F utilizes proactive work stealing
only.  For \SYS, we ran all applications with the following runtime 
parameters: utilization threshold of $90\%$, quantum length of $500$
microseconds, and growth parameter of $2$.  These parameter values seem to 
work well in general.   

Each of the applications represents different workload characteristics.  The
\proxy server has the most I/O latency and very little computation.  The
\emailapp has a fair amount of I/O latency and slightly more computation
than \proxy.  The \jserver has little I/O latency with
compute-intensive workloads.  We use one socket ($20$ cores) to run the
server and the second socket to simulate clients that generate inputs.
Each application is evaluated with multiple server load configurations that
range from lightly loaded to heavily loaded.  For \proxy and \emailapp, we ran
with $90$, $120$, $150$, and $180$ connections.  As we increase the number
of connections, each core needs to multiplex among more connections.
For \emailapp, the computation load also increases as the number of clients
increases.  For \jserver, we simulated the job generations so that the
workload results server machine utilization of $64\%$, $77\%$, $95\%$, and $>
95\%$ respectively.

For each application, we run the server for at least $15$ seconds, during
which tens of thousands of threads from various priority levels (which
correspond to different application components) are created, and we measure
their duration.  Specifically, we measure the \term{response} time of the
application, which corresponds to the time elapsed between when the user /
client sends the request to when the server handles the request (which is
always handled by the highest priority thread), and the \term{compute}
time for each thread of different priority levels.  

The standard deviation for such time measurements can be high for interactive
applications, due to multiple factors.  First, the timing includes the I/O
latency, which is not always uniform.  Second, the server is time-multiplexing
among multiple client connections, and thus the measured time of a thread
includes not only its computation time but also the time it took the server to
get to the threads.  As such, for many interactive applications, what one cares
about is the latency near the tail.  Thus, for all timing data, we show both
the average time and the $95^{th}$ percentile running time (i.e., $95\%$ of
the measured time is below that value). 

\figref{response} shows the response time ratio for \proxy and \emailapp (the
job server does not have a response time measurement as the jobs are generated
in the same process as the server).  We normalize the response time of Cilk-F
by that of \SYS, and thus higher means \SYS is more responsive.  As can be
seen, \SYS provides much better response time, appropriately prioritizing the
highest priority threads.  \SYS appears to be much more responsive for
\emailapp than for \proxy.  This is because \proxy is very lightly-loaded ---
most of the time cores are idling, as there isn't much computation in the
server execution (mostly I/O operations).  In contrast, \emailapp has more
computations to keep cores occupied, and thus high-priority threads can be
delayed much longer in Cilk-F as the cores are pre-occupied by computations
generated by lower-priority threads. 

\begin{figure}
\footnotesize
\iffull
\begin{minipage}[t]{0.35\columnwidth}
\else
\begin{minipage}[t]{0.48\columnwidth}
\fi
\centering
\includegraphics[width=.98\columnwidth]{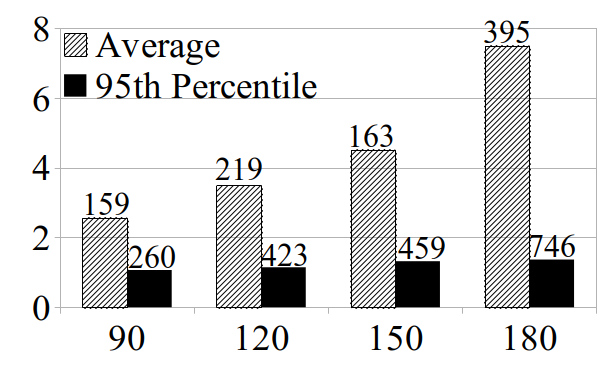}
\proxy responsiveness ratio
\end{minipage}
\iffull
\hspace{1cm}
\begin{minipage}[t]{0.35\columnwidth}
\else
\begin{minipage}[t]{0.48\columnwidth}
\fi
\centering
\includegraphics[width=.98\columnwidth]{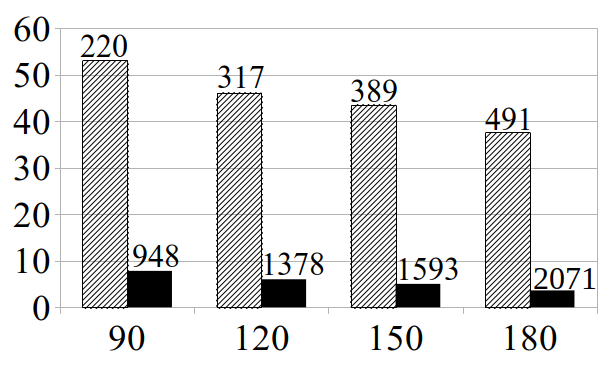}
\emailapp responsiveness ratio
\end{minipage}
\vspace{-2mm}
\caption{The relative responsiveness of \proxy and \emailapp, 
measured as the response time running on Cilk-F normalized by \SYS response
time, so higher means \SYS is more responsive.  Grey bars show the 
responsiveness calculated using average response time and black bars show
that calculated using the response time at $95^{th}$ percentile.  The x-axis
shows the number of client connections used.  The numbers shown on
top of the bars are the latency measured in microseconds for \SYS.}
\label{fig:response}
\vspace{-2mm}
\end{figure}

The high responsiveness is achieved by prioritizing the high-priority threads,
sometimes at the expense of the lower priority threads.  \figref{compute}
shows the computation times of threads from different components.  For a given
application and a given configuration (e.g., \proxy with $90$ clients), the
bars from left to right show the normalized compute time for threads from
higher to lower priority.  As the figures show, \SYS provides better compute
time than Cilk-F for the highest priority threads in the figures (which is the
second highest priority for \proxy and \emailapp).  However, the lower
priority threads can run slower.  This trend can be seen across different
server loads, where the compute time ratio for the higher priority threads
grows larger as the load gets heavier.  This is because the compute time for
the higher priority threads on Cilk-F degrades as the server gets more heavily
loaded, whereas \SYS is able to maintain similar level of quality of service.
For the lower-priority threads, compute time on both systems degrades, with
\SYS degrading more especially when the load gets heavy.
 
\begin{figure*}
\footnotesize
\begin{minipage}[t]{0.32\textwidth}
\centering
\includegraphics[width=.98\columnwidth]{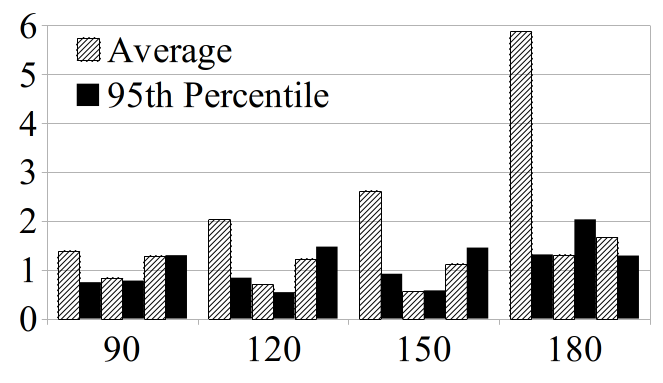}
\proxy compute time ratio
\end{minipage}
\begin{minipage}[t]{0.32\textwidth}
\centering
\includegraphics[width=.98\columnwidth]{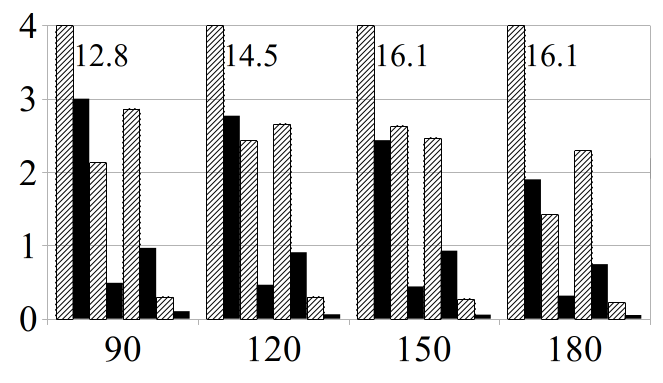}
\emailapp compute time ratio. Data labels indicate the values of bars that exceed the scale of the graph.
\end{minipage}
\begin{minipage}[t]{0.32\textwidth}
\centering
\includegraphics[width=.98\columnwidth]{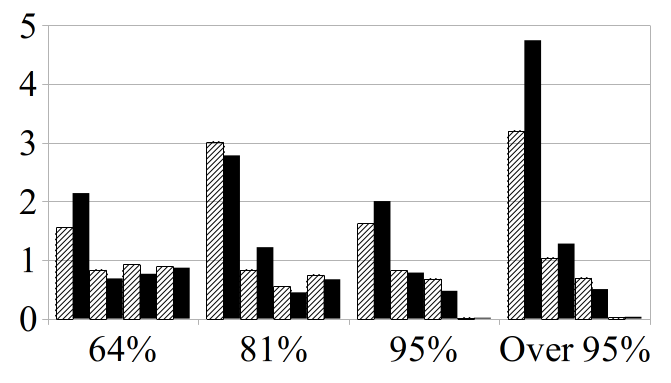}
\jserver compute time ratio
\end{minipage}
\vspace{-2mm}
\caption{The relative compute time for \proxy, \emailapp, and \jserver, measured
as the compute time running on Cilk-F normalized by that running on \SYS, so
higher means \SYS computes faster.  For a given application and a given
configuration (e.g., \proxy with $90$ clients), the bars from left to right
show the compute time ratio for threads from the highest to the lowest 
priority.  The grey bars show the compute time ratio calculated using 
average compute time and the black bars show that calculated using the 
compute time at $95^{th}$ percentile.  The x-axis shows the number of 
client connections used for \proxy and \emailapp, and the server 
utilization for \jserver.}
\label{fig:compute}
\vspace{-3mm}
\end{figure*}

\section{Related Work}\label{sec:related}

\subheading{Cooperative Parallelism}

Many languages and systems have been developed for cooperative parallelism over
the years.
A large number of these, such as Id~\citep{id-78}, Multilisp~\citep{halstead85},
NESL~\citep{nesl-94} and parallel versions of
Haskell~\citep{haskell-dp-2007,keller+2010}
and ML~\citep{manticore-implicit-11,jagannnathannasizi10,rmab-mm-2016},
have focused on functional programming languages, in which the issues of
races and deadlock do not arise or are limited.

Some parallel language extensions have, however, targeted popular imperative
programming languages such as
C~\citep{frigolera98} and Java~\citep{lea00,x10-2005,bad+09,is-habanero-14}.
Many papers have been devoted
over the years to taming races
(e.g.~\citep{RamanZhSa12,feng97efficient,LeeSc15,UtterbackAgFi16,XuLeAg18})
and deadlock (e.g.~\citep{VossCoSa19, CogumbreriroSuMaSaVaGr17, CogumbreiroHuMaYo15,AgarwalBaBoSaShRuYe07}).
None of these languages allow the cooperative threads to be
prioritized; doing so, as we do in this work, requires reasoning about
{\em priority inversions} in addition to the problems mentioned above.

\subheading{Scheduling for Responsiveness}

Responsiveness has long been a concern in the systems community,
as operating systems must
schedule processes and threads, many of which are interactive.
A thorough overview of this topic can be found in a text by~\citet{sgg-os-05}.
In contrast to cooperative parallel systems, OS schedulers deal with
relatively small numbers of
threads.

%
Many threading systems for which responsiveness is a concern
incorporate some notion of priority.
The problem of {\em priority inversion} has been noted in systems as
early as Mesa~\citep{lr-mesa-1980}.
\citet{bms-formal-1993} formalized the idea of priority inversions and
discussed some techniques by which they could be prevented.

Recent work~\citep{mah-responsive-2017,mah-priorities-2018} has introduced
thread priorities into a cooperative parallel system and developed type systems
for ruling out priority inversions that arise through touching a future.
That work, however, targets a purely functional subset of Standard ML, and
so future handles can essentially only be passed through calls and returns,
leading to a well-behaved DAG structure.
In this paper, future handles can additionally be passed through mutable state,
leading to much more complicated reasoning about priority inversions.

\subheading{Cost Semantics}

Cost semantics~\citep{Rosendahl89,Sands90} are used to reason statically
about the resource usage, broadly construed, of programs.
Cost semantics for parallel
programs~\citep{BlellochGr95,BlellochGr96,SpoonhowerBlHaGi08}
typically represent the parallel structure of the program as a DAG.
Offline scheduling results bound the time required to execute such a DAG
on~$P$ processors in terms of the work and span of the DAG.
Famous offline scheduling results have shown that a ``level-by-level''
schedule~\citep{brent74} and any greedy schedule~\citep{eagerzala89} are within
a factor of two of optimal.
Although the full formality of the DAG model is usually reserved for proofs,
the metrics of work and span and the scheduling results above are quite useful
in practice for analyzing programs by thinking in terms of the parallel
structure of the underlying algorithm (e.g., the branching factor and problem
size in a divide-and-conquer algorithm).
Even in cases where the input is unknown, one can reason asymptotically about
work and span, much like asymptotic reasoning in sequential algorithms.
Recent work has extended parallel cost semantics to reason about
I/O latency~\citep{ma-latency-2016} and responsivenesss~\citep{mah-responsive-2017,mah-priorities-2018}.
This paper further extends the state of the art by adding {\em weak edges}
that allow DAGs to reflect information passed between threads through
global state.

Much of the above prior work has drawn a distinction between the
cost semantics, which uses a very abstract evaluation model
to produce a cost DAG from a program, and a {\em provably-efficient}
or {\em bounded implementation}~\citep{BlellochGr96,mah-responsive-2017,mah-priorities-2018},
which counts the steps of an abstract machine.
A proof that the abstract machine actually meets the bounds promised by the
cost semantics can be quite technical and involved.
In this work,
we present one dynamic semantics that
both counts steps and produces a cost graph.
\iffull
This semantics reflects execution ordering, which is important in our
calculus, and simplifies the proof
that the steps of the abstract machine are bounded by the cost semantics
(this is part of the proof of Theorem~\ref{thm:cost-corr}).
\else
This semantics reflects execution ordering, which is important in our
calculus, and simplifies the proof
that the steps of the abstract machine are bounded by the cost semantics.
\fi

\section{Conclusion}

This paper bridges cooperative and competitive threading models by
bringing together a classic threading construct, futures, with
priorities and mutable state.
To facilitate reasoning about efficiency and responsiveness, the paper
extends the traditional graph-based cost models for parallelism to
account for priorities and mutable state.
The cost model applies only to computations that are free of priority
inversions.
To guarantee their absence, we present a formal calculus called
{\calcname} and a type system that disallows priority
inversions.
The cost model and the type system both rely on a novel technical
device, called {\em weak edges}, that represent run-time happens-before
dependencies that arise due to communication via mutable shared state.
We show that these theoretical results are practical by presenting a
reasonably faithful implementation that extends C++ with futures and priorities.
This extension offers an expressive substrate for writing interactive
parallel programs and is able to enforce the absence of priority
inversions if the programmer avoids certain unsafe constructs of C++.
Our empirical evaluation shows that the techniques work well in practice.

\bibliography{main,new,thisbib}


\begin{thebibliography}{00}


\ifx \showCODEN    \undefined \def \showCODEN     #1{\unskip}     \fi
\ifx \showDOI      \undefined \def \showDOI       #1{#1}\fi
\ifx \showISBNx    \undefined \def \showISBNx     #1{\unskip}     \fi
\ifx \showISBNxiii \undefined \def \showISBNxiii  #1{\unskip}     \fi
\ifx \showISSN     \undefined \def \showISSN      #1{\unskip}     \fi
\ifx \showLCCN     \undefined \def \showLCCN      #1{\unskip}     \fi
\ifx \shownote     \undefined \def \shownote      #1{#1}          \fi
\ifx \showarticletitle \undefined \def \showarticletitle #1{#1}   \fi
\ifx \showURL      \undefined \def \showURL       {\relax}        \fi
\providecommand\bibfield[2]{#2}
\providecommand\bibinfo[2]{#2}
\providecommand\natexlab[1]{#1}
\providecommand\showeprint[2][]{arXiv:#2}

\bibitem[\protect\citeauthoryear{Agarwal, Barik, Bonachea, Sarkar,
  Shyamasundar, and Yelick}{Agarwal et~al\mbox{.}}{2007}]%
        {AgarwalBaBoSaShRuYe07}
\bibfield{author}{\bibinfo{person}{Shivali Agarwal},
  \bibinfo{person}{Rajkishore Barik}, \bibinfo{person}{Dan Bonachea},
  \bibinfo{person}{Vivek Sarkar}, \bibinfo{person}{Rudrapatna~K. Shyamasundar},
  {and} \bibinfo{person}{Katherine Yelick}.} \bibinfo{year}{2007}\natexlab{}.
\newblock \showarticletitle{Deadlock-free Scheduling of X10 Computations with
  Bounded Resources}. In \bibinfo{booktitle}{{\em Proceedings of the Nineteenth
  Annual ACM Symposium on Parallel Algorithms and Architectures}} {\em
  (\bibinfo{series}{SPAA '07})}. \bibinfo{publisher}{ACM},
  \bibinfo{address}{New York, NY, USA}, \bibinfo{pages}{229--240}.
\newblock
\showISBNx{978-1-59593-667-7}
\showDOI{%
\url{https://doi.org/10.1145/1248377.1248416}}


\bibitem[\protect\citeauthoryear{Agrawal, He, Hsu, and Leiserson}{Agrawal
  et~al\mbox{.}}{2006}]%
        {AgrawalHeHs06}
\bibfield{author}{\bibinfo{person}{Kunal Agrawal}, \bibinfo{person}{Yuxiong
  He}, \bibinfo{person}{Wen~Jing Hsu}, {and} \bibinfo{person}{Charles~E.
  Leiserson}.} \bibinfo{year}{2006}\natexlab{}.
\newblock \showarticletitle{Adaptive Task Scheduling with Parallelism
  Feedback}. In \bibinfo{booktitle}{{\em Proceedings of the Annual ACM SIGPLAN
  Symposium on Principles and Practice of Parallel Programming (PPoPP)}}.
\newblock


\bibitem[\protect\citeauthoryear{Agrawal, He, Hsu, and Leiserson}{Agrawal
  et~al\mbox{.}}{2008}]%
        {AgrawalHeHs08}
\bibfield{author}{\bibinfo{person}{Kunal Agrawal}, \bibinfo{person}{Yuxiong
  He}, \bibinfo{person}{Wen~Jing Hsu}, {and} \bibinfo{person}{Charles~E.
  Leiserson}.} \bibinfo{year}{2008}\natexlab{}.
\newblock \showarticletitle{Adaptive scheduling with parallelism feedback}.
\newblock \bibinfo{journal}{{\em ACM Transactions on Computing Systems\/}}
  \bibinfo{volume}{16}, \bibinfo{number}{3} (\bibinfo{year}{2008}),
  \bibinfo{pages}{7:1--7:32}.
\newblock


\bibitem[\protect\citeauthoryear{Agrawal, He, and Leiserson}{Agrawal
  et~al\mbox{.}}{2006}]%
        {AgrawalHeLe06}
\bibfield{author}{\bibinfo{person}{Kunal Agrawal}, \bibinfo{person}{Yuxiong
  He}, {and} \bibinfo{person}{Charles~E. Leiserson}.}
  \bibinfo{year}{2006}\natexlab{}.
\newblock \showarticletitle{An Empirical Evaluation of Work Stealing with
  Parallelism Feedback}. In \bibinfo{booktitle}{{\em Proceedings of the
  International Conference on Distributed Computing Systems (ICDCS)}}.
  \bibinfo{address}{Lisboa, Portugal}.
\newblock


\bibitem[\protect\citeauthoryear{Agrawal, He, and Leiserson}{Agrawal
  et~al\mbox{.}}{2007}]%
        {AgrawalHeLe07}
\bibfield{author}{\bibinfo{person}{Kunal Agrawal}, \bibinfo{person}{Yuxiong
  He}, {and} \bibinfo{person}{Charles~E. Leiserson}.}
  \bibinfo{year}{2007}\natexlab{}.
\newblock \showarticletitle{Adaptive work stealing with parallelism feedback}.
  In \bibinfo{booktitle}{{\em Proceedings of the 12th ACM SIGPLAN Symposium on
  Principles and Practice of Parallel Programming}} {\em
  (\bibinfo{series}{PPoPP '07})}. \bibinfo{publisher}{ACM},
  \bibinfo{address}{San Jose, California, USA}, \bibinfo{pages}{112--120}.
\newblock


\bibitem[\protect\citeauthoryear{Arora, Blumofe, and Plaxton}{Arora
  et~al\mbox{.}}{1998}]%
        {arorablpl98}
\bibfield{author}{\bibinfo{person}{Nimar~S. Arora}, \bibinfo{person}{Robert~D.
  Blumofe}, {and} \bibinfo{person}{C.~Greg Plaxton}.}
  \bibinfo{year}{1998}\natexlab{}.
\newblock \showarticletitle{Thread scheduling for multiprogrammed
  multiprocessors}. In \bibinfo{booktitle}{{\em Proceedings of the tenth annual
  ACM symposium on Parallel algorithms and architectures}} {\em
  (\bibinfo{series}{SPAA '98})}. \bibinfo{publisher}{ACM Press},
  \bibinfo{pages}{119--129}.
\newblock


\bibitem[\protect\citeauthoryear{Arora, Blumofe, and Plaxton}{Arora
  et~al\mbox{.}}{2001}]%
        {abp-multi-01}
\bibfield{author}{\bibinfo{person}{Nimar~S. Arora}, \bibinfo{person}{Robert~D.
  Blumofe}, {and} \bibinfo{person}{C.~Greg Plaxton}.}
  \bibinfo{year}{2001}\natexlab{}.
\newblock \showarticletitle{Thread Scheduling for Multiprogrammed
  Multiprocessors}.
\newblock \bibinfo{journal}{{\em Theory of Computing Systems\/}}
  \bibinfo{volume}{34}, \bibinfo{number}{2} (\bibinfo{year}{2001}),
  \bibinfo{pages}{115--144}.
\newblock


\bibitem[\protect\citeauthoryear{Arvind and Gostelow}{Arvind and
  Gostelow}{1978}]%
        {id-78}
\bibfield{author}{\bibinfo{person}{Arvind} {and} \bibinfo{person}{K.~P.
  Gostelow}.} \bibinfo{year}{1978}\natexlab{}.
\newblock \bibinfo{booktitle}{{\em The Id Report: An Asychronous Language and
  Computing Machine}}.
\newblock \bibinfo{type}{{T}echnical {R}eport} TR-114.
  \bibinfo{institution}{Department of Information and Computer Science,
  University of California, Irvine}.
\newblock


\bibitem[\protect\citeauthoryear{Babao{\u{g}}lu, Marzullo, and
  Schneider}{Babao{\u{g}}lu et~al\mbox{.}}{1993}]%
        {bms-formal-1993}
\bibfield{author}{\bibinfo{person}{{\"O}zalp Babao{\u{g}}lu},
  \bibinfo{person}{Keith Marzullo}, {and} \bibinfo{person}{Fred~B. Schneider}.}
  \bibinfo{year}{1993}\natexlab{}.
\newblock \showarticletitle{A Formalization of Priority Inversion}.
\newblock \bibinfo{journal}{{\em Real-Time Systems\/}} \bibinfo{volume}{5},
  \bibinfo{number}{4} (\bibinfo{year}{1993}), \bibinfo{pages}{285--303}.
\newblock


\bibitem[\protect\citeauthoryear{Barik, Budimli{\'{c}}, Cav{\`{e}}, Chatterjee,
  Guo, Peixotto, Raman, Shirako, Ta{\c{s}}{\i}rlar, Yan, Zhao, and
  Sarkar}{Barik et~al\mbox{.}}{2009}]%
        {BarikBuCa09}
\bibfield{author}{\bibinfo{person}{Rajkishore Barik}, \bibinfo{person}{Zoran
  Budimli{\'{c}}}, \bibinfo{person}{Vincent Cav{\`{e}}},
  \bibinfo{person}{Sanjay Chatterjee}, \bibinfo{person}{Yi Guo},
  \bibinfo{person}{David Peixotto}, \bibinfo{person}{Raghavan Raman},
  \bibinfo{person}{Jun Shirako}, \bibinfo{person}{Sa{\u{g}}nak
  Ta{\c{s}}{\i}rlar}, \bibinfo{person}{Yonghong Yan}, \bibinfo{person}{Yisheng
  Zhao}, {and} \bibinfo{person}{Vivek Sarkar}.}
  \bibinfo{year}{2009}\natexlab{}.
\newblock \showarticletitle{The {Habanero} Multicore Software Research
  Project}. In \bibinfo{booktitle}{{\em Proceedings of the 24th ACM SIGPLAN
  Conference Companion on Object Oriented Programming Systems Languages and
  Applications}} {\em (\bibinfo{series}{OOPSLA '09})}.
  \bibinfo{publisher}{ACM}, \bibinfo{address}{Orlando, Florida, USA},
  \bibinfo{pages}{735--736}.
\newblock
\showISBNx{978-1-60558-768-4}


\bibitem[\protect\citeauthoryear{Blake, Dreslinski, Mudge, and Flautner}{Blake
  et~al\mbox{.}}{2010}]%
        {bdmf-2010}
\bibfield{author}{\bibinfo{person}{Geoffrey Blake}, \bibinfo{person}{Ronald~G.
  Dreslinski}, \bibinfo{person}{Trevor Mudge}, {and}
  \bibinfo{person}{Kriszti\'{a}n Flautner}.} \bibinfo{year}{2010}\natexlab{}.
\newblock \showarticletitle{Evolution of Thread-level Parallelism in Desktop
  Applications}. In \bibinfo{booktitle}{{\em Proceedings of the 37th Annual
  International Symposium on Computer Architecture}} {\em
  (\bibinfo{series}{ISCA '10})}. \bibinfo{pages}{302--313}.
\newblock


\bibitem[\protect\citeauthoryear{Blelloch and Greiner}{Blelloch and
  Greiner}{1995}]%
        {BlellochGr95}
\bibfield{author}{\bibinfo{person}{Guy Blelloch} {and} \bibinfo{person}{John
  Greiner}.} \bibinfo{year}{1995}\natexlab{}.
\newblock \showarticletitle{Parallelism in sequential functional languages}. In
  \bibinfo{booktitle}{{\em Proceedings of the 7th International Conference on
  Functional Programming Languages and Computer Architecture}} {\em
  (\bibinfo{series}{FPCA '95})}. \bibinfo{publisher}{ACM},
  \bibinfo{pages}{226--237}.
\newblock


\bibitem[\protect\citeauthoryear{Blelloch and Greiner}{Blelloch and
  Greiner}{1996}]%
        {BlellochGr96}
\bibfield{author}{\bibinfo{person}{Guy~E. Blelloch} {and} \bibinfo{person}{John
  Greiner}.} \bibinfo{year}{1996}\natexlab{}.
\newblock \showarticletitle{A provable time and space efficient implementation
  of {NESL}}. In \bibinfo{booktitle}{{\em Proceedings of the 1st ACM SIGPLAN
  International Conference on Functional Programming}}.
  \bibinfo{publisher}{ACM}, \bibinfo{pages}{213--225}.
\newblock


\bibitem[\protect\citeauthoryear{Blelloch, Hardwick, Sipelstein, Zagha, and
  Chatterjee}{Blelloch et~al\mbox{.}}{1994}]%
        {nesl-94}
\bibfield{author}{\bibinfo{person}{Guy~E. Blelloch},
  \bibinfo{person}{Jonathan~C. Hardwick}, \bibinfo{person}{Jay Sipelstein},
  \bibinfo{person}{Marco Zagha}, {and} \bibinfo{person}{Siddhartha
  Chatterjee}.} \bibinfo{year}{1994}\natexlab{}.
\newblock \showarticletitle{Implementation of a Portable Nested Data-Parallel
  Language}.
\newblock \bibinfo{journal}{{\em J. Parallel Distrib. Comput.\/}}
  \bibinfo{volume}{21}, \bibinfo{number}{1} (\bibinfo{year}{1994}),
  \bibinfo{pages}{4--14}.
\newblock


\bibitem[\protect\citeauthoryear{Blumofe and Leiserson}{Blumofe and
  Leiserson}{1994}]%
        {BlumofeLe94}
\bibfield{author}{\bibinfo{person}{Robert~D. Blumofe} {and}
  \bibinfo{person}{Charles~E. Leiserson}.} \bibinfo{year}{1994}\natexlab{}.
\newblock \showarticletitle{Scheduling Multithreaded Computations by Work
  Stealing}. \bibinfo{pages}{356--368}.
\newblock


\bibitem[\protect\citeauthoryear{Blumofe and Leiserson}{Blumofe and
  Leiserson}{1999}]%
        {blumofele99}
\bibfield{author}{\bibinfo{person}{Robert~D. Blumofe} {and}
  \bibinfo{person}{Charles~E. Leiserson}.} \bibinfo{year}{1999}\natexlab{}.
\newblock \showarticletitle{Scheduling multithreaded computations by work
  stealing}.
\newblock \bibinfo{journal}{{\em J. ACM\/}}  \bibinfo{volume}{46}
  (\bibinfo{date}{Sept.} \bibinfo{year}{1999}), \bibinfo{pages}{720--748}.
\newblock
Issue 5.


\bibitem[\protect\citeauthoryear{Bocchino, Adve, Dig, Adve, Heumann,
  Komuravelli, Overbey, Simmons, Sung, and Vakilian}{Bocchino
  et~al\mbox{.}}{2009}]%
        {bad+09}
\bibfield{author}{\bibinfo{person}{Robert~L. Bocchino, Jr.},
  \bibinfo{person}{Vikram~S. Adve}, \bibinfo{person}{Danny Dig},
  \bibinfo{person}{Sarita~V. Adve}, \bibinfo{person}{Stephen Heumann},
  \bibinfo{person}{Rakesh Komuravelli}, \bibinfo{person}{Jeffrey Overbey},
  \bibinfo{person}{Patrick Simmons}, \bibinfo{person}{Hyojin Sung}, {and}
  \bibinfo{person}{Mohsen Vakilian}.} \bibinfo{year}{2009}\natexlab{}.
\newblock \showarticletitle{A type and effect system for deterministic parallel
  Java}. In \bibinfo{booktitle}{{\em Proceedings of the 24th ACM SIGPLAN
  conference on Object oriented programming systems languages and
  applications}} {\em (\bibinfo{series}{OOPSLA '09})}.
  \bibinfo{pages}{97--116}.
\newblock


\bibitem[\protect\citeauthoryear{Brent}{Brent}{1974}]%
        {brent74}
\bibfield{author}{\bibinfo{person}{Richard~P. Brent}.}
  \bibinfo{year}{1974}\natexlab{}.
\newblock \showarticletitle{The parallel evaluation of general arithmetic
  expressions}.
\newblock \bibinfo{journal}{{\em J. ACM\/}} \bibinfo{volume}{21},
  \bibinfo{number}{2} (\bibinfo{year}{1974}), \bibinfo{pages}{201--206}.
\newblock


\bibitem[\protect\citeauthoryear{Cav{\'e}, Zhao, Shirako, and Sarkar}{Cav{\'e}
  et~al\mbox{.}}{2011}]%
        {CaveZhSh11}
\bibfield{author}{\bibinfo{person}{Vincent Cav{\'e}}, \bibinfo{person}{Jisheng
  Zhao}, \bibinfo{person}{Jun Shirako}, {and} \bibinfo{person}{Vivek Sarkar}.}
  \bibinfo{year}{2011}\natexlab{}.
\newblock \showarticletitle{{H}abanero-{J}ava: the new adventures of old
  {X10}}. In \bibinfo{booktitle}{{\em Proceedings of the 9th International
  Conference on Principles and Practice of Programming in Java}} {\em
  (\bibinfo{series}{PPPJ '11})}. \bibinfo{pages}{51--61}.
\newblock


\bibitem[\protect\citeauthoryear{Chakravarty, Leshchinskiy, Jones, Keller, and
  Marlow}{Chakravarty et~al\mbox{.}}{2007}]%
        {haskell-dp-2007}
\bibfield{author}{\bibinfo{person}{Manuel M.~T. Chakravarty},
  \bibinfo{person}{Roman Leshchinskiy}, \bibinfo{person}{Simon L.~Peyton
  Jones}, \bibinfo{person}{Gabriele Keller}, {and} \bibinfo{person}{Simon
  Marlow}.} \bibinfo{year}{2007}\natexlab{}.
\newblock \showarticletitle{Data parallel Haskell: a status report}. In
  \bibinfo{booktitle}{{\em Proceedings of the {POPL} 2007 Workshop on
  Declarative Aspects of Multicore Programming, {DAMP} 2007, Nice, France,
  January 16, 2007}}. \bibinfo{pages}{10--18}.
\newblock


\bibitem[\protect\citeauthoryear{Charles, Grothoff, Saraswat, Donawa, Kielstra,
  Ebcioglu, von Praun, and Sarkar}{Charles et~al\mbox{.}}{2005}]%
        {x10-2005}
\bibfield{author}{\bibinfo{person}{Philippe Charles},
  \bibinfo{person}{Christian Grothoff}, \bibinfo{person}{Vijay Saraswat},
  \bibinfo{person}{Christopher Donawa}, \bibinfo{person}{Allan Kielstra},
  \bibinfo{person}{Kemal Ebcioglu}, \bibinfo{person}{Christoph von Praun},
  {and} \bibinfo{person}{Vivek Sarkar}.} \bibinfo{year}{2005}\natexlab{}.
\newblock \showarticletitle{X10: an object-oriented approach to non-uniform
  cluster computing}. In \bibinfo{booktitle}{{\em Proceedings of the 20th
  annual ACM SIGPLAN conference on Object-oriented programming, systems,
  languages, and applications}} {\em (\bibinfo{series}{OOPSLA '05})}.
  \bibinfo{publisher}{ACM}, \bibinfo{pages}{519--538}.
\newblock


\bibitem[\protect\citeauthoryear{Cogumbreiro, Hu, Martins, and
  Yoshida}{Cogumbreiro et~al\mbox{.}}{2015}]%
        {CogumbreiroHuMaYo15}
\bibfield{author}{\bibinfo{person}{Tiago Cogumbreiro}, \bibinfo{person}{Raymond
  Hu}, \bibinfo{person}{Francisco Martins}, {and} \bibinfo{person}{Nobuko
  Yoshida}.} \bibinfo{year}{2015}\natexlab{}.
\newblock \showarticletitle{Dynamic Deadlock Verification for General Barrier
  Synchronisation}.
\newblock  (\bibinfo{year}{2015}), \bibinfo{pages}{150--160}.
\newblock
\showISBNx{978-1-4503-3205-7}
\showDOI{%
\url{https://doi.org/10.1145/2688500.2688519}}


\bibitem[\protect\citeauthoryear{Cogumbreiro, Surendran, Martins, Sarkar,
  Vasconcelos, and Grossman}{Cogumbreiro et~al\mbox{.}}{2017}]%
        {CogumbreriroSuMaSaVaGr17}
\bibfield{author}{\bibinfo{person}{Tiago Cogumbreiro}, \bibinfo{person}{Rishi
  Surendran}, \bibinfo{person}{Francisco Martins}, \bibinfo{person}{Vivek
  Sarkar}, \bibinfo{person}{Vasco~T. Vasconcelos}, {and} \bibinfo{person}{Max
  Grossman}.} \bibinfo{year}{2017}\natexlab{}.
\newblock \showarticletitle{Deadlock Avoidance in Parallel Programs with
  Futures: Why Parallel Tasks Should Not Wait for Strangers}.
\newblock \bibinfo{journal}{{\em Proc. ACM Program. Lang.\/}}
  \bibinfo{volume}{1}, \bibinfo{number}{OOPSLA}, Article
  \bibinfo{articleno}{103} (\bibinfo{date}{Oct.} \bibinfo{year}{2017}),
  \bibinfo{numpages}{26}~pages.
\newblock
\showISSN{2475-1421}
\showDOI{%
\url{https://doi.org/10.1145/3143359}}


\bibitem[\protect\citeauthoryear{Cormen, Leiserson, Rivest, and Stein}{Cormen
  et~al\mbox{.}}{2009}]%
        {CormenLeRi09}
\bibfield{author}{\bibinfo{person}{Thomas~H. Cormen},
  \bibinfo{person}{Charles~E. Leiserson}, \bibinfo{person}{Ronald~L. Rivest},
  {and} \bibinfo{person}{Clifford Stein}.} \bibinfo{year}{2009}\natexlab{}.
\newblock \bibinfo{booktitle}{{\em Introduction to Algorithms\/}
  (\bibinfo{edition}{third} ed.)}.
\newblock \bibinfo{publisher}{The MIT Press}.
\newblock


\bibitem[\protect\citeauthoryear{Danaher, Lee, and Leiserson}{Danaher
  et~al\mbox{.}}{2008}]%
        {DanaherLeLe06}
\bibfield{author}{\bibinfo{person}{John~S. Danaher},
  \bibinfo{person}{I-Ting~Angelina Lee}, {and} \bibinfo{person}{Charles~E.
  Leiserson}.} \bibinfo{year}{2008}\natexlab{}.
\newblock \showarticletitle{Programming with exceptions in {JCilk}}.
\newblock \bibinfo{journal}{{\em Science of Computer Programming\/}}
  \bibinfo{volume}{63}, \bibinfo{number}{2} (\bibinfo{date}{Dec.}
  \bibinfo{year}{2008}), \bibinfo{pages}{147--171}.
\newblock


\bibitem[\protect\citeauthoryear{Eager, Zahorjan, and Lazowska}{Eager
  et~al\mbox{.}}{1989}]%
        {eagerzala89}
\bibfield{author}{\bibinfo{person}{Derek~L. Eager}, \bibinfo{person}{John
  Zahorjan}, {and} \bibinfo{person}{Edward~D. Lazowska}.}
  \bibinfo{year}{1989}\natexlab{}.
\newblock \showarticletitle{Speedup versus efficiency in parallel systems}.
\newblock \bibinfo{journal}{{\em IEEE Transactions on Computing\/}}
  \bibinfo{volume}{38}, \bibinfo{number}{3} (\bibinfo{year}{1989}),
  \bibinfo{pages}{408--423}.
\newblock


\bibitem[\protect\citeauthoryear{Feng and Leiserson}{Feng and
  Leiserson}{1997}]%
        {feng97efficient}
\bibfield{author}{\bibinfo{person}{Mingdong Feng} {and}
  \bibinfo{person}{Charles~E. Leiserson}.} \bibinfo{year}{1997}\natexlab{}.
\newblock \showarticletitle{Efficient Detection of Determinacy Races in Cilk
  Programs}. In \bibinfo{booktitle}{{\em {ACM} Symposium on Parallel Algorithms
  and Architectures}}. \bibinfo{pages}{1--11}.
\newblock


\bibitem[\protect\citeauthoryear{Flautner, Uhlig, Reinhardt, and
  Mudge}{Flautner et~al\mbox{.}}{2000}]%
        {furm-2000}
\bibfield{author}{\bibinfo{person}{Kristi\'{a}n Flautner},
  \bibinfo{person}{Rich Uhlig}, \bibinfo{person}{Steve Reinhardt}, {and}
  \bibinfo{person}{Trevor Mudge}.} \bibinfo{year}{2000}\natexlab{}.
\newblock \showarticletitle{Thread-level Parallelism and Interactive
  Performance of Desktop Applications}. In \bibinfo{booktitle}{{\em Proceedings
  of the Ninth International Conference on Architectural Support for
  Programming Languages and Operating Systems}} {\em (\bibinfo{series}{ASPLOS
  IX})}. \bibinfo{pages}{129--138}.
\newblock


\bibitem[\protect\citeauthoryear{Fluet, Rainey, Reppy, and Shaw}{Fluet
  et~al\mbox{.}}{2011}]%
        {manticore-implicit-11}
\bibfield{author}{\bibinfo{person}{Matthew Fluet}, \bibinfo{person}{Mike
  Rainey}, \bibinfo{person}{John Reppy}, {and} \bibinfo{person}{Adam Shaw}.}
  \bibinfo{year}{2011}\natexlab{}.
\newblock \showarticletitle{Implicitly threaded parallelism in {Manticore}}.
\newblock \bibinfo{journal}{{\em Journal of Functional Programming\/}}
  \bibinfo{volume}{20}, \bibinfo{number}{5-6} (\bibinfo{year}{2011}),
  \bibinfo{pages}{1--40}.
\newblock


\bibitem[\protect\citeauthoryear{Fluet, Rainey, Reppy, and Shaw}{Fluet
  et~al\mbox{.}}{2008}]%
        {manticore-implicit08}
\bibfield{author}{\bibinfo{person}{Matthew Fluet}, \bibinfo{person}{Mike
  Rainey}, \bibinfo{person}{John~H. Reppy}, {and} \bibinfo{person}{Adam Shaw}.}
  \bibinfo{year}{2008}\natexlab{}.
\newblock \showarticletitle{Implicitly-threaded parallelism in {Manticore}}. In
  \bibinfo{booktitle}{{\em ICFP}}. \bibinfo{pages}{119--130}.
\newblock


\bibitem[\protect\citeauthoryear{Frigo, Leiserson, and Randall}{Frigo
  et~al\mbox{.}}{1998}]%
        {frigolera98}
\bibfield{author}{\bibinfo{person}{Matteo Frigo}, \bibinfo{person}{Charles~E.
  Leiserson}, {and} \bibinfo{person}{Keith~H. Randall}.}
  \bibinfo{year}{1998}\natexlab{}.
\newblock \showarticletitle{The Implementation of the {Cilk-5} Multithreaded
  Language}. In \bibinfo{booktitle}{{\em Proceedings of the ACM SIGPLAN 1998
  conference on Programming language design and implementation}}.
  \bibinfo{pages}{212--223}.
\newblock


\bibitem[\protect\citeauthoryear{Gao, Gutierrez, Dreslinski, Mudge, Flautner,
  and Blake}{Gao et~al\mbox{.}}{2014}]%
        {ggdmfb-2014}
\bibfield{author}{\bibinfo{person}{C. Gao}, \bibinfo{person}{A. Gutierrez},
  \bibinfo{person}{R.~G. Dreslinski}, \bibinfo{person}{T. Mudge},
  \bibinfo{person}{K. Flautner}, {and} \bibinfo{person}{G. Blake}.}
  \bibinfo{year}{2014}\natexlab{}.
\newblock \showarticletitle{A study of Thread Level Parallelism on mobile
  devices}. In \bibinfo{booktitle}{{\em Performance Analysis of Systems and
  Software (ISPASS), 2014 IEEE International Symposium on}}.
  \bibinfo{pages}{126--127}.
\newblock


\bibitem[\protect\citeauthoryear{Guatto, Westrick, Raghunathan, Acar, and
  Fluet}{Guatto et~al\mbox{.}}{2018}]%
        {gwraf-hieararchical-2018}
\bibfield{author}{\bibinfo{person}{Adrien Guatto}, \bibinfo{person}{Sam
  Westrick}, \bibinfo{person}{Ram Raghunathan}, \bibinfo{person}{Umut~A. Acar},
  {and} \bibinfo{person}{Matthew Fluet}.} \bibinfo{year}{2018}\natexlab{}.
\newblock \showarticletitle{Hierarchical memory management for mutable state}.
  In \bibinfo{booktitle}{{\em Proceedings of the 23rd {ACM} {SIGPLAN} Symposium
  on Principles and Practice of Parallel Programming, PPoPP 2018, Vienna,
  Austria, February 24-28, 2018}}. \bibinfo{pages}{81--93}.
\newblock


\bibitem[\protect\citeauthoryear{Halstead}{Halstead}{1985}]%
        {halstead85}
\bibfield{author}{\bibinfo{person}{Robert~H. Halstead}.}
  \bibinfo{year}{1985}\natexlab{}.
\newblock \showarticletitle{MULTILISP: a language for concurrent symbolic
  computation}.
\newblock \bibinfo{journal}{{\em ACM Transactions on Programming Languages and
  Systems\/}}  \bibinfo{volume}{7} (\bibinfo{year}{1985}),
  \bibinfo{pages}{501--538}.
\newblock


\bibitem[\protect\citeauthoryear{Halstead}{Halstead}{1984}]%
        {halstead84}
\bibfield{author}{\bibinfo{person}{Robert~H. Halstead, Jr.}}
  \bibinfo{year}{1984}\natexlab{}.
\newblock \showarticletitle{{Implementation of Multilisp: Lisp on a
  Multiprocessor}}. In \bibinfo{booktitle}{{\em Proceedings of the 1984 ACM
  Symposium on LISP and functional programming}} {\em (\bibinfo{series}{LFP
  '84})}. \bibinfo{publisher}{ACM}, \bibinfo{pages}{9--17}.
\newblock


\bibitem[\protect\citeauthoryear{Hauser, Jacobi, Theimer, Welch, and
  Weiser}{Hauser et~al\mbox{.}}{1993}]%
        {hjtww-1993}
\bibfield{author}{\bibinfo{person}{Carl Hauser}, \bibinfo{person}{Christian
  Jacobi}, \bibinfo{person}{Marvin Theimer}, \bibinfo{person}{Brent Welch},
  {and} \bibinfo{person}{Mark Weiser}.} \bibinfo{year}{1993}\natexlab{}.
\newblock \showarticletitle{Using Threads in Interactive Systems: A Case
  Study}.
\newblock \bibinfo{journal}{{\em SIGOPS Oper. Syst. Rev.\/}}
  \bibinfo{volume}{27}, \bibinfo{number}{5} (\bibinfo{date}{Dec.}
  \bibinfo{year}{1993}), \bibinfo{pages}{94--105}.
\newblock


\bibitem[\protect\citeauthoryear{Imam and Sarkar}{Imam and Sarkar}{2014}]%
        {is-habanero-14}
\bibfield{author}{\bibinfo{person}{Shams~Mahmood Imam} {and}
  \bibinfo{person}{Vivek Sarkar}.} \bibinfo{year}{2014}\natexlab{}.
\newblock \showarticletitle{Habanero-Java library: a Java 8 framework for
  multicore programming}. In \bibinfo{booktitle}{{\em 2014 International
  Conference on Principles and Practices of Programming on the Java Platform
  Virtual Machines, Languages and Tools, {PPPJ} '14}}. \bibinfo{pages}{75--86}.
\newblock


\bibitem[\protect\citeauthoryear{Intel}{Intel}{2011}]%
        {threadingbuildingblocksmanual}
\bibfield{author}{\bibinfo{person}{Intel}.} \bibinfo{year}{2011}\natexlab{}.
\newblock \bibinfo{title}{Intel Threading Building Blocks}.
\newblock   (\bibinfo{year}{2011}).
\newblock
\newblock
\shownote{\url{https://www.threadingbuildingblocks.org/}.}


\bibitem[\protect\citeauthoryear{Jagannathan, Navabi, Sivaramakrishnan, and
  Ziarek}{Jagannathan et~al\mbox{.}}{2010}]%
        {jagannnathannasizi10}
\bibfield{author}{\bibinfo{person}{Suresh Jagannathan}, \bibinfo{person}{Armand
  Navabi}, \bibinfo{person}{KC Sivaramakrishnan}, {and} \bibinfo{person}{Lukasz
  Ziarek}.} \bibinfo{year}{2010}\natexlab{}.
\newblock \showarticletitle{The Design Rationale for {Multi-MLton}}. In
  \bibinfo{booktitle}{{\em ML '10: Proceedings of the ACM SIGPLAN Workshop on
  ML}}. \bibinfo{publisher}{ACM}.
\newblock


\bibitem[\protect\citeauthoryear{Keller, Chakravarty, Leshchinskiy,
  Peyton~Jones, and Lippmeier}{Keller et~al\mbox{.}}{2010}]%
        {keller+2010}
\bibfield{author}{\bibinfo{person}{Gabriele Keller},
  \bibinfo{person}{Manuel~M.T. Chakravarty}, \bibinfo{person}{Roman
  Leshchinskiy}, \bibinfo{person}{Simon Peyton~Jones}, {and}
  \bibinfo{person}{Ben Lippmeier}.} \bibinfo{year}{2010}\natexlab{}.
\newblock \showarticletitle{Regular, shape-polymorphic, parallel arrays in
  Haskell}. In \bibinfo{booktitle}{{\em Proceedings of the 15th ACM SIGPLAN
  international conference on Functional programming}} {\em
  (\bibinfo{series}{ICFP '10})}. \bibinfo{pages}{261--272}.
\newblock


\bibitem[\protect\citeauthoryear{Kuper, Todd, Tobin-Hochstadt, and
  Newton}{Kuper et~al\mbox{.}}{2014}]%
        {kttn-zoo-2014}
\bibfield{author}{\bibinfo{person}{Lindsey Kuper}, \bibinfo{person}{Aaron
  Todd}, \bibinfo{person}{Sam Tobin-Hochstadt}, {and} \bibinfo{person}{Ryan~R.
  Newton}.} \bibinfo{year}{2014}\natexlab{}.
\newblock \showarticletitle{Taming the Parallel Effect Zoo: Extensible
  Deterministic Parallelism with LVish}. In \bibinfo{booktitle}{{\em
  Proceedings of the 35th ACM SIGPLAN Conference on Programming Language Design
  and Implementation}} {\em (\bibinfo{series}{PLDI '14})}.
  \bibinfo{publisher}{ACM}, \bibinfo{address}{New York, NY, USA},
  \bibinfo{pages}{2--14}.
\newblock
\showISBNx{978-1-4503-2784-8}
\showDOI{%
\url{https://doi.org/10.1145/2594291.2594312}}


\bibitem[\protect\citeauthoryear{Lampson and Redell}{Lampson and
  Redell}{1980}]%
        {lr-mesa-1980}
\bibfield{author}{\bibinfo{person}{Butler~W. Lampson} {and}
  \bibinfo{person}{David~D. Redell}.} \bibinfo{year}{1980}\natexlab{}.
\newblock \showarticletitle{Experience with Processes and Monitors in Mesa}.
\newblock \bibinfo{journal}{{\em Commun. {ACM}\/}} \bibinfo{volume}{23},
  \bibinfo{number}{2} (\bibinfo{year}{1980}), \bibinfo{pages}{105--117}.
\newblock


\bibitem[\protect\citeauthoryear{Lea}{Lea}{2000}]%
        {lea00}
\bibfield{author}{\bibinfo{person}{Doug Lea}.} \bibinfo{year}{2000}\natexlab{}.
\newblock \showarticletitle{A {J}ava fork/join framework}. In
  \bibinfo{booktitle}{{\em Proceedings of the ACM 2000 conference on Java
  Grande}} {\em (\bibinfo{series}{JAVA '00})}. \bibinfo{pages}{36--43}.
\newblock


\bibitem[\protect\citeauthoryear{Lee and Schardl}{Lee and Schardl}{2015}]%
        {LeeSc15}
\bibfield{author}{\bibinfo{person}{I-Ting~Angelina Lee} {and}
  \bibinfo{person}{Tao~B. Schardl}.} \bibinfo{year}{2015}\natexlab{}.
\newblock \showarticletitle{Efficiently Detecting Races in Cilk Programs That
  Use Reducer Hyperobjects}. In \bibinfo{booktitle}{{\em Proceedings of the
  27th ACM Symposium on Parallelism in Algorithms and Architectures}} {\em
  (\bibinfo{series}{SPAA '15})}. \bibinfo{publisher}{ACM},
  \bibinfo{address}{New York, NY, USA}, \bibinfo{pages}{111--122}.
\newblock
\showISBNx{978-1-4503-3588-1}
\showDOI{%
\url{https://doi.org/10.1145/2755573.2755599}}


\bibitem[\protect\citeauthoryear{Leijen, Schulte, and Burckhardt}{Leijen
  et~al\mbox{.}}{2009}]%
        {tpl09}
\bibfield{author}{\bibinfo{person}{Daan Leijen}, \bibinfo{person}{Wolfram
  Schulte}, {and} \bibinfo{person}{Sebastian Burckhardt}.}
  \bibinfo{year}{2009}\natexlab{}.
\newblock \showarticletitle{The design of a task parallel library}. In
  \bibinfo{booktitle}{{\em Proceedings of the 24th ACM SIGPLAN conference on
  Object Oriented Programming Systems Languages and Applications}} {\em
  (\bibinfo{series}{OOPSLA '09})}. \bibinfo{pages}{227--242}.
\newblock


\bibitem[\protect\citeauthoryear{Leiserson}{Leiserson}{2010}]%
        {Leiserson10}
\bibfield{author}{\bibinfo{person}{Charles~E. Leiserson}.}
  \bibinfo{year}{2010}\natexlab{}.
\newblock \showarticletitle{The {Cilk++} Concurrency Platform}.
\newblock \bibinfo{journal}{{\em J. Supercomputing\/}} \bibinfo{volume}{51},
  \bibinfo{number}{3} (\bibinfo{year}{2010}), \bibinfo{pages}{244--257}.
\newblock


\bibitem[\protect\citeauthoryear{Muller and Acar}{Muller and Acar}{2016}]%
        {ma-latency-2016}
\bibfield{author}{\bibinfo{person}{Stefan~K. Muller} {and}
  \bibinfo{person}{Umut~A. Acar}.} \bibinfo{year}{2016}\natexlab{}.
\newblock \showarticletitle{Latency-Hiding Work Stealing: Scheduling
  Interacting Parallel Computations with Work Stealing}. In
  \bibinfo{booktitle}{{\em Proceedings of the 28th {ACM} Symposium on
  Parallelism in Algorithms and Architectures, {SPAA} 2016, Asilomar State
  Beach/Pacific Grove, CA, USA, July 11-13, 2016}}. \bibinfo{pages}{71--82}.
\newblock
\showDOI{%
\url{https://doi.org/10.1145/2935764.2935793}}


\bibitem[\protect\citeauthoryear{Muller, Acar, and Harper}{Muller
  et~al\mbox{.}}{2017}]%
        {mah-responsive-2017}
\bibfield{author}{\bibinfo{person}{Stefan~K. Muller}, \bibinfo{person}{Umut~A.
  Acar}, {and} \bibinfo{person}{Robert Harper}.}
  \bibinfo{year}{2017}\natexlab{}.
\newblock \showarticletitle{Responsive Parallel Computation: Bridging
  Competitive and Cooperative Threading}. In \bibinfo{booktitle}{{\em
  Proceedings of the 38th ACM SIGPLAN Conference on Programming Language Design
  and Implementation}} {\em (\bibinfo{series}{PLDI 2017})}.
  \bibinfo{publisher}{ACM}, \bibinfo{address}{New York, NY, USA},
  \bibinfo{pages}{677--692}.
\newblock
\showISBNx{978-1-4503-4988-8}


\bibitem[\protect\citeauthoryear{Muller, Acar, and Harper}{Muller
  et~al\mbox{.}}{2018}]%
        {mah-priorities-2018}
\bibfield{author}{\bibinfo{person}{Stefan~K. Muller}, \bibinfo{person}{Umut~A.
  Acar}, {and} \bibinfo{person}{Robert Harper}.}
  \bibinfo{year}{2018}\natexlab{}.
\newblock \showarticletitle{Types and Cost Models for Responsive Parallelism
  (Draft)}. In \bibinfo{booktitle}{{\em Proceedings of the 14th ACM SIGPLAN
  International Conference on Functional Programming}} {\em
  (\bibinfo{series}{ICFP '18})}.
\newblock


\bibitem[\protect\citeauthoryear{Muller, Westrick, and Acar}{Muller
  et~al\mbox{.}}{2019}]%
        {mwa-fairness-2019}
\bibfield{author}{\bibinfo{person}{Stefan~K. Muller}, \bibinfo{person}{Sam
  Westrick}, {and} \bibinfo{person}{Umut~A. Acar}.}
  \bibinfo{year}{2019}\natexlab{}.
\newblock \showarticletitle{Fairness in Responsive Parallelism}. In
  \bibinfo{booktitle}{{\em Proceedings of the 24th ACM SIGPLAN International
  Conference on Functional Programming}} {\em (\bibinfo{series}{ICFP 2019})}.
\newblock


\bibitem[\protect\citeauthoryear{OpenMP 5.0}{OpenMP 5.0}{2018}]%
        {OpenMP18}
OpenMP 5.0 \bibinfo{year}{2018}\natexlab{}.
\newblock \bibinfo{booktitle}{{\em {OpenMP} Application Programming Interface,
  Version 5.0}}.
\newblock
\newblock
\shownote{Accessed in July 2018.}


\bibitem[\protect\citeauthoryear{Raghunathan, Muller, Acar, and
  Blelloch}{Raghunathan et~al\mbox{.}}{2016}]%
        {rmab-mm-2016}
\bibfield{author}{\bibinfo{person}{Ram Raghunathan}, \bibinfo{person}{Stefan~K.
  Muller}, \bibinfo{person}{Umut~A. Acar}, {and} \bibinfo{person}{Guy
  Blelloch}.} \bibinfo{year}{2016}\natexlab{}.
\newblock \showarticletitle{Hierarchical Memory Management for Parallel
  Programs}. In \bibinfo{booktitle}{{\em Proceedings of the 21st ACM SIGPLAN
  International Conference on Functional Programming}} {\em
  (\bibinfo{series}{ICFP 2016})}. \bibinfo{publisher}{ACM},
  \bibinfo{address}{New York, NY, USA}, \bibinfo{pages}{392--406}.
\newblock


\bibitem[\protect\citeauthoryear{Raman, Zhao, Sarkar, Vechev, and Yahav}{Raman
  et~al\mbox{.}}{2012}]%
        {RamanZhSa12}
\bibfield{author}{\bibinfo{person}{Raghavan Raman}, \bibinfo{person}{Jisheng
  Zhao}, \bibinfo{person}{Vivek Sarkar}, \bibinfo{person}{Martin Vechev}, {and}
  \bibinfo{person}{Eran Yahav}.} \bibinfo{year}{2012}\natexlab{}.
\newblock \showarticletitle{Scalable and Precise Dynamic Datarace Detection for
  Structured Parallelism}. In \bibinfo{booktitle}{{\em Proceedings of the 33rd
  ACM SIGPLAN Conference on Programming Language Design and Implementation}}
  {\em (\bibinfo{series}{PLDI '12})}. \bibinfo{publisher}{ACM},
  \bibinfo{address}{Beijing, China}, \bibinfo{pages}{531--542}.
\newblock


\bibitem[\protect\citeauthoryear{Rosendahl}{Rosendahl}{1989}]%
        {Rosendahl89}
\bibfield{author}{\bibinfo{person}{Mads Rosendahl}.}
  \bibinfo{year}{1989}\natexlab{}.
\newblock \showarticletitle{Automatic complexity analysis}. In
  \bibinfo{booktitle}{{\em FPCA '89: Functional Programming Languages and
  Computer Architecture}}. \bibinfo{publisher}{ACM}, \bibinfo{pages}{144--156}.
\newblock


\bibitem[\protect\citeauthoryear{Sands}{Sands}{1990}]%
        {Sands90}
\bibfield{author}{\bibinfo{person}{David Sands}.}
  \bibinfo{year}{1990}\natexlab{}.
\newblock \showarticletitle{Complexity Analysis for a Lazy Higher-Order
  Language}. In \bibinfo{booktitle}{{\em ESOP '90: Proceedings of the 3rd
  European Symposium on Programming}}. \bibinfo{publisher}{Springer-Verlag},
  \bibinfo{address}{London, UK}, \bibinfo{pages}{361--376}.
\newblock


\bibitem[\protect\citeauthoryear{Schardl, Moses, and Leiserson}{Schardl
  et~al\mbox{.}}{2017}]%
        {SchardlMoLe17}
\bibfield{author}{\bibinfo{person}{Tao~B. Schardl}, \bibinfo{person}{William~S.
  Moses}, {and} \bibinfo{person}{Charles~E. Leiserson}.}
  \bibinfo{year}{2017}\natexlab{}.
\newblock \showarticletitle{Tapir: Embedding Fork-Join Parallelism into
  {LLVM}'s Intermediate Representation}. In \bibinfo{booktitle}{{\em
  Proceedings of the 22Nd ACM SIGPLAN Symposium on Principles and Practice of
  Parallel Programming}} {\em (\bibinfo{series}{PPoPP '17})}.
  \bibinfo{publisher}{ACM}, \bibinfo{address}{Austin, Texas, USA},
  \bibinfo{pages}{249--265}.
\newblock
\showISBNx{978-1-4503-4493-7}
\showURL{%
\url{http://doi.acm.org/10.1145/3018743.3018758}}


\bibitem[\protect\citeauthoryear{Silberschatz, Galvin, and Gagne}{Silberschatz
  et~al\mbox{.}}{2005}]%
        {sgg-os-05}
\bibfield{author}{\bibinfo{person}{Abraham Silberschatz},
  \bibinfo{person}{Peter~Baer Galvin}, {and} \bibinfo{person}{Greg Gagne}.}
  \bibinfo{year}{2005}\natexlab{}.
\newblock \bibinfo{booktitle}{{\em Operating system concepts {(7.} ed.)}}.
\newblock \bibinfo{publisher}{Wiley}.
\newblock


\bibitem[\protect\citeauthoryear{Singer, Xu, and Lee}{Singer
  et~al\mbox{.}}{2019a}]%
        {SingerXuLe19}
\bibfield{author}{\bibinfo{person}{Kyle Singer}, \bibinfo{person}{Yifan Xu},
  {and} \bibinfo{person}{I-Ting~Angelina Lee}.}
  \bibinfo{year}{2019}\natexlab{a}.
\newblock \showarticletitle{Proactive Work Stealing for Futures}. In
  \bibinfo{booktitle}{{\em Proceedings of the 24th Symposium on Principles and
  Practice of Parallel Programming}} {\em (\bibinfo{series}{PPoPP '19})}.
  \bibinfo{publisher}{ACM}, \bibinfo{address}{New York, NY, USA},
  \bibinfo{pages}{257--271}.
\newblock
\showISBNx{978-1-4503-6225-2}
\showDOI{%
\url{https://doi.org/10.1145/3293883.3295735}}


\bibitem[\protect\citeauthoryear{Singer, Xu, and Lee}{Singer
  et~al\mbox{.}}{2019b}]%
        {CilkF-impl}
\bibfield{author}{\bibinfo{person}{Kyle Singer}, \bibinfo{person}{Yifan Xu},
  {and} \bibinfo{person}{I-Ting~Angelina Lee}.}
  \bibinfo{year}{2019}\natexlab{b}.
\newblock \bibinfo{title}{ProWS - Proactive Work Stealing for Futures}.
\newblock \bibinfo{howpublished}{Available at
  \url{https://github.com/wustl-pctg/ProWS}}.   (\bibinfo{year}{2019}).
\newblock
\newblock
\shownote{Accessed on July 2019.}


\bibitem[\protect\citeauthoryear{Sivaramakrishnan, Ziarek, and
  Jagannathan}{Sivaramakrishnan et~al\mbox{.}}{2014}]%
        {szj-multimlton14}
\bibfield{author}{\bibinfo{person}{K.~C. Sivaramakrishnan},
  \bibinfo{person}{Lukasz Ziarek}, {and} \bibinfo{person}{Suresh Jagannathan}.}
  \bibinfo{year}{2014}\natexlab{}.
\newblock \showarticletitle{MultiMLton: A multicore-aware runtime for standard
  ML}.
\newblock \bibinfo{journal}{{\em Journal of Functional Programming\/}}
  \bibinfo{volume}{FirstView} (\bibinfo{date}{6} \bibinfo{year}{2014}),
  \bibinfo{pages}{1--62}.
\newblock


\bibitem[\protect\citeauthoryear{Spoonhower, Blelloch, Harper, and
  Gibbons}{Spoonhower et~al\mbox{.}}{2008}]%
        {SpoonhowerBlHaGi08}
\bibfield{author}{\bibinfo{person}{Daniel Spoonhower}, \bibinfo{person}{Guy~E.
  Blelloch}, \bibinfo{person}{Robert Harper}, {and} \bibinfo{person}{Phillip~B.
  Gibbons}.} \bibinfo{year}{2008}\natexlab{}.
\newblock \showarticletitle{Space Profiling for Parallel Functional Programs}.
  In \bibinfo{booktitle}{{\em International Conference on Functional
  Programming}}.
\newblock


\bibitem[\protect\citeauthoryear{Ullman}{Ullman}{1975}]%
        {ullman75}
\bibfield{author}{\bibinfo{person}{J.D. Ullman}.}
  \bibinfo{year}{1975}\natexlab{}.
\newblock \showarticletitle{{NP}-complete scheduling problems}.
\newblock \bibinfo{journal}{{\it J. Comput. System Sci.}} \bibinfo{volume}{10},
  \bibinfo{number}{3} (\bibinfo{year}{1975}), \bibinfo{pages}{384 -- 393}.
\newblock


\bibitem[\protect\citeauthoryear{Utterback, Agrawal, Fineman, and
  Lee}{Utterback et~al\mbox{.}}{2016}]%
        {UtterbackAgFi16}
\bibfield{author}{\bibinfo{person}{Robert Utterback}, \bibinfo{person}{Kunal
  Agrawal}, \bibinfo{person}{Jeremy Fineman}, {and}
  \bibinfo{person}{I-Ting~Angelina Lee}.} \bibinfo{year}{2016}\natexlab{}.
\newblock \showarticletitle{Provably Good and Practically Efficient Parallel
  Race Detection for Fork-Join Programs}. In \bibinfo{booktitle}{{\em
  Proceedings of the 28th ACM Symposium on Parallelism in Algorithms and
  Architectures}} {\em (\bibinfo{series}{SPAA '16})}. \bibinfo{publisher}{ACM},
  \bibinfo{address}{Asilomar State Beach, CA, USA}, \bibinfo{pages}{83--94}.
\newblock


\bibitem[\protect\citeauthoryear{Voss, Cogumbreiro, and Sarkar}{Voss
  et~al\mbox{.}}{2019}]%
        {VossCoSa19}
\bibfield{author}{\bibinfo{person}{Caleb Voss}, \bibinfo{person}{Tiago
  Cogumbreiro}, {and} \bibinfo{person}{Vivek Sarkar}.}
  \bibinfo{year}{2019}\natexlab{}.
\newblock \showarticletitle{Transitive Joins: A Sound and Efficient Online
  Deadlock-avoidance Policy}. In \bibinfo{booktitle}{{\em Proceedings of the
  24th Symposium on Principles and Practice of Parallel Programming}} {\em
  (\bibinfo{series}{PPoPP '19})}. \bibinfo{publisher}{ACM},
  \bibinfo{address}{New York, NY, USA}, \bibinfo{pages}{378--390}.
\newblock
\showISBNx{978-1-4503-6225-2}
\showDOI{%
\url{https://doi.org/10.1145/3293883.3295724}}


\bibitem[\protect\citeauthoryear{Xu, Lee, and Agrawal}{Xu
  et~al\mbox{.}}{2018}]%
        {XuLeAg18}
\bibfield{author}{\bibinfo{person}{Yifan Xu}, \bibinfo{person}{I-Ting~Angelina
  Lee}, {and} \bibinfo{person}{Kunal Agrawal}.}
  \bibinfo{year}{2018}\natexlab{}.
\newblock \showarticletitle{Efficient Parallel Determinacy Race Detection for
  Two-dimensional Dags}. In \bibinfo{booktitle}{{\em Proceedings of the 23rd
  ACM SIGPLAN Symposium on Principles and Practice of Parallel Programming}}
  {\em (\bibinfo{series}{PPoPP '18})}. \bibinfo{publisher}{ACM},
  \bibinfo{address}{New York, NY, USA}, \bibinfo{pages}{368--380}.
\newblock
\showISBNx{978-1-4503-4982-6}
\showDOI{%
\url{https://doi.org/10.1145/3178487.3178515}}


\end{thebibliography}
\end{document}